\renewcommand{\paragraph}{%
  \@startsection{paragraph}{4}%
  {\z@}{.5ex \@plus 1ex \@minus .2ex}{-1em}%
  {\normalfont\normalsize\bfseries}%
}
\def\Rm{R_M}
\def\Ry{R_Y}
\def\independent{\perp\!\!\!\perp}
\def\E{\mathrm{E}}
\def\P{\mathrm{P}}
\def\odds{\mathrm{odds}}
\newtheorem{lemma}{Lemma}
\title{Self-separated and self-connected models for mediator and outcome missingness in mediation analysis}
\author{Trang Quynh Nguyen, Razieh Nabi, Fan Yang, Grace V. Ringlein, Elizabeth A. Stuart}
\begin{document}

\maketitle

\setlength\abovedisplayshortskip{0pt}
\setlength\belowdisplayshortskip{0pt}
\setlength\abovedisplayskip{3pt}
\setlength\belowdisplayskip{5pt}

\begin{abstract}
    \noindent
    Missing data is a common challenge in studying treatment effects. In the context of mediation analysis, this paper addresses missingness in the mediator and outcome, focusing on identification. We first consider self-separated missingness models where identification is achieved by conditional independence assumptions. This model class is somewhat limited as it is constrained by the need to remove a certain number of connections from the model. We then turn to self-connected missingness models where identification relies on information from shadow variables. This model class turns out to contain substantial variation, allowing models with built-in shadow variables (mediator, outcome or covariates) and models with auxiliary shadow variables at different positions in the causal structure. To improve the practical value of the missingness mechanisms, we allow where possible for dependencies due to unobserved causes of the missingness, a feature often neglected. In this exploration, we review existing models, connect to new models, and develop theory where needed. This results in templates for identification in the mediation setting, generally useful identification techniques, and perhaps most importantly a synthesis and substantial extension of shadow variable theory. Two examples relate the models to practical considerations.

    ~

    \noindent\textbf{Keywords:} causal mediation analysis, missing not at random, sequentially ignorable missingness, self-separated missingness, self-connected missingness, missing mediator, missing outcome, shadow variable
\end{abstract}

\doparttoc 
\faketableofcontents 
\part{} 

\vspace{-3em}

\section{Introduction}
\label{sec:intro}

Mediation analysis is used in many fields to study the roles certain variables play on the pathway of causal effects. The last two decades have seen major developments in mediation analysis methodology, including the formalization of different estimands, clarification of identifying assumptions, and many estimation methods. Much of this literature assumes data are fully observed. Yet missing data is ubiquitous, and the combination of mediator and outcome missingness poses a greater challenge to mediation analysis than outcome missingness does in non-mediation settings \citep{lee2020NuisanceMediatorsMissing}. How to handle this combined missingness is an important research topic. This paper draws insights from recent advances in missing data theory to extend the range of identified models for mediator and outcome missingness and explore features of these models. In doing so, it helps solidify the knowledge base that serves as foundation for method development and ultimately model and method choices for practice.

The handling of missing data in mediation analysis (and other analysis types) in practice often involves methods that assume the missingness is at random (MAR) (such as nonresponse weighting, multiple imputation, full-information maximum likelihood) or is completely at random (MCAR) (such as complete-case analysis). A general concern, however, is that these assumptions may not hold. It has been repeatedly shown that analyses that assume MAR lead to biased results if the MAR assumption is not satisfied, for example, because the missingness of a variable is caused by the variable itself \citep{zhang2013MethodsMediationAnalysis,dashti2024HandlingMultivariableMissing}. This calls for models that relax MAR, which is the focus of the current work.

\subsection{Motivating examples}
\label{ssec:examples}

\paragraph{School intervention example.}
The Prevention of Alcohol Use in Students (PAS) trial in the Netherlands randomized middle schools to four conditions: student intervention (promoting healthy attitudes and strengthening refusal skills), parent intervention (encouraging parents' rule setting), combined student-and-parent intervention, and control. The combined intervention was effective in reducing the onset and frequency of drinking \citep{koning2009PreventingHeavyAlcohol,koning2011WhyTargetEarly}. Mediation analysis found that student attitudes, perceived self-control, and parental rules mediated these effects \citep{koning2011LongTermEffectsParent,nguyen2016CausalMediationAnalysis}. \cite{nguyen2016CausalMediationAnalysis} reported that there was missingness in the outcome (17.0\%), student-reported mediators (3.5\%) and parent-reported mediators (18.4\%). The mediation analyses opted to use student-reported rather than parent-reported parental rules to reduce missingness, and assumed MAR. It is possible, however, that outcome missingness was related to the outcome itself. For example, some students who were drinking may not have answered the question about drinking behavior; or some were doing poorly at school and therefore may have been absent on the day of data collection. In addition, missingness in parent-reported mediators could be due to unobserved family characteristics or socio-economic factors that are associated with those mediators.

\paragraph{Job training example.}
This example was discussed in \cite{zuo2024MediationAnalysisMediator}. The National Job Corps Study (NJCS) is a randomized trial in the United States to evaluate the Job Corps program, which targets youth disconnected from school. The program was found to improve employment and earnings \citep{schochet2008DoesJobCorps,schochet2021LongRunLaborMarket}. \cite{qin2019MultisiteCausalMediation} conducted a mediation analysis assuming MAR, finding that education attainment (by 30 months post-intervention) mediated the intervention effect on earnings (at 48 months). \cite{zuo2024MediationAnalysisMediator} noted substantial missingness, over 20\% in mediator and outcome each, and over 30\% combined. They were concerned that people who did not obtain an education credential may have been less likely to provide the information, or people with no earnings may have been less willing to report earnings. They reanalyzed data using models that allow either the mediator or the outcome, or both, to be related to its missingness. To accommodate such links, these models disallow certain other connections among the variables, which we will discuss.

As our focus is on the range of models that provide identification rather than inference based on a particular model, we will not reanalyze these data. We instead use these examples loosely with some artificial variations to relate the models to practical considerations.

\subsection{Related work}

The missing not at random (MNAR) literature has seen several important recent strands. In one strand \citep[e.g.,][]{zhou2010BlockConditionalMissingRandom,mohan2013GraphicalModelsInference,bhattacharya2020IdentificationMissingData,nabi2020FullLawIdentification}, it is shown that identification is possible for cases with certain conditional independence assumptions, even though MAR does not hold. To our knowledge, much of this theory has not been used to explore model options for mediation analysis explicitly. In a second strand \citep{ma2003IdentificationGraphicalModels,dhaultfoeuille2010NewInstrumentalMethod,kott2014CalibrationWeightingWhen,wang2014InstrumentalVariableApproach,zhao2015SemiparametricPseudoLikelihoodsGeneralized,miao2024IdentificationSemiparametricEfficiency,miao2016VarietiesDoublyRobust,yang2019CausalInferenceConfounders,li2023SelfcensoringModelMultivariate,zuo2024MediationAnalysisMediator,li2017IdentifiabilityEstimationCausal}, the non-identifiability due to dependence between a variable and its missingness is resolved by tapping into information in a  \textit{shadow variable} -- a special variable that is associated with the variable of interest but conditionally independent of the missingness. For mediation analysis, \citeauthor{zuo2024MediationAnalysisMediator} (\citeyear{zuo2024MediationAnalysisMediator}, cited for the second example above) examine several shadow variable models with mediator and outcome missingness; \cite{li2017IdentifiabilityEstimationCausal} use a shadow variable to handle missingness in the outcome alone. A third strand relies on a different type of special variable that (conditional on fully observed variables) is independent of the variable of interest but is associated with its missingness \citep{sun2018semiparametric,huber2020DirectIndirectEffectsa}. A fourth strand, which leverages process data (e.g., the number of calls required), departs from conditional independences and instead relies on a stable behavior tendency assumption \citep{miao2025StablenessResistanceModel}. Our paper builds on the first two of these literatures to expand the range of identified models for mediator and outcome missingness.

Another literature this work draws from concerns the testability of missingness models, i.e., whether a model implies certain conditions about the observed data distribution, which in principle can be verified against data as a way to falsify the model \citep{potthoff2006CanOneAssess,mohan2013GraphicalModelsInference,nabi2023TestabilityGoodnessFit}. Our discussion of models will cover model testability.

\subsection{Our contributions}

\subsubsection{On missing data in mediation analysis}

Our main contribution to this topic is to expand the range of identified models. We consider two broad classes of missingness models: models where a variable and its missingness do not cause each other and do not share completely unobserved causes (aka \textit{self-separated} missingness models); and models where a variable influences, or shares completely unobserved causes with, its missingness (aka \textit{self-connected} missing models). (This distinction is related, but not equivalent, to terminology such as ``no self-censoring'' \citep{malinsky2022SemiparametricInferenceNonmonotone} or ``self-masking'' \citep{mohan2019HandlingSelfmaskingOther}, which refer to whether a variable influences its own missingness, but not to whether a variable and its missingness are dependent due to unobserved causes.) In the class of self-separated missingness models, we show identified models other than MAR; and that MAR is arguably unnecessarily restrictive as it is dominated by sequentially ignorable missingness (SIM). In the class of self-connected missingness models, we find a wide range of shadow variable models, of which \citeauthor{zuo2024MediationAnalysisMediator}'s models constitute a subset.

A secondary contribution is our attention to settings with different temporal orders. The theory literature cited above includes models where one variable is influenced by another variable yet influences the missingness in that variable. We observe that this may cause confusion about how a variable causes the missingness of a variable in the past. To bring clarity on this matter, we make temporal ordering explicit. With mediator and outcome missingness, the missingness model can take on one of several temporal orders, depending on whether the mediator and outcome each are measured in-time or measured retrospectively. Not surprisingly, the options of identified models vary across those temporal order settings.

Regarding model testability, we find that a large number of the models have testable implications. These results are important as they may help screen out wrong models.

This work focuses on models and identification, and leaves estimation and inference out of its scope. To facilitate the next steps, we express identification results in specific forms that can be used in methods development.

The paper is motivated by causal mediation analysis, but is also relevant to traditional mediation analysis \citep{baron1986ModeratorMediatorVariableDistinction}, which can be considered a special case.

\subsubsection{Identification theory in general}

With regards to theory beyond the mediation context, our main contributions are in synthesizing and extending shadow variable theory. First, we present a simple version of shadow variable theory that ties together existing results, and accommodates missingness in the shadow variable under some conditions. It uses a connection between two odds of missingness -- one conditional on the variable itself and one conditional on the shadow variable -- which works well with certain odds-tilting techniques that we introduce. Second, we extend shadow variable theory to address the case with self-connected missingness in two \textit{main} variables and an \textit{external} shadow variable, finding three models that recover identification. These models apply generally to situations where the joint distribution of two variables is of interest, and can be adapted for more than two variables.

\subsection{Overview of the paper}

Section~\ref{sec:setting} introduces the setting and the key task of identifying the conditional mediator and outcome distributions in the presence of mediator and outcome missingness, then presents several initial considerations about the missingness, including temporal orders and large missingness models used as starting points. Section~\ref{sec:separated} examines self-separated missingness models where identification is achieved based on conditional independence assumptions. Section~\ref{sec:shadow} explores several subclasses of self-connected missingness models where identification relies on shadow variables. Both sections focus on identification and model falsifiability, and develop theory where needed. The two examples are discussed through various model types. Section~\ref{sec:conclusion} concludes with some remarks.

\section{Setting}
\label{sec:setting}

\subsection{Setting and estimands}

Consider the setting with a binary exposure $A$, a mediator $M$, a univariate outcome $Y$, and baseline covariates $X$. Using the potential outcomes framework \citep{rubin1974EstimatingCausalEffects,splawa-neyman1990ApplicationProbabilityTheory} and invoking consistency/SUTVA \citep{rubin1974EstimatingCausalEffects,cole2009ConsistencyStatementCausal}, let $M_a,Y_a$ denote the potential mediator and potential outcome variables that would materialize had exposure been set to $a\in\{0,1\}$; let $Y_{am}$ denote the potential outcome had exposure and mediator been set to $a$ and $m$, respectively, where $m$ is in the support of $M$ conditional on $X,A=a'$, and $a,a'\in\{0,1\}$. 

Throughout, we assume \textit{\underline{s}equential \underline{i}gnorability of exposure \underline{a}ssignment and mediator \underline{a}ssignment} (SIAA), which consists of two components:
\begin{align}
    A\independent(M_{a'},Y_{am})\mid X,\tag{\textsc{siaa}-1}
    \\
    M\independent Y_{am}\mid X,A=a.\tag{\textsc{siaa}-2}
\end{align}
This assumption focuses on the assigned exposure and assigned mediator, thus is weaker than \citeauthor{imai2010GeneralApproachCausal}'s (\citeyear{imai2010GeneralApproachCausal,imai2010IdentificationInferenceSensitivity}) sequential ignorability. We will work with a directed acyclic graph (DAG) model that respects SIAA (see Fig.~\ref{fig:siaa}) where $X$ captures all common causes of $A,M,Y$. This means an $X$ variable can share unobserved causes with at most one of these three variables, which is explicit in the unconditional graph representation of the model, and implicit in the graphs that condition on $X$. In this simple setting, there are no post-exposure mediator-outcome confounders.

\begin{figure*}[t!]
    \caption{A main model that satisfies SIAA, shown in unconditional graph (left) and in graphs conditional on $X$ (middle) and on $X,A$ (right). The model allows at most one of three types of unobserved common causes $U_{XA}$, $U_{XM}$, $U_{XY}$ (between $X$ and respectively $A$, $M$, $Y$).}
    \label{fig:siaa}
    \centering
    \includegraphics[width=.95\textwidth,page=1]{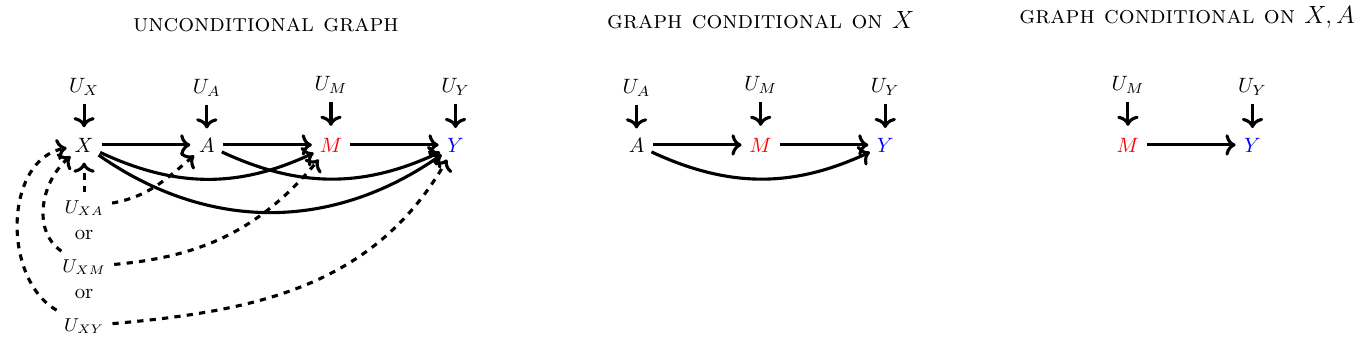}
\end{figure*}

Depending on the research question, causal mediation analyses may target different estimands, including natural (in)direct effects \citep{robins1992IdentifiabilityExchangeabilityDirect,pearl2001DirectIndirectEffects} and a large class of interventional effects \cite[see e.g.,][]{didelez2006DirectIndirectEffects,vanderweele2014EffectDecompositionPresence,nguyen2021ClarifyingCausalMediation,hejazi2023NonparametricCausalMediation}. Rather than targeting specific effect definitions, we will focus on the conditional mediator and outcome densities $\P(M\mid X,A)$ and $\P(Y\mid X,A,M)$. These appear in identification results for various effects -- under relevant assumptions. For example, in the absence of missing data, under SIAA (and relevant positivity assumptions), the potential outcome means contrasted by certain interventional (in)direct effects are identified by statistical parameters of the form
\begin{align*}
    \E(\E\{\E[Y\mid X,A=a,M]\mid X,A=a'\})=
    \iiint y\,{\color{blue}\P(y\mid x,m,a)}\,{\color{red}\P(m\mid x,a')}\,\P(x) \,dy \,dm \,dx.
\end{align*}
(If the cross-world independence assumption $M_{a'}\independent Y_{am}\mid X,A$ is additionally invoked, these also identify the potential outcome means contrasted by the natural (in)direct effects.) Also under SIAA, the potential outcome means contrasted by a controlled direct effect are identified by statistical parameters of the form
\begin{align*}
    \E\{\E[Y\mid X,A=a,M=m]\}=
    \iint y\,{\color{blue}\P(y\mid x,m,a)}\,\P(x) \,dy \,dx.
\end{align*}

\subsection{Mediator and outcome missingness: initial considerations}
\label{sec:preliminaries}

Our central task in this paper is to discuss mediator and outcome missingness models where $\P(M\mid X,A)$ and $\P(Y\mid X,A,M)$ are identified. To simplify language, we will call a model ``identified'' if these two densities are identified, otherwise we call the model ``unidentified.''

Let $R_M$ and $R_Y$ be binary variables indicating whether $M$ and $Y$, respectively, are observed. These variables are \textit{response indicators}, but we will also loosely refer to them by the common term \textit{missing indicators}. For notational convenience, let $R_{MY}:=R_MR_Y$.

\subsubsection{Temporal orders}
\label{subsec:temp-order}

In terms of temporal ordering, we require that $R_M$ comes after $M$ and $R_Y$ comes after $Y$. This allows four settings with different temporal orders, labeled \textit{in-time} ($M,R_M,Y,R_Y$), \textit{delayed} ($M,Y,R_M,R_Y$), \textit{unordered} ($M,Y,\begin{matrix}R_M\\[-.4em]R_Y\end{matrix}$), and \textit{reverse} ($M,Y,R_Y,R_M$). The \textit{in-time} setting is most common, with $M$ and $Y$ measured when or soon after they are realized (soon enough so $R_M$ precedes $Y$). In fact both the examples in section \ref{ssec:examples} belong in this category. The other orders may be relevant though if $M$ and/or $Y$ are measured retrospectively. For example, had the NJCS asked about educational attainment retrospectively at the same time earnings data were collected, $R_M$ and $R_Y$ would be \textit{unordered}. In a different example, \cite{benkeser2021InferenceNaturalMediation} consider the case-cohort sampling design where the mediator is measured (where measuring involves testing stored blood samples) only for individuals who develop the outcome and a subset of the individuals who do not. When there is also outcome missingness, this is a case of the \textit{reverse} setting.

\begin{figure*}[t!]
    \centering
    \caption{Starting missingness models for different temporal orders. Full-form (but not reduced-form) uses left-to-right layout of temporal order.}
    \label{fig:saturated}
    \includegraphics[width=.9\textwidth,page=2]{standalone.pdf}
\end{figure*}

\subsubsection{Starting with a large model}
\label{subsec:saturated}

Rather than immediately discussing identified models, we start with a large missingness generation model, as that will help make explicit what structures are assumed away in the more restrictive models that obtain identification. Our starting model is semi-saturated in the sense that it (i) allows the missingness of $M$ and $Y$ to be caused by variables in the main model and their unobserved causes, and to be causally related, while respecting temporal order; but (ii) excludes missingness causes that are on any causal path depicted by a single arrow in the main model, e.g., variables implicit in the $A\to M$ (or any other) arrow. Such missingness causes deserve separate attention.

Fig.~\ref{fig:saturated} presents this starting model for each of the four temporal orders. The top panel shows the models in what we call \textit{full form}, conditioning on $X$ and using a left-to-right portrayal of temporal order. The bottom panel shows them in \textit{reduced form}, conditioning on all the fully observed variables ($X,A$) and without the left-to-right format (a common representation in the theoretical literature on graphical missingness models). We label these models A-0, B-0, C-0, D-0, with the letter differentiating the settings and 0 indicating these models are our starting points. 

There are connections among these models: If we put temporal orders aside, C-0 is a submodel of B-0 and D-0. If we remove the $R_M\to Y$ path from A-0, it becomes a submodel of B-0.

A note on the dashed $R_M\to Y$ path in A-0: This type of path from a missingness indicator to a variable in the main model is typically excluded in the literature, with the implicit assumption that $Y$ remains the same regardless of whether any other variable is observed or missing. This path may however be relevant in some contexts. An example is where data collection on the mediator takes place on the same day and at the same place as an intervention session (either unintentionally or intentionally to reduce participant burden), then $R_M$ effectively is an indicator of being exposed to the session, which may affect the outcome. Another example is given in \cite{srinivasan2023GraphicalModelsEntangleda}, where showing up for a study visit may affect future outcome through feelings of self-empowerment. Interestingly, if $R_M\to Y$ is present, $U_M\to R_M$ needs to be absent for SIAA to hold (because otherwise there is $M$-$Y$ confounding); this point will be relevant in section~\ref{ssec:shadow-downstream}.

\subsubsection{Response positivity assumptions}

Throughout we assume the following two response positivity assumptions. 
Several models will require additional positivity assumptions, which will be noted then.

\begin{tabularx}{\linewidth}{@{}p{2.5cm} X@{}}
    $R_M$ positivity:
    & $\P(R_M=1\mid x,a,m)>0$ for all $(x,m)$ such that $\P(x,a,m)>0$ 
    \\
    $R_Y$ positivity:
    & $\P(R_{MY}=1\mid x,a,m,y)>0$ for all $(x,m,y)$ such that $\P(x,a,m,y)>0$
\end{tabularx}

\subsubsection{Non-identification when \texorpdfstring{$R_M$}{Rm} shares unobserved causes with \texorpdfstring{$Y$}{Y} or influences \texorpdfstring{$Y$}{Y}}
\label{sssec:nonID}

When there is either a $R_M\leftarrow U_Y\rightarrow Y$ or a $R_M\to Y$ path, $\P(Y\mid X,A,M)$ is generally not identified. This is because this density is a weighted average of $\P(Y\mid X,A,M,R_M=1)$ and $\P(Y\mid X,A,M,R_M=0)$, the latter of which is not identified.
This affects all the four large models in Fig.~\ref{fig:saturated}, as they all include $R_M\leftarrow U_Y\rightarrow Y$, and A-0 also allows $R_M\to Y$. 

An interesting case is the ``disjoint MAR'' model in Fig.~\ref{fig:badMAR}, which is a submodel of A-0 that retains the two problematic paths. (In the reduced form we use a double-headed arrow to abbreviate the path between two variables due to an unobserved cause when that cause is not shared by other variables.) In this model, the mediator is MAR ($R_M\independent M\mid X,A$, with observed $X,A$), which identifies $\P(M\mid X,A)=\P(M\mid X,A,R_M=1)$; the outcome is also MAR ($R_Y\independent Y\mid X,A,R_M$, with observed $X,A,R_M$), which identifies $\P(Y\mid X,A)=\E[\P(Y\mid X,A,R_M)\mid X,A]$. However we do not need $\P(Y\mid X,A)$. What we need is $\P(Y\mid X,A,M)$, which is not identified with this model. (To make this concrete, we provide a numerical example in Appendix~A.) This case shows that not all MAR conditions are helpful. Specifically, it is not helpful to have MAR conditional on the missing indicator ($R_M$) of a variable ($M$) conditional on which the estimand (here $\P(Y\mid X,A,M)$) is defined.

\begin{figure}[h!]
    \centering
    \caption{A ``disjoint MAR'' model}
    \label{fig:badMAR}
    \includegraphics[width=.4\textwidth,page=3]{standalone.pdf}
\end{figure}

Due to this non-identification, we will not see models with $R_M\leftarrow U_Y\rightarrow Y$ or $R_M\to Y$ again until section \ref{ssec:shadow-downstream}, where some remedy is provided by an auxiliary variable.

\section{Self-separated missingness models}
\label{sec:separated}

In this section, we seek models with self-separated missingness that are submodels of the large models above where $\P(M\mid X,A)$ and $\P(Y\mid X,A,M)$ are identified based on certain conditional independence assumptions. We also examine the models' falsifiability, i.e., whether they imply conditions about the observed data distribution, which can in principle be verified.

Specifically, we cover five models (sets of conditional independence assumptions) in two groups (sections \ref{ssec:sim} and \ref{ssec:separation-other}). In between presenting models, we introduce a general identification approach with tweaks for the current setting (section \ref{ssec:approach}) and offer some useful identification techniques (section \ref{ssec:tilt-the-odds}). We later discuss this whole model class (section \ref{ssec:discussion-separated}) and relate back to the two examples (section \ref{ssec:examples-separated}).

The models are labeled S1 to S5 (S for self-\textit{separation}). They apply to one or more temporal settings, so setting-specific models are labeled accordingly, e.g., D-S2 (a submodel of D-0) or AB-S1 (a submodel of both A-0 and B-0, meaning A-S1 and B-S1 are the same model). Fig.~\ref{fig:separated} shows the DAGs for these models.
The presentation here focuses on the key points; identification formulas and testable implications are collected in Table~\ref{tab:separated}; and all proofs are provided in Appendix~B.

\subsection{Sequentially ignorable missingness models}
\label{ssec:sim}

There are two sequentially ignorable missingness (SIM) models.

\subsubsection{Model S1 (SIM type 1)} 

This is the least restrictive model that gives available-case identification (the simplest result): 
\begin{align*}
    \P(M\mid X,A)&=\P(M\mid X,A,R_M=1),
    \\
    \P(Y\mid X,A,M)&=\P(Y\mid X,A,M,R_{MY}=1).
\end{align*}

This model deserves close attention for it is the only option in this self-separation class of models for the \textit{in-time} setting.
In addition, it is also relevant to the \textit{delayed}, \textit{unordered} and \textit{reverse} settings (see AB-S1 and CD-S1 in Fig.~\ref{fig:separated}). The formal model assumption is stated below.

\paragraph*{Assumption S1:}

\begin{align}
    R_M&\independent(M,Y)\mid X,A,\tag{S1-1}\label{S1-1}
    \\
    R_Y&\independent Y\mid X,A,M,R_M.\tag{S1-2}\label{S1-2}
\end{align}

This assumption is about ignorability of $R_M$ followed by ignorability of $R_Y$, hence the \textit{sequentially ignorable missingness} (SIM) label. In settings where it is appropriate to (roughly) consider $R_M$ and $M$ as concurrent and $R_Y$ and $Y$ as concurrent, this statement has an intuitive meaning: missingness in the present is uninformative of the present and the future when conditioning on the past.

There is an alternative, equivalent, statement of assumption S1 as
\begin{align*}
    R_M&\independent M\mid X,A,\tag{S1-1b}\label{S1-1b}
    \\
    (R_M,R_Y)&\independent Y\mid X,A,M,\tag{S1-2b}\label{S1-2b}
\end{align*}
where each component identifies one of the two densities of interest: the first component (which identifies $\P(M\mid X,A)$) is a standard single-variable MAR condition; the second component (which identifies $\P(Y\mid X,A,M)$) is a generalized condition of the same flavor. 

Note that S1 is the two-variable case of \citeauthor{zhou2010BlockConditionalMissingRandom}'s (\citeyear{zhou2010BlockConditionalMissingRandom}) \textit{block-conditional MAR} model. \citeauthor{zhou2010BlockConditionalMissingRandom} constructed this model by first factoring the joint distribution into certain blocks (hence the name of the model) before imposing the same conditional independences.%
\footnote{
\citeauthor{zhou2010BlockConditionalMissingRandom}'s construction of the model, in our current notation:
{\scriptsize\begin{align*}
    \P(M,R_M,Y,R_Y\mid X,A)
    &=\P(M,R_M\mid X,A)\P(Y,R_Y\mid X,A,M,R_M)
    \\
    &=\P(M\mid X,A)\P(R_M\mid X,A,{\color{red}\cancel{M}})
    \P(Y\mid X,A,M,{\color{blue}\cancel{R_M}})\P(R_Y\mid X,A,M,R_M,{\color{red}\cancel{Y}}),
\end{align*}}

\noindent where the first step factors the joint density into $(M,R_M)$, $(Y,R_Y)$ blocks, and the second step imposes conditional independencies.}
Our presentation of the model here centers on the assumption, and is not tied to a specific factorization.

\begin{figure*}[tp!]
    \centering
    \caption{DAGs of self-separated missingness models S1-S5}
    \label{fig:separated}
    \includegraphics[width=.9\textwidth,page=4]{standalone.pdf}

    \bigskip
    
    \captionof{table}{Identification results and testable implications of self-separated missingness models S1-S5}
    \label{tab:separated}
    \resizebox{\linewidth}{!}{%
\begin{tabular}[b]{cl@{}c}
    Model & Tilting functions & \begin{tabular}{@{}c@{}}
        Testable\\implications
    \end{tabular}
    \\\hline
    \\[-.5em]
    S1
    & $h=1$,~~$k=1$.
    \\[.5em]
    S2
    & $h_b=\E[g_b\mid X,A,M,R_{MY}=1]$,~~$k=g_b/h_b$,~~where~~$g_b=\frac{\P(R_M=1\mid X,A,R_Y=1)}{\P(R_M=1\mid X,A,Y,R_Y=1)}$.
    \\[1em]
    S3
    & $h=\E[g\mid X,A,M,R_{MY}=1]$,~~$k=g/h$,~~where~~$g=\frac{\P(R_M=1\mid X,A)}{\P(R_M=1\mid X,A,Y,R_Y=1)}$.
    & \ref{T3}
    \\[1em]
    S4
    & $h=\E[g\mid X,A,M,R_{MY}=1]$,~~$k=g/h$,~~where
    \\
    &~~$g=\P(R_M=1\mid X,A)\left[1+\text{odds}(R_M=0\mid X,A,Y,R_Y=1)\frac{\P(R_Y=1\mid X,A,Y,R_M=1)}{\P(R_Y=1\mid X,A,Y,R_M=0)}\right]$.
    & \ref{T1}
    \\[1em]
    S5
    & $h=\P(R_M=1\mid X,A)\left[\frac{\P(R_Y=1\mid X,A,M,R_M=1)}{\P(R_M=1\mid X,A,R_Y=1)}+\frac{\P(R_Y=0\mid X,A,M,R_M=1)}{\P(R_M=1\mid X,A,R_Y=0)}\right]$,
    \\
    & $k=1$.
    & \ref{T2}
    \\[-.7em]
    \\\hline
\end{tabular}%
    }
\end{figure*}

There are two weaker versions of S1 that provide the same identification result. The weakest version starts with the second statement and replaces $(R_M,R_Y)$ in \ref{S1-2b} with $R_{MY}$. As $R_{MY}$ is a composite variable, however, it is not clear what this weaker condition means in practical terms. The other version starts with the first statement and replaces $R_M$ in \ref{S1-2} with $R_M=1$; this means outcome self-separation is only needed among those whose mediator is observed. Note that this version of SIM is similar in form to the SIAA assumption of the main model.

Three notes from the DAGs. First, S1 involves not only self-separation but also separation of $R_M$ and $Y$. For the \textit{delayed} (B-S1) and \textit{unordered} (C-S1) settings, this means $Y$ does not influence $R_M$, even though it precedes $R_M$. 

Second, for the \textit{reverse} setting, D-S1 restricts $R_Y$ not to influence $R_M$ (even though $R_Y$ precedes $R_M$) to satisfies assumption \ref{S1-1}. Strictly speaking, a different restriction (separation of $R_Y$ and $M$) would also satisfy \ref{S1-1}, but that model is more restrictive than model D-S2 (introduced shortly). In general, we will skip models that are unnecessarily restrictive.

Third, SIM allows $R_M$ and $R_Y$ to share unobserved causes. In the school intervention example, this means that it is fine if mediator and outcome missingness both results from school absence due to unobserved factors, as long as those factors do not affect $M$ or $Y$.

\subsubsection{Model S2 (SIM type 2)} 

This model, which is relevant  to the \textit{unordered} and \textit{reverse} settings (see C-S2 and D-S2 in Fig.~\ref{fig:separated}), involves self-separation and $R_Y$-$M$ separation. The assumption is a mirror image of S1, where the order of ignorability is first $R_Y$ then $R_M$.

\paragraph*{Assumption S2:}

\begin{align*}
    R_Y&\independent(M,Y)\mid X,A,\tag{S2-1}\label{S2-1}
    \\
    R_M&\independent M\mid X,A,Y,R_Y.\tag{S2-2}\label{S2-2}
\end{align*}

Although the two assumptions mirror each other, S2 is implicitly more restrictive regarding unobserved common causes, due to the ordering of $M,Y$. S1 allows $R_Y$ and $M$ to share unobserved causes, but S2 does not similarly allow $R_M$ and $Y$ to share unobserved causes, i.e., no $R_M\leftarrow U\rightarrow Y$ path (see section~\ref{sssec:nonID}). Adding such a path would make $Y$ a collider, rendering $\P(M\mid X,A,Y)$ in the factorization below unidentified.

Similar to S1, S2 has a second equivalent expression
\begin{align*}
    R_Y&\independent Y\mid X,A,\tag{S2-1b}\label{S2-1b}
    \\
    (R_M,R_Y)&\independent M\mid X,A,Y.\tag{S2-2b}\label{S2-2b}
\end{align*}
Also similar to S1, there are weaker versions of S2, where $R_Y$ in \ref{S2-2} is replaced with $R_Y=1$, and where $(R_M,R_Y)$ in \ref{S2-2b} is replaced with $R_{MY}$.

\paragraph*{Identification.} 
Under S2, one can identify $\P(M,Y\mid X,A)$ via factorization in the $Y$-then-$M$ order,
\begin{align*}
    \P(M,Y\mid X,A)
    &=\P(Y\mid X,A)\P(M\mid X,A,Y)
    \\
    &=\P(Y\mid X,A,R_Y=1)\P(M\mid X,A,Y,R_{MY}=1),
\end{align*}
and then derive the densities of interest, $\P(M\mid X,A)$ and $\P(Y\mid X,A,M)$, from this joint density. Alternatively, a more direct approach can be taken, which will be introduced in section~\ref{ssec:approach}. We will revisit model S2 then.

\subsubsection{A note on MAR}
\label{subsec:mar}

Earlier we mentioned a disjoint MAR model that does not achieve identification. There is a different MAR model that achieves identification, where $M,Y$ are \textit{jointly} MAR, formally 
\begin{align*}
    (R_M,R_Y)\independent(M,Y)\mid X,A.
\end{align*}
This model (see Fig.~\ref{fig:jointMAR}) is a submodel of both SIM models S1 and S2, and thus is unnecessarily restrictive. While S1 and S2 do not place restrictions on the observed data distribution, this MAR model implies two testable conditions \citep{potthoff2006CanOneAssess,mohan2021GraphicalModelsProcessing}:
\begin{align}
    R_Y&\independent M\mid X,A,R_M=1,\tag{T1}\label{T1}
    \\
    R_M&\independent Y\mid X,A,R_Y=1.\tag{T2}\label{T2}
\end{align}
There is thus no reason to adopt it over SIM models.

\begin{figure}[h]
    \centering
    \caption{The jointly MAR model}
    \label{fig:jointMAR}
    \vspace{.5em}
    \includegraphics[width=.35\linewidth,page=5]{standalone.pdf}
\end{figure}

\subsection{Tailoring a general approach: Identifying response probabilities}
\label{ssec:approach}

Above we mentioned an alternate identification method for model S2. It is based on the general approach of recovering full-data distributions by identifying response probabilities. Here we tailor this approach to the current setting, to provide templates that are useful not only for S2 but also for other models in the remainder of the paper.

\subsubsection{General approach}

We start by defining \textit{tilting functions} that map from observed-data to full-data conditional densities of $M$ and $Y$ (the latter being of interest). Let 
\begin{align*}
    h(X,A,M)
    &:=\mfrac{\P(M\mid X,A)}{\P(M\mid X,A,R_M=1)},
    \\
    h_b(X,A,M)
    &:=\mfrac{\P(M\mid X,A)}{\P(M\mid X,A,R_{MY}=1)},
    \\
    k(X,A,M,Y)
    &:=\mfrac{\P(Y\mid X,A,M)}{\P(Y\mid X,A,M,R_{MY}=1)}.
\end{align*}
As the denominators above are observed data densities, $\P(M\mid X,A)$ is identified if either function $h$ or $h_b$ is identified, and $\P(Y\mid X,A,M)$ is identified if $k$ is identified. We will thus focus on identifying $h$ or $h_b$, and $k$.

By Bayes' rule, these tilting functions can be re-expressed as ratios of response probabilities,
\begin{align*}
    h(X,A,M)
    &=\mfrac{\P(R_M=1\mid X,A)}{\P(R_M=1\mid X,A,M)}
    \\
    h_b(X,A,M)
    &=\mfrac{\P(R_{MY}=1\mid X,A)}{\P(R_{MY}=1\mid X,A,M)}
    \\
    k(X,A,M,Y)
    &=\mfrac{\P(R_{MY}=1\mid X,A,M)}{\P(R_{MY}=1\mid X,A,M,Y)}.
\end{align*}
This means $\P(M\mid X,A)$ and $\P(Y\mid X,A,M)$ can be identified either by identifying the three response probabilities $\P(R_M=1\mid X,A,M)$, $\P(R_{MY}=1\mid X,A,M)$ and $\P(R_{MY}=1\mid X,A,M,Y)$ (which would identify $h$ and $k$) or by identifying the last two (which would identify $h_b$ and $k$).

\subsubsection{Tailoring to current setting}

For some models (S2 included), to identify two of those response probabilities ($\P(R_M=1\mid X,A,M)$ and $\P(R_{MY}=1\mid X,A,M)$) is complicated. To allow sidestepping them, we derive a different set of expressions for the tilting functions that only involves the third, $\P(R_{MY}=1\mid X,A,M,Y)$.
Specifically, define product functions
\begin{align*}
    g(X,A,M,Y)
    &:=hk=\mfrac{\P(R_M=1\mid X,A)\,\P(R_Y=1\mid X,A,M,R_M=1)}{\color{blue}\P(R_{MY}=1\mid X,A,M,Y)},
    \\
    g_b(X,A,M,Y)
    &:=h_bk=\mfrac{\P(R_{MY}=1\mid X,A)}{\color{blue}\P(R_{MY}=1\mid X,A,M,Y)},
\end{align*}
which are identified if $\color{blue}\P(R_{MY}=1\mid X,A,M,Y)$ is identified. 
The key is that $h$ and $h_b$ are expectations of $g$ and $g_b$ over the outcome distribution given $X,A,M$ among complete cases (details in the Appendix). This obtains new expressions for the tilting functions in terms of $g$ or $g_b$:
\begin{align*}
    h(X,A,M)
    &=\E[g\mid X,A,M,R_{MY}=1]
    \\
    h_b(X,A,M)
    &=\E[g_b\mid X,A,M,R_{MY}=1]
    \\
    k(X,A,M,Y)
    &=g/h=g_b/h_b.
\end{align*}
With these expressions, identification requires identifying only one response probability, $\color{blue}\P(R_{MY}=1\mid X,A,M,Y)$.

\paragraph*{Application to model S2.}

Under S2,
\begin{align*}
    {\color{blue}\P(R_{MY}=1\mid X,A,M,Y)}=
    \P(R_Y=1\mid X,A)\P(R_M=1\mid X,A,Y,R_Y=1),
\end{align*}
which obtains $g_b=\mfrac{\P(R_M=1\mid X,A,R_Y=1)}{\P(R_M=1\mid X,A,Y,R_Y=1)}$ and then the corresponding tilting functions $h_b$ and $k$.

\subsection{Self-separated missingness models other than SIM}
\label{ssec:separation-other}

\subsubsection{Model S3 (with \texorpdfstring{$R_M$}{Rm}-\texorpdfstring{$R_Y$}{Ry} separation)}

This is a well-known two-variable missingness model that appeared in \cite{mohan2013GraphicalModelsInference} as a case where the joint distribution of the variables is identified not by the traditional method of factoring and directly applying conditional independence assumptions, but by identifying the response probability. 
With the separation of $R_M$ and $R_Y$, this model can apply to the \textit{delayed}, \textit{unordered} and \textit{reverse} settings (see BCD-S3 in Fig.~\ref{fig:separated}).

\paragraph*{Assumption S3:}

\begin{align}
    R_M&\independent(M,R_Y)\mid X,A,Y,\tag{S3-1}\label{S3-1}
    \\
    R_Y&\independent(Y,R_M)\mid X,A,M.\tag{S3-2}\label{S3-2}
\end{align}

\paragraph*{Identification.}
Under S3,
\begin{align*}
    \P(R_{MY}=1\mid X,A,M,Y)=
    \P(R_M\!=\!1\mid X,A,Y,R_Y\!=\!1)\P(R_Y\!=\!1\mid X,A,M,R_M\!=\!1).
\end{align*}
This obtains $g=\mfrac{\P(R_M=1\mid X,A)}{\P(R_M=1\mid X,A,Y,R_Y=1)}$, and the corresponding tilting functions $h$ and $k$.

\paragraph*{Testable implication.}
\cite{nabi2023TestabilityGoodnessFit} show that this model is falsifiable. Without going into the complex theory here, the testable implication can be expressed as:
\smallskip
\begin{align}
    \mfrac{\P(R_M=0,R_Y=0\mid X,A)}{\P(R_M=1,R_Y=1\mid X,A)}=
    \E\Big[\mfrac{\P(R_M=0\mid X,A,Y,R_Y=1)}{\P(R_M=1\mid X,A,Y,R_Y=1)}\cdot\mfrac{\P(R_Y=0\mid X,A,M,R_M=1)}{\P(R_Y=1\mid X,A,M,R_M=1)}\mid X,A,R_{MY}=1\Big].\tag{T3}\label{T3}
\end{align}
This expression is deduced (see the Appendix) from \cite{nabi2023TestabilityGoodnessFit} and \cite{malinsky2022SemiparametricInferenceNonmonotone}. A test of this condition is offered in \cite{nabi2023TestabilityGoodnessFit}.

\subsubsection{Models S4 and S5}

These models apply to the \textit{delayed} and \textit{reverse} settings, respectively (see B-S4 and D-S5 in Fig.~\ref{fig:separated}). S4 involves $R_Y$-$M$ separation (like S2); S5 involves $R_M$-$Y$ separation (like S1). In S4 the missing indicators do not share unobserved causes with any other variables; in S5 $R_Y$ can share unobserved causes with $M$. The two model assumptions mirror each other.

\paragraph*{Assumption S4:}

\begin{align}
    &(R_M,R_Y)\independent M\mid X,A,Y,\tag{S4-1}\label{S4-1}
    \\
    &R_Y\independent(M,Y)\mid X,A,R_M,\tag{S4-2}\label{S4-2}
    \\
    &\P(R_Y=1\mid X,A,R_M=0)>0.\tag{S4-3}\label{S4-3}
\end{align}

\paragraph*{Assumption S5:}

\begin{align}
    &(R_M,R_Y)\independent Y\mid X,A,M,\tag{S5-1}\label{S5-1}
    \\
    &R_M\independent(M,Y)\mid X,A,R_Y,\tag{S5-2}\label{S5-2}
    \\
    &\P(R_M=1\mid X,A,R_Y=0)>0.\tag{S5-3}\label{S5-3}
\end{align}

\paragraph*{Identification.}
Under S4, $g=\mfrac{\P(R_M=1\mid X,A)}{\color{teal}\P(R_M=1\mid X,A,Y)}$, so we need to identify the denominator response probability (or its inverse). This is achieved using one of the \textit{odds-tilting} techniques explained shortly in section~\ref{ssec:tilt-the-odds}.

Under S5, $\P(Y\mid X,A,M)$ is identified by complete cases. To identify $\P(M\mid X,A)$, a simple strategy is to first identify $\P(M,R_Y\mid X,A)$ (by tilting $\P(M,R_Y\mid X,A,R_M=1)$) then marginalize over $R_Y$.

\paragraph*{Testable implications.}
Both models are falsifiable: S4 implies \ref{T1} and S5 implies \ref{T2}, which we copy here:
\begin{align}
    R_Y&\independent M\mid X,A,R_M=1,\tag{\ref{T1}}
    \\
    R_M&\independent Y\mid X,A,R_Y=1.\tag{\ref{T2}}
\end{align}

\subsection{Useful techniques: Tilting the odds}
\label{ssec:tilt-the-odds}

There are different ways to identify $\frac{1}{\color{teal}\P(R_M=1\mid X,A,Y)}$ in model S4. We introduce two \textit{odds-tilting} techniques that are generally useful. (The first plays a key role in a later theory on shadow variable missingness in section~\ref{ssec:shadow-missing}.) These are tilting techniques in the sense that they ``tilt'' an identified function toward a function of interest. But unlike the earlier tilting (by $h,h_b,k$) which targets densities of $M,Y$, the tilting here targets \textit{missingness odds}. 
(It is simpler to work with odds than with probabilities, and the inverse response probability is equal to 1 plus the missingness odds, e.g., $\frac{1}{\P(R_M=1\mid X,A,Y)}=1+\frac{\P(R_M=0\mid X,A,Y)}{\P(R_M=1\mid X,A,Y)}$.) To lighten notation, we will often write ``odds'' instead of spelling out a ratio of two probabilities.

Centering on the problem of model S4, the target is $\color{teal}\odds(R_M=0\mid X,A,Y)$. There are two choices for the observed-data odds one may anchor on: $\odds(R_M=0\mid X,A,Y,R_Y=1)$ (conditioning additionally on observing $Y$) and $\odds(R_M=0\mid X,A)$ (removing the conditioning on $Y$). These give two potential odds-tilting functions:
\begin{align}
    \mfrac{\color{teal}\odds(R_M\!=\!0\mid X,A,Y)}{\odds(R_M\!=\!0\mid X,A,Y,R_Y\!=\!1)}&=\mfrac{\P(R_Y\!=\!1\mid X,A,Y,R_M\!=\!1)}{\P(R_Y\!=\!1\mid X,A,Y,R_M\!=\!0)},\label{eq:odds-tilt1}
    \\
    \mfrac{\color{teal}\odds(R_M\!=\!0\mid X,A,Y)}{\odds(R_M\!=\!0\mid X,A)}&=\mfrac{\P(Y\mid X,A,R_M\!=\!0)}{\P(Y\mid X,A,R_M\!=\!1)}.\label{eq:odds-tilt2}
\end{align}
This means if we can identify either the response risk ratio on the RHS of (\ref{eq:odds-tilt1}) or the outcome density ratio on the RHS of (\ref{eq:odds-tilt2}), we have a tilting function that identifies $\color{teal}\odds(R_M=0\mid X,A,Y)$. 

\ref{S4-2} and \ref{S4-3} combined identify both tilting functions, as they allow (i) removing $Y$ from the conditioning sets on the RHS of (\ref{eq:odds-tilt1}), obtaining $\mfrac{\P(R_Y=1\mid X,A,R_M=1)}{\P(R_Y=1\mid X,A,R_M=0)}$, and (ii) adding $R_Y=1$ to the conditioning sets on the RHS of (\ref{eq:odds-tilt2}), obtaining  $\mfrac{\P(Y\mid X,A,R_M=0,R_Y=1)}{\P(Y\mid X,A,R_M=1,R_Y=1)}$.

\subsection{Comments on this class of models}
\label{ssec:discussion-separated}

This section covers five models, but the options depend on the temporal order setting. For the common \textit{in-time} setting there is a single option: S1. 
The other settings have more options thanks to the altered ordering of $R_M,R_Y$ vis-\`a-vis $M,Y$ and vis-\`a-vis each other, but 
the practical choices for application are likely limited.
Take model S3 for example, and consider it from a time perspective. In the \textit{delayed} setting ($M,Y,R_M,R_Y$), where $M$ and $Y$ are both measured retrospectively with $R_M$ preceding $R_Y$, the case where $M$ affects $R_Y$ but skips $R_M$ may be rare. Therefore, time-wise, S3 may be less reasonable for the \textit{delayed} setting than for the \textit{unordered} ($M,Y,\begin{matrix}R_M\\[-.4em]R_Y\end{matrix})$ and \textit{reverse} ($M,Y,R_M,R_Y$) settings. 
Yet for the latter two settings, a separate consideration for S3 is the plausibility of $M$ and $Y$ both affecting each other's missingness but not affecting their own missingness.

Now we put aside specific time considerations and just consider the reduced forms of the models. Three observations can be made. First, out of the five possible pairwise connections for the missingness ($M$-$R_M$, $Y$-$R_Y$, $M$-$R_Y$, $Y$-$R_M$ and $R_M$-$R_Y$), only two are present in each model. As these are self-separated missingness models, the two self-connections are absent, and additionally, one of the other three connections ($M$-$R_Y$, $Y$-$R_M$ and $R_M$-$R_Y$) is absent. Adding the absent third connection renders the models unidentified. Choosing among these models means choosing which of these three connections one is willing to assume away.

Second, only SIM models allow $R_M$ and $R_Y$ to share unobserved causes, a point to be considered. Models S4 and S5, for example, do not allow such unobserved common causes, even though they allow $R_M$ to affect $R_Y$ or vice versa. We note that the distinction between these two types of connection seems to be under-appreciated, and researchers may draw a DAG with a directed arrow from one missing indicator to another based on a simple argument that the two may be associated. Yet there are situations where $R_M$ and $R_Y$ may be associated more because they share causes than because one causes the other, e.g., where the measurements are far apart and the missingness is non-monotone and does not simply reflect study attrition. The key point is, these models do not hold if factors that may affect both $R_M$ and $R_Y$ (say, in some context, poor physical mobility, mental health struggles, or sporadic cell phone service access) are not captured in $X$.

Third, all non-SIM models imply testable conditions. If those conditions do not hold, then we are back to SIM as the only option(s) within this class of self-separated missingness models. (Note though that this does not mean SIM is correct, as the true model may not be in this class.)

\subsection{Relating to the examples}
\label{ssec:examples-separated}

As our two examples are both in the \textit{in-time} category, the question
is whether SIM, specifically S1, might hold. This model allows $R_M$-$R_Y$ connection (due to direct influence and/or unobserved common causes), but assumes away the self-connections and the $R_M$-$Y$ connection. 

For the latter, it may be reasonable to assume there is no $R_M\to Y$ path (it is much more plausible for the mediator than for its missing/observed status to affect the outcome), but whether there is a $R_M\leftarrow U_Y\to Y$ path requires careful thought. In the school intervention example, poverty and challenging home environment might increase school absence (causing mediator missingness) and might influence drinking behavior. If these factors are measured and included in the covariate set $X$, then it would be more plausible for this back-door path to be absent.

With the self-connections, there are also two possibilities: direct influence and back-door path. Back-door paths can be blocked if common causes of a variable and its missingness are measured (except some subtleties to be explained in section \ref{sssec:example-MY}). However, if there is concern that the outcome or mediator influences its missingness (students who drink and people who did not obtain an education credential or have zero earnings being less forthcoming with that information), such self-connections rule out the SIM model. This is a challenge with self-report of potentially stigmatizing behaviors or characteristics.

\section{Self-connected missingness models with shadow variables}
\label{sec:shadow}

This section concerns situations where, unlike in earlier models, $M$ and/or $Y$ are not separated from their missingness, but the densities of interest are identified by leveraging \textit{shadow variables} \cite[SV, term used in][]{kott2014CalibrationWeightingWhen,miao2015IdentificationDoublyRobusta} and certain distribution completeness conditions. Outside of the mediation setting, SVs have been used to handle non-ignorable missingness and biased selection by many authors \cite[e.g.,][]{dhaultfoeuille2010NewInstrumentalMethod,kott2014CalibrationWeightingWhen,wang2014InstrumentalVariableApproach,zhao2015SemiparametricPseudoLikelihoodsGeneralized,miao2024IdentificationSemiparametricEfficiency,miao2016VarietiesDoublyRobust,yang2019CausalInferenceConfounders,li2023SelfcensoringModelMultivariate}.
In the mediation setting, \cite{zuo2024MediationAnalysisMediator} study several models with SVs. \cite{li2017IdentifiabilityEstimationCausal} consider a SV model in handling outcome (but not mediator) missingness. We will build on this prior work in several important steps. First we synthesize a general SV theory that is simple and accommodates missingness in the SV, and apply it to provide a review with updates of models with mediator/outcome as SVs. Then we extend the theory to consider external SVs for missingness in a pair of variables, and apply it to uncover a range of SV models for mediator and outcome missingness, with auxiliary variables at different locations in the causal structure or with covariates serving as SVs.
Proofs of all results are provided in Appendix~C.

\subsection{A general theory}
\label{ssec:shadow-theory}

Let $V$ denote a variable of interest, with missing indicator $R_V$. The task is to identify the distribution of $V$ given a set of variables. For simple presentation, we keep this conditioning implicit.
The theory has two key components: a SV and a distribution completeness condition. An additional third component accommodates missingness in the SV.

\subsubsection{A shadow variable}

Suppose there exists a variable $Z$ that is statistically dependent on $V$, but is independent of $R_V$ given $V$. Such a variable is called a SV with respect to the missingness in $V$ (intuitively, a variable in the shadow of $V$ from the point of view of $R_V$). The top panel of Fig.~\ref{fig:shadow-theory} shows several cases that satisfy this condition. The dependence between $Z$ and $V$ may be due to one variable causing the other and/or the two sharing common causes.

This setup provides a key connection (see proof in the Appendix) between two odds of missingness:
\begin{align}
    \underbrace{\frac{\P(R_V=0\mid Z)}{\P(R_V=1\mid Z)}}_{\odds(R_V=0\mid Z)}=\E\bigg[\underbrace{\frac{\P(R_V=0\mid V)}{\P(R_V=1\mid V)}}_{\odds(R_V=0\mid V)}\mid Z,R_V=1\bigg].\label{eq:shadow}
\end{align}
That is, the \textit{odds of missingness given the SV} is equal to the expectation of the \textit{odds of missingness given the variable itself} where the expectation is taken among observed cases, conditional on the SV. 
As the LHS and the conditional distribution $\P(V\mid Z,R_V=1)$ (over which the expectation is taken) are observed, (\ref{eq:shadow}) provides an integral equation with the unknown inside the expectation. If this equation has a unique solution then $\P(R_V=1\mid V)$ is identified and the distribution of $V$ is identified.

We draw attention to the significance of (\ref{eq:shadow}). While this key connection may be expressed in other ways,
the expression we present here keeps identification simple and ties existing results together.
It is simpler than the identification theory in \cite{miao2024IdentificationSemiparametricEfficiency}, which involves identifying an odds ratio function which references a baseline odds function.  It is implicitly seen (with some re-expression) in the proofs of identification for specific cases in \cite{zuo2024MediationAnalysisMediator}.
(For details on these points, see the Appendix.)
Also, an advantage of this expression is that it combines well with odds-tilting techniques and thus is readily extendable to the case with missingness in the SV (see section~\ref{sssec:shadow-missing} below).%
\footnote{An alternative expression of the same connection, $\E\left[\mfrac{R_V}{\P(R_V=1\mid V)}\mid Z\right]=1$ \citep{dhaultfoeuille2010NewInstrumentalMethod}, is simpler but does not extend as easily to accommodate missingness in $Z$.}

When applying (\ref{eq:shadow}) in the next sections, we will often use the \textit{inverse response probability} version obtained by adding 1 to both sides  of (\ref{eq:shadow}), 
$\mfrac{1}{\P(R_V=1\mid Z)}=\E\left[\frac{1}{\P(R_V=1\mid V)}\mid Z,R_V=1\right]$.
With an abuse of labels, we refer to this equation also as (\ref{eq:shadow}).

\begin{figure*}[t!]
    \centering
    \caption{Top panel: three cases where $Z$ serves as a SV with respect to the missingness in $V$, including cases (a) and (b) where $V$ influences $Z$ and case (c) where $Z$ influences $V$. Bottom panel: same settings now with missingness in $Z$ that satisfies mSV-2, including cases (i) where $R_Z\independent(V,Z)\mid R_V$ (blue arrow allowed but not gray arrows) and cases (ii) where $R_Z\independent R_V\mid V,Z$ (gray arrows allowed but not blue arrow).}
    \label{fig:shadow-theory}
    \includegraphics[width=.75\textwidth,page=6]{standalone.pdf}
\end{figure*}

\subsubsection{A distribution completeness condition}

The second ingredient of the theory is a completeness condition: 
$\P(V,Z\mid R_V=1)$ is complete in $Z$, which means for any function $t(V)$ with finite second moment, if $\E[t(V)\mid Z,R_V=1]\overset{\text{a.s.}}{=}0$ it must be the case that $t(V)\overset{\text{a.s.}}{=}0$  \citep{lehmann_completeness_1950}. If this holds, our integral equation has a unique solution.
Intuitively, this condition means there is more variation in $Z$ than in $V$; if the opposite is true then there is not enough information (fewer equations than unknowns in the discrete case) so the solution is not unique.
This condition is about the observed data distribution, 
and does not involve the missing data.
As completeness implies $Z\not\independent V$, the theory only requires $Z\independent R_V\mid V$ and the completeness condition.

The completeness condition is key in this identification theory. There has been some work aiming to relax this requirement, however, for missing data and also for proximal causal inference \citep[see][]{li2023NonparametricInferenceMean,zhang2023ProximalCausalInferencea}.

\subsubsection{Missingness in the shadow variable}
\label{sssec:shadow-missing}

If the SV is subject to missingness, (\ref{eq:shadow}) cannot immediately be used as the integral equation to identify $\odds(R_V=0\mid V)$, so it helps to extend the theory to accommodate SV missingness. 
Let $R_Z$ be the response indicator for $Z$. Also, let $R_{VZ}:=R_V R_Z$ and $Z^\dagger:=(R_Z,R_ZZ)$.
Identification is possible under some restriction on the $R_Z$ model, formalized in assumption mSV.

\paragraph*{Assumption mSV:}
\begin{align}
    \P(&R_Z=1,R_V=1\mid V,Z)>0\tag{mSV-1},\label{mSV-1}
    \\
    \text{and either}~
    &R_Z\independent(V,Z)\mid R_V\tag{mSV-2i}\label{mSV-2i}
    \\
    \text{or}~
    &R_Z\independent R_V\mid V,Z.\tag{mSV-2ii}\label{mSV-2ii}
\end{align}

Loosely speaking, this assumption says that $R_Z$ is either separated from $(V,Z)$ or separated from $R_V$, or both. (This restriction is used for a specific case in the appendix of \citeauthor{zuo2024MediationAnalysisMediator}, \citeyear{zuo2024MediationAnalysisMediator}.) The bottom panel of Fig.~\ref{fig:shadow-theory} illustrates this restriction. In this figure, we use an undirected edge ($V-R_V$ in (c)-(ii)) to indicate an arrow that can be in either direction (either $V\to R_Z$ or $R_Z\to V$). Three points to note from this figure: 
(i) $R_Z$ is not allowed to influence $R_V$ or share unobserved causes with $R_V$; 
(ii) blue and gray arrows are not simultaneously allowed; 
(iii) in the absence the blue arrow, $Z^\dagger$ is a SV.

\paragraph*{Potential integral equations.}

If  \ref{mSV-2i} holds ($R_Z$ is separated from $(V,Z)$), (\ref{eq:shadow}) implies
\begin{align}
    \odds(R_V=0\mid Z,R_Z=1)\mfrac{\P(R_Z=1\mid R_V=1)}{\P(R_Z=1\mid R_V=0)}=\E[\odds(R_V=0\mid V)\mid Z,R_{VZ}=1],\label{eq:shadow-blue}
\end{align}
which can serve as the integral equation for $\odds(R_V=0\mid V)$.
(The derivation of this result uses the first odds-tilting technique in section~\ref{ssec:tilt-the-odds}.)

If \ref{mSV-2ii} holds ($R_Z$ is separated from $R_V$), then we have two potential integral equations:
\begin{align}
    \label{eq:shadow-red}\odds(R_V=0\mid Z,R_Z=1)
    &=\E[\odds(R_V=0\mid V)\mid Z,R_{VZ}=1],
    \\
    \label{eq:shadow-red-dagger}\odds(R_V=0\mid Z^\dagger)
    &=\E[\odds(R_V=0\mid V)\mid Z^\dagger,R_V=1]
\end{align}
(based on $Z$ being a SV among those with $Z=1$ and based on $Z^\dagger$ being a SV). Which one should be used depends on the completeness condition (explained shortly).

Like with (\ref{eq:shadow}), inverse response probability versions of (\ref{eq:shadow-blue}), (\ref{eq:shadow-red}), (\ref{eq:shadow-red-dagger}) are obtained by adding 1 to both sides of each equation. We will refer to those by the same labels, (\ref{eq:shadow-blue}), (\ref{eq:shadow-red}), (\ref{eq:shadow-red-dagger}).

\paragraph*{Completeness conditions.}

Under \ref{mSV-1}, the original completeness condition implies that $\P(V,Z\mid R_{VZ}=1)$ is complete in $Z$. This can be used as the completeness condition to ensure the integral equation have a unique solution in both the \ref{mSV-2i} and \ref{mSV-2ii} cases. 

The \ref{mSV-2ii} ($R_V$-$R_Z$ separation) case is interesting with two options for the integral equation. (\ref{eq:shadow-red-dagger}) has more data points than (\ref{eq:shadow-red}), as it is the combination of (\ref{eq:shadow-red}) and
\begin{align}
    \label{eq:shadow-red-extra}\odds(R_V=0\mid R_Z=0)=
    \E[\odds(R_V=0\mid V)\mid R_Z=0,R_V=1].
\end{align}
Hence (\ref{eq:shadow-red-dagger}) has a unique solution under a weaker completeness condition: $\P(V,Z^\dagger\mid R_V=1)$ complete in $Z^\dagger$. This condition was used by \cite{zuo2024MediationAnalysisMediator} for a model discussed in section~\ref{sssec:models8-9} below. To differentiate the two completeness conditions in this case, we refer to them as \textit{strong} and \textit{weak completeness}. (A setting where strong completeness is not satisfied but weak completeness may be satisfied is where $V$ and $Z$ are both discrete, $V$ has $k$ values and $Z$ has $k-1$ values.) 
If strong completeness holds, $\odds(R_V=0\mid V)$ is identified by the unique solution to (\ref{eq:shadow-red}); and if strong completeness does not hold but weak completeness does, $\odds(R_V=0\mid V)$ is identified by the unique solution to (\ref{eq:shadow-red-dagger}). 

\paragraph*{A testable implication under strong completeness in the $R_V$-$R_Z$ separation case.}

Note that (\ref{eq:shadow-red-extra}) is used for identification (as part of (\ref{eq:shadow-red-dagger})) under weak completeness, but not under strong completeness. This means under strong completeness, (\ref{eq:shadow-red-extra}) serves as a testable condition, reflecting the assumed conditional independence of $R_V$ and $R_Z$.

\subsection{Models with mediator and/or outcome as shadow variables}
\label{ssec:shadow-MY}

In applying the theory above, we start with a subclass of SV models where $M$ and/or $Y$ serve as SVs (shown in Fig.~\ref{fig:shadow-MY}). This section is a review and extension of models in \cite{zuo2024MediationAnalysisMediator}. This prior work proves identifiability for three models -- one where $M$ serves as a SV, one where $Y$ serves as a SV, and one where both serve as SVs. Here we present those models in more general graphical forms that allow the presence of some unobserved common causes, and add two models that apply to the \textit{reverse} and \textit{unordered} settings. Where \citeauthor{zuo2024MediationAnalysisMediator} use \textit{weak completeness}, we add the option of \textit{strong completeness}. We point out the models' testable implications,
which previously were not considered.
We make explicit the tilting functions $h$ (or $h_b$) and $k$ for all models.

We will briefly present the models (labeled Z1 to Z5 to recognize \citeauthor{zuo2024MediationAnalysisMediator}'s work on this model class), and will discuss them all at the end of the section.
The text mainly highlights key points; full derivations are relegated to the Appendix. 
Results are summarized in Table~\ref{tab:shadow-MY}.

To connect to the broader literature, \citeauthor{zuo2024MediationAnalysisMediator}'s (\citeyear{zuo2024MediationAnalysisMediator}) DAGs coincide with some DAGs in \cite{ma2003IdentificationGraphicalModels}. This prior work considers binary variables, does not explicitly discuss completeness, and focuses on a general identification algorithm not limited to mediation.

\begin{figure*}[tp!]
    \centering
    \caption{DAGs of self-connected missingness models where $M$ and/or $Y$ serve as shadow variables. For completeness conditions, see model assumptions in text.}
    \label{fig:shadow-MY}
    \includegraphics[width=.9\textwidth, page=7]{standalone.pdf}
    \bigskip
    
    \captionof{table}{Identification results and testable implications of models Z1-Z5}
    \label{tab:shadow-MY}
    \resizebox{\linewidth}{!}{%
\begin{tabular}{cl@{}c}
    Model
    & Tilting functions*
    & \begin{tabular}{@{}l}
        Testable\\implications
    \end{tabular}
    \\\hline
    \\[-.5em]
    Z1
    &
    \begin{tabular}[t]{@{}l@{}}
        $h=\P(R_M=1\mid X,A)q_{1a}(X,A,M)$,
        \\
        $k=\P(R_Y=1\mid X,A,M,R_M=1)q_{2a}(X,A,Y)$.
    \end{tabular}
    & 
    \ref{T4}, \ref{T5}
    \\[2em]
    Z2
    &
    \begin{tabular}[t]{@{}l@{}}
        $h=\P(R_M=1\mid X,A)\times\begin{cases}
            q_{1a}(X,A,M)~\text{if strong completeness}
            \\
            q_{1b}(X,A,M)~\text{if weak (not strong) completeness}
        \end{cases}$\!\!\!\!,
        \\
        $k=1$.
    \end{tabular}
    &
    \begin{tabular}[t]{@{}l@{}}
        \ref{T4} if strong
        \\
        completeness
    \end{tabular}
    \\[3em]
    Z3
    &
    \begin{tabular}[t]{@{}l@{}}
    $h_b=\E[g_b\mid X,A,M,R_{MY}=1]$, 
    ~~~$k=g_b/h_b$, where
    \\
    $g_b=\mfrac{\P(R_{MY}=1\mid X,A)}{\P(R_M=1\mid X,A,Y,R_Y=1)}\times
    \begin{cases}
        q_{2a}(X,A,Y)~\text{if strong completeness}
        \\
        q_{2b}(X,A,Y)~\text{if weak (not strong) completeness}
    \end{cases}$\!\!\!\!.
    \end{tabular}
    &
    \begin{tabular}[t]{@{}l@{}}
        \ref{T5} if strong
        \\
        completeness
    \end{tabular}
    \\[4em]
    Z4
    & 
    $h=\P(R_M=1\mid X,A)q_{1c}(X,A,M)$,~~~$k=1$.
    &
    \ref{T1}
    \\[1em]
    Z5
    & 
    \begin{tabular}[t]{@{}l@{}}
        $h_b=\P(R_Y=1\mid X,A)\left[1+\text{odds}(R_Y=0\mid X,A,M,R_M=1)\mfrac{\P(R_M=1\mid X,A,R_Y=1)}{\P(R_M=1\mid X,A,R_Y=0)}\right]$,
        \\[.3em]
        $k=q_{2c}(X,A,Y)\Big/\left[1+\text{odds}(R_Y=0\mid X,A,M,R_M=1)\mfrac{\P(R_M=1\mid X,A,R_Y=1)}{\P(R_M=1\mid X,A,R_Y=0)}\right]$.
    \end{tabular}
    &
    \ref{T2}
    \\[-.5em]
    \\\hline
    \multicolumn{3}{l}{*~$q_{1a}$, $q_{2a}$, $q_{1b}$, $q_{2b}$, $q_{1c}$, $q_{2c}$ are respectively defined in (\ref{eq:o1a-defn}), (\ref{eq:o2a-defn}), (\ref{eq:o1b-defn}), (\ref{eq:o2b-defn}), (\ref{eq:o1c-defn}), (\ref{eq:o2c-defn}).}
\end{tabular}%
    }
\end{figure*}

\subsubsection{Model Z1}

This model is essentially \citeauthor{zuo2024MediationAnalysisMediator}'s model 2(d) (with the same model assumption), but the graphical model (see ABCD-Z1 in Fig.~\ref{fig:shadow-MY}) is relaxed to allow $M$ and $R_M$ to share unobserved causes. (\cite{li2023SelfcensoringModelMultivariate} consider the same model as \citeauthor{zuo2024MediationAnalysisMediator}'s model, but for multiple outcomes rather than mediator and outcome.)

Model Z1 is relevant to all four temporal orders because 
it assumes that $R_M$ is not influenced by $Y,R_Y$, and $R_Y$ is not influenced by $M,R_M$, thus putting no restriction on the order of $R_M,R_Y$.

In this model, $M$ and $Y$ are both subject to self-connected missingness, and both serve as SVs. 
(This separates Z1 out from the other models in this class, where only one of variables $M,Y$ is subject to self-connected missingness, and the other variable serves as a SV.)

\paragraph*{Assumption Z1:}
\begin{align}
    &R_M\independent(Y,R_Y)\mid X,A,M,\tag{Z1-1}\label{Z1-1}
    \\
    &R_Y\independent(M,R_M)\mid X,A,Y,\tag{Z1-2}\label{Z1-2}
    \\
    &\P(M,Y\mid X,A,R_{MY}=1)~\text{complete in $M$ and in $Y$}.\tag{Z1-3}\label{Z1-3}
\end{align}

\ref{Z1-3} is a combination of two completeness conditions, as $M$ and $Y$ both serve as SVs. Thus, $M$ and $Y$ must each have sufficient variation for the other within $R_{MY}=1$ and levels of $A,X$. We will revisit this in section \ref{sssec:example-MY}.


\paragraph*{Identification.}

With this model, we consider
\begin{align*}
    h=\mfrac{\P(R_M=1\mid X,A)}{\color{teal}\P(R_M=1\mid X,A,M)},~k=\mfrac{\P(R_Y=1\mid X,A,M,R_M=1)}{\color{purple}\P(R_Y=1\mid X,A,Y,R_M=1)}
\end{align*}
($k$ simplified by \ref{Z1-1}). Here $Y$ is a SV with respect to the missingness in $M$ (within levels of $X,A$) , $R_Y$ is conditionally independent of $R_M$, and the relevant completeness condition holds; this justifies an identifying equation of type~(\ref{eq:shadow-red}). $\color{teal}1/\P(R_M=1\mid X,A,M)$ is thus identified by function $q_{1a}(X,A,M)\geq1$ that solves
\begin{align}
    \label{eq:o1a-defn}1/\P(R_M=1&\mid X,A,Y,R_Y=1)=
    \E[q_{1a}(X,A,M)\mid X,A,Y,R_{MY}=1].
\end{align}
Also, $M$ is a SV with respect to the missingness in $Y$ (conditional on $X,A,R_M=1$), with the relevant completeness condition; this justifies an identifying equation of type~(\ref{eq:shadow}). $\color{purple}1/\P(R_Y=1\mid X,A,Y,R_M=1)$ is thus identified by function $q_{2a}(X,A,Y)\geq1$ that solves
\begin{align}
    \label{eq:o2a-defn}1/\P(R_Y=1&\mid X,A,M,R_M=1)=
    \E[q_{2a}(X,A,Y)\mid X,A,M,R_{MY}=1].
\end{align}

\paragraph*{Testable implications.}
This model has two testable implications based on the assumed conditional independence of $R_M$ and $R_Y$:
\begin{align}
    \tag{T4}\label{T4}1/\P(&R_M=1\mid X,A,R_Y=0)=
    \E[q_{1a}(X,A,M)\mid X,A,R_Y=0,R_M=1],
    \\
    \tag{T5}\label{T5}1/\P(&R_Y=1\mid X,A,R_M=0)=
    \E[q_{2a}(X,A,Y)\mid X,A,R_M=0,R_Y=1].
\end{align}

\subsubsection{Models Z2 and Z3}
\label{sssec:models8-9}

Model Z2, based on \citeauthor{zuo2024MediationAnalysisMediator}'s model 2(e), applies to all four temporal order settings. The graphical model (see ABCD-Z2 in Fig.~\ref{fig:shadow-MY}) is relaxed (compared to \citeauthor{zuo2024MediationAnalysisMediator}'s original) to allow $M$ to share unobserved causes with either $R_M$ or $R_Y$ (but not both), hence the two reduced forms. In this model, $M$ -- in addition to $X,A$ -- drives the missingness of both $M$ and $Y$, $R_M$ and $R_Y$ are conditionally independent, and $Y$ serves as a SV. The completeness condition is expanded as either strong or weak completeness could be used.

\paragraph*{Assumption Z2:}
\begin{align}
    \tag{Z2-1}\label{Z2-1}&R_M,R_Y,Y~\text{mutually independent given}~X,A,M,
    \\
    \text{and either}~&\P(M,Y\mid X,\!A,\!R_{MY}\!=\!1)~\text{complete in $Y$}\tag{Z2-2i}\label{Z2-2i}
    \\
    \tag{Z2-2ii}\label{Z2-2ii}\text{or}~&\P(M,Y^\dagger\mid X,\!A,\!R_M\!=\!1)~\text{complete in $Y^\dagger$},
\end{align}
where $Y^\dagger:=(R_Y,R_YY)$.

We add model Z3 for the \textit{unordered} and \textit{reversed} settings (see CD-Z3 in Fig.~\ref{fig:shadow-MY}) with a mirror assumption. In this model, $Y$ -- in addition to $X,A$ -- drives the missingness of both $M$ and $Y$, $R_M$ and $R_Y$ are conditionally independent, and $M$ serves as a SV.

\paragraph*{Assumption Z3:}
\begin{align}
    \tag{Z3-1}\label{Z3-1}&R_M,R_Y,M~\text{mutually independent given}~X,A,Y,
    \\
    \tag{Z3-2i}\label{Z3-2i}\text{and either}~&\P(M,\!Y\!\mid\! X,\!A,\!R_{MY}\!=\!1)~\text{complete in $M$}
    \\
    \tag{Z3-2ii}\label{Z3-2ii}\text{or}~&\P(M^\dagger,\!Y\mid\! X,\!A,\!R_Y\!=\!1)~\text{complete in $M^\dagger$},
\end{align}
where $M^\dagger:=(R_M,R_MM)$.

Note that while Z2 allows $M$ to share unobserved causes with one of the missing indicators, Z3 does not similarly allow $Y$ to share unobserved causes with missing indicators, because that would make $Y$ a collider between the missing indicator and $M$, thus violating \ref{Z3-1}.

\paragraph*{Identification.}

First, consider model Z2.
\ref{Z2-1} identifies $\P(Y\mid X,A,M)$ by complete cases, i.e., $k=1$. The distribution completeness condition is used to identify $\P(M\mid X,A)$. 

(As pointed out by \citeauthor{zuo2024MediationAnalysisMediator}, if $M\independent Y\mid X,A$ (which under \ref{Z2-1} can be verified via condition $M\independent Y\mid X,A,R_{MY}=1$), completeness fails and $\P(M\mid X,A)$ is not identified, but target effects are still identified; this also applies to model Z4. We put this special case aside.)

For the completeness condition, \citeauthor{zuo2024MediationAnalysisMediator} use \ref{Z2-2ii} (\textit{weak completeness}). We add \ref{Z2-2i} (\textit{strong completeness}) as an option.
If strong completeness holds, 
$h=\P(R_M=1\mid X,A)q_{1a}(X,A,M)$
(like in model Z1), with $q_{1a}(X,A,M)$ already defined in (\ref{eq:o1a-defn}).
If strong completeness does not hold but weak completeness does,
$h=\P(R_M=1\mid X,A)q_{1b}(X,A,M)$, 
where $q_{1b}(X,A,M)\geq1$ and solves the type~(\ref{eq:shadow-red-dagger}) equation
\begin{align}
    \label{eq:o1b-defn}1/\P(R_M&=1\mid X,A,Y^\dagger)=\E[q_{1b}(X,A,M)\mid X,A,Y^\dagger,R_M=1].
\end{align}

For model Z3, identification is via tilting functions $h_b,k$ based on $g_b$.
If strong completeness (\ref{Z3-2i}) holds, then
$g_b=\mfrac{\P(R_{MY}=1\mid X,A)}{\P(R_M=1\mid X,A,Y,R_Y=1)}q_{2a}(X,A,Y)$,
with $q_{2a}(X,A,Y)$ already defined in (\ref{eq:o2a-defn}). If strong completeness does not hold but weak completeness (\ref{Z3-2ii}) does, then $g_b=\mfrac{\P(R_{MY}=1\mid X,A)}{\P(R_M=1\mid X,A,Y,R_Y=1)}q_{2b}(X,A,Y)$, where function $q_{2b}(X,A,Y)\geq1$ and solves
\begin{align}
    \label{eq:o2b-defn}1/\P(R_Y=1&\mid X,A,M^\dagger)=
    \E[q_{2b}(X,A,Y)\mid X,A,M^\dagger,R_Y=1].
\end{align}

\paragraph*{Testable implications.} Under strong completeness, models Z2 and Z3 imply the testable conditions \ref{T4} and \ref{T5}, respectively. These testable conditions are not available under weak completeness.

\subsubsection{Models Z4 and Z5}

Model Z4, which applies to the \textit{in-time} and \textit{delayed} settings, is essentially \citeauthor{zuo2024MediationAnalysisMediator}'s model 2(c) (same model assumption), but the graphical model (see BC-Z4 in Fig.~\ref{fig:shadow-MY}) is relaxed to allow $M$ and $R_M$ to share unobserved common causes. In this model (within levels of $X,A$) $M$ drives its missingness and this missingness drives the missingness in $Y$; and $Y$ serves as a shadow of $M$.

We add model Z5, which applies to the \textit{reverse} setting (see E-Z4 in Fig.~\ref{fig:shadow-MY}), with a mirror assumption. In this model (within levels of $X,A$) $Y$ drives its missingness and this missingness drives the missingness in $M$; and $M$ serves as a shadow of $Y$.

\paragraph*{Assumption Z4:}
\begin{align}
    &(R_M,R_Y)\independent Y\mid X,A,M,\tag{Z4-1}\label{Z4-1}
    \\
    &R_Y\independent M\mid X,A,R_M,\tag{Z4-2}
    \\
    &\P(R_Y=1\mid X,A,R_M=0)>0,\tag{Z4-3}
    \\
    &\P(M,Y\mid X,A,R_{MY}=1)~\text{complete in $Y$}.\tag{Z4-4}
\end{align}

\paragraph*{Assumption Z5:}
\begin{align}
    &(R_M,R_Y)\independent M\mid X,A,Y,\tag{Z5-1}\label{Z5-1}
    \\
    &R_M\independent Y\mid X,A,R_Y,\tag{Z5-2}\label{Z5-2}
    \\
    &\P(R_M=1\mid X,A,R_Y=0)>0,\tag{Z5-3}
    \\
    &\P(M,Y\mid X,A,R_{MY}=1)~\text{complete in $M$}.\tag{Z5-4}
\end{align}

\paragraph*{Testable implications.}
Model Z4 (like S5) implies condition \ref{T1}. Model Z5 (like S6) implies \ref{T2}.

\paragraph*{Identification.}

First consider model Z4. Under \ref{Z4-1}, $k=1$. $h=\mfrac{\P(R_M=1\mid X,A)}{\color{teal}\P(R_M=1\mid X,A,M)}$ is identified because with $Y$ being a SV with respect to the missingness of $M$ (conditional on $X,A$), $R_Y$ separated from $(M,Y)$ by $R_M$ and the relevant completeness condition, $\color{teal}1/\P(R_M=1\mid X,A,M)$ is identified by function $q_{1c}(X,A,M)\geq1$ that solves the type~(\ref{eq:shadow-blue}) equation

\begin{align}
    \label{eq:o1c-defn}
    1\!+\!\mathrm{odds}(R_M\!=\!0\mid X,\!A,\!Y,\!R_Y\!=\!1)\mfrac{\P(R_Y\!=\!1\mid X,\!A,\!R_M\!=\!1)}{\P(R_Y\!=\!1\mid X,\!A,\!R_M\!=\!0)}&
    =\E[{\color{teal}q_{1c}(X,A,M)}\mid X,A,Y,R_{MY}=1].
\end{align}

In model Z5 (specifically due to \ref{Z5-1} and \ref{Z5-2}),
\begin{align*}
    h_b=\mfrac{\P(R_Y=1\mid X,A)}{\color{teal}\P(R_Y=1\mid X,A,M)},~k=\mfrac{\color{teal}\P(R_Y=1\mid X,A,M)}{\color{violet}\P(R_Y=1\mid X,A,Y)}.
\end{align*}
Similar reasoning to above identifies $\color{violet}1/\P(R_Y=1\mid X,A,Y)$ by $q_{2c}(X,A,Y)\geq1$ that solves
\begin{align}
    \label{eq:o2c-defn}
    1\!+\!\odds(R_Y\!=\!0\mid X,\!A,\!M,\!R_M\!=\!1)\mfrac{\P(R_M\!=\!1\mid X,\!A,\!R_Y\!=\!1)}{\P(R_M\!=\!1\mid X,\!A,\!R_Y\!=\!0)}&
    =\E[q_{2c}(X,A,Y)\mid X,A,M,R_{MY}=1].
\end{align}
The LHS of (\ref{eq:o2c-defn}) identifies $\color{teal}1/\P(R_Y=1\mid X,A,M)$ -- by the odds-tilting technique in (\ref{eq:odds-tilt1}).

\subsubsection{Comments on this subclass of models}
\label{sssec:discussion-shadow-MY}

This class provides additional model options: three for the \textit{in-time} and \textit{delayed} settings (Z1, Z2, Z4), three for the \textit{unordered} setting (Z1, Z2, Z3), and four for the \textit{reverse} setting (Z1, Z2, Z3, Z5). 

The reduced forms of these models show a feature shared with the self-separated models: only two connections are present out of the five possible connections for the missingness model. And because these are self-connected missingness models, one (or both) of these connections is a self-connection. This reveals that these models are not relaxations of self-separation. Rather, to allow self-connection, these models restrict some other connections to not exist.

Among those restrictions, one (shared by self-separated models) is that a missing indicator cannot be influenced by both $M$ and $Y$. Earlier that was because the missingness of these variables is self-separated. Here this is to allow one or both variables to serve as SVs.

Other restrictions to be noted concern the connection between $R_M$ and $R_Y$. Importantly, models Z1, Z2, Z3 all require that connection to not exist. In addition, even though Z4 and Z5 have $R_M$ and $R_Y$ causally connected, they do not allow these two variables to share unobserved causes (because such unobserved causes would turn one of these two variables into a collider, breaking a required conditional independence). As mentioned in section~\ref{ssec:discussion-separated}, this can be unrealistic in applications.

In brief, the models just discussed and the self-separated models from section~\ref{sec:separated} represent trade-offs where allowing certain features requires disallowing others. It would be ideal if we could do away with some of those trade-offs. That may be possible if we have additional information from elsewhere, e.g., in the form of SVs that are not $M,Y$. This will be our focus in  sections \ref{ssec:shadow-theory2} to \ref{ssec:shadow-missing}.

\subsubsection{Relating to the school intervention example}
\label{sssec:example-MY}

For this class of models, we focus on this example, as \cite{zuo2024MediationAnalysisMediator} have discussed the other example extensively. As mentioned earlier, a concern is that students who drink may be hesitant to report on drinking, so we need to allow the $Y\to R_Y$ path. In this \textit{in-time} setting, this narrows down to model Z1 (top of Fig.~\ref{fig:shadow-MY}) as the only option. Conveniently, this model also allows self-connected mediator missingness -- due to either direct influence or shared causes -- so this element requires no further consideration.

One restriction of Z1 is that $R_Y$ not share unobserved causes with $R_M$, $M$ and $Y$. (In fact, the whole model class disallows unobserved causes shared by $R_Y$ and either $R_M$ or $Y$.) As the missing indicators $R_M$ and $R_Y$ are months apart, their most important common causes may be stable factors that influence school attendance or absence, such as student mental health, family socio-economic status, family functioning, etc., and perhaps a baseline measure of absenteeism if available. Thus, these variables should be included in covariate set $X$. Common causes of $M$ and $R_Y$ may be a different but likely overlapping set of variables, as they influence the mediator (student attitude and parental rules) in addition to $R_Y$; and common causes of $Y$ and $R_Y$ may be yet another set. Hence an important takeaway is that studies should collect data on things that likely influence missingness of key variables. This has long been considered good practice to help the handling of missing data. What it achieves precisely is to block certain undesirable back-door paths.

A disclaimer: There may be cases where it is determined that $Y$ and $R_Y$ are likely both influenced by certain factors that occur later in the study, and those may be on the causal pathways from $X$, $A$ or $M$ to $Y$. Similarly, $M$ (or $R_M$) and $R_Y$ may share causes that are on the causal pathways from $X$ or $A$ to $M$. Recall from section \ref{subsec:saturated} that these cases fall outside the scope of models considered in this paper. Such causes, especially those that are post-treatment, require separate investigation.

The other restriction of Z1 is that $R_M$ does not influence $R_Y$ and $M$ does not influence $R_Y$ directly. While the former is defensible (the two missing indicators are more likely associated due to common causes), the latter requires careful consideration. If students who drink are more likely to skip the questions about drinking especially if they had previously reported healthy attitudes and strong self-control (i.e., $M$ and $Y$ interact in influencing $R_Y$) then the no $M\to R_Y$ assumption does not hold.

Note that we have focused on model Z1 because it is the only option with self-connected outcome missingness for the in-time setting. If the study asked sensitive questions via an electronic device in privacy, there may still be missingness but there may be less concern that it is driven by the variable itself. That would open up two model options, Z2 and Z4, each with some other flexibility.

Once a model is picked based on the above reasoning about conditional independences, its completeness condition needs to be considered. Meaningful guidance on practical consideration of completeness can be found elsewhere \citep[appendix 2 of][]{miao2023IdentifyingEffectsMultiple,ringlein2025DemystifyingProximalCausal}. Here we just note a rough minimum condition to be considered in practice: the SV should have the same or higher dimensionality than the variable of interest. For model Z1, because $M$ and $Y$ are both SVs for each other, this means $M$ and $Y$ need to have the same dimensionality. If these are categorical variables, they need to have the same number of categories. If $M$ is, say, student attitudes on a Likert scale with three levels, a binary drinking $Y$ does not have sufficient variation, so one could expand to a three-level measurement (e.g., never, occasionally, and frequently drink) if available. If one variable is continuous, the other needs to be continuous. If $Y$ is continuous but $M$ is categorical, $Y$ may have sufficient variation relative to $M$ but $M$ does not have sufficient variation relative to $Y$ and completeness fails.

\subsection{A specific theory: External shadow for a pair of variables}
\label{ssec:shadow-theory2}

The next sections will consider models with SVs that are downstream of $M,Y$ in the causal structure, or on the $M$ to $Y$ causal pathway, or upstream of $M,Y$. To help streamline the presentation of those models, we first extend the theory of section~\ref{ssec:shadow-theory} (which concerns self-connected missingness of a single variable) to address the case where two variables of interest, $V_1,V_2$, are subject to self-connected missingness, and a third variable $Z$ seves as the SV. We refer to $Z$ in this case as an \textit{external SV}. Let $R_1,R_2$ denote the missing indicators and let $R_{12}:=R_1R_2$. Again, we keep any conditioning on covariates implicit. The goal is to identify $\P(R_{12}=1\mid V_1,V_2)$. 

Here we treat $Z$ as fully observed, but missingness in $Z$ that satisfies an assumption similar to mSV can be accommodated (a topic we will revisit later). Also, we talk about $Z$ as a single variable here, but $Z$ can be multivariate. (Using multivariate $Z$ may help with the completeness conditions, but may exacerbate SV missingness.) 

We seek SV models that minimize restrictions on the model for $(R_1,R_2)$, aka \textit{the core missingness model}. As the use of SVs is to deal with self-connected missingness, we will insist on allowing $M$ and $Y$ to influence their own missingness. Restriction on other elements of the core missingness model is allowed if needed to make a SV model work, but is kept to a minimum.

\subsubsection{Three external SV models}

Recall that the recipe for SV-based identification includes a SV independence assumption and a completeness condition. We find three SV models (see Table~\ref{tab:xSV-models}), which we label the \textit{combined}, \textit{parallel} and \textit{sequential} external SV models.
These models have similar SV independence assumptions but different completeness conditions. They therefore differ in terms of their \textit{explicit} restrictions on the core missingness model -- see last column of the table. (There may be \textit{implicit} restrictions required for the SV independence assumption to hold, which we will address later when applying these models to the mediation context with different SV types and different temporal order settings.)

\begin{table*}[]
    \caption{Three external shadow variable ($Z$) models for a pair of variables ($V_1,V_2$) with self-connected missingness ($R_1,R_2$)}
    \label{tab:xSV-models}
    \centering
    \resizebox{\linewidth}{!}{%
    \begin{tabular}{llll}
        & SV independence & completeness & \begin{tabular}{@{}l@{}}
            explicit restriction \\ on core model
        \end{tabular}
        \\\hline
        Model 1
        & $Z\independent(R_1,R_2)\mid V_1,V_2$ 
        &  $\P(V_1,V_2,Z\mid R_{12}=1)$ complete in $Z$ 
        & none
        \\
        (\textit{combined}) & {\footnotesize(weaker version: $Z\independent R_{12}\mid V_1,V_2$)}
        \\\hline
        Model 2
        & $Z\independent(R_1,R_2)\mid V_1,V_2$ 
        & $\P(V_1,Z\mid V_2,R_{12}=1)$ complete in $Z$,
        & $R_1\independent R_2\mid V_1,V_2$
        \\
        (\textit{parallel}) & {\footnotesize(weaker version: {\scriptsize$\begin{matrix}
            Z\independent R_1\mid V_1,V_2,R_2=1\\[-.2em]
            Z\independent R_2\mid V_1,V_2,R_1=1
        \end{matrix}$})} 
        & $\P(V_2,Z\mid V_1,R_{12}=1)$ complete in $Z$
        \\\hline
        Model 3
        & $Z\independent(R_1,R_2)\mid V_1,V_2$
        & $\P(V_1,Z\mid R_1=1)$ complete in $Z$,
        & $R_1\independent V_2\mid V_1$
        \\
        (\textit{sequential}) & {\footnotesize(weaker version: {\scriptsize$\begin{matrix}
            Z\independent R_1\mid V_1,V_2\\[-.2em]
            Z\independent R_2\mid V_1,V_2,R_1=1
        \end{matrix}$})} 
        & $\P(V_2,Z\mid V_1,R_{12}=1)$ complete in $Z$
        \\\hline
    \end{tabular}%
    }
\end{table*}

The technical SV independence assumptions required (labeled \textit{weaker version} in Table~\ref{tab:xSV-models}) vary across these models, but are all implied by the assumption that $Z\independent(R_1,R_2)\mid V_1,V_2$. As this assumption is simpler (and perhaps easier to consider in applications), we put aside the varying weaker versions here. Instead, we focus on the varying completeness conditions, which imply variation in the amount of information contained in the SV.

\subsubsection{Identification under these external SV models}

\paragraph*{Model 1 (combined external SV model).} 
This model is in a sense the simplest: it helps identify the combined missingness probability $\P(R_{12}=1\mid V_1,V_2)$ without factorization and does not require any restriction on the core missingness model. We treat $R_{12}$ as the response indicator of interest and treat $(V_1,V_2)$ combined as the ``variable'' with missingness. Then $Z$ is a SV with the relevant completeness condition, so equality (\ref{eq:shadow}) holds and serves as the identifying equation, which here takes the form
\begin{align}
    \label{eq:extSV1}\odds(R_{12}&=0\mid Z)=
    \E[{\color{purple}\odds(R_{12}=0\mid V_1,V_2)}\mid Z,R_{12}=1].
\end{align}
What allows this simplicity is a strong completeness condition, which requires that there is as much variation in $Z$ as there is in both of $V_1,V_2$ combined (rather than just $V_1$ and/or just $V_2$).

\paragraph*{Model 2 (parallel external SV model).} 
This model places a restriction on the core missingness model: the missingness in $M$ and the missingness in $Y$ are conditionally independent. This gives a symmetric factorization of the response probability:
\begin{align*}
    \P(R_{12}=1\mid V_1,V_2)=
    \P(R_1=1\mid V_1,V_2,R_2=1)\P(R_2=1\mid V_1,V_2,R_1=1).
\end{align*}
This model handles the missingness in the two variables separately, in parallel -- leveraging $Z$ both as a SV with respect to the missingness in $V_1$ (conditional on $V_2,R_2=1$) and as a SV with respect to the missingness in $V_2$ (conditional $V_1,R_1=1$). The completeness condition required is weaker than that in model 1, with two parts: there is as much variation in $Z$ as there is in $V_1$ within levels of $V_2$, and as much variation in $Z$ as in $V_2$ within levels of $V_1$. Then the relevant missingness odds are identified as solutions to the two equations:
\begin{align}
    \label{eq:extSV2-V}\odds(R_1=0\mid Z,V_2,R_2=1)
    &=\E[{\color{purple}\odds(R_1=0\mid V_1,V_2,R_2=1)}\mid Z,V_2,R_{12}=1],
    \\
    \label{eq:extSV2-W}\odds(R_2=0\mid Z,V_1,R_1=1)
    &=\E[{\color{violet}\odds(R_2=0\mid V_1,V_2,R_1=1)}\mid Z,V_1,R_{12}=1].
\end{align}

\paragraph*{Model 3 (sequential external SV model).} 
This model places a different restriction of the core missingness model: $R_1$ and $V_2$ are conditionally independent. This factors the response probability as
\begin{align*}
    \P(R_{12}=1\mid V_1,V_2)=\P(R_1=1\mid V_1)\P(R_2=1\mid V_1,V_2,R_1=1).
\end{align*}
The second probability is identified the same way as in model 2, using the same $V_2$-related completeness condition. The first probability does not condition on $V_2$, so the $V_1$-related completeness condition of model 3 is weaker than that in model 2. It only requires that there is as much variation in $Z$ as in $V_1$ marginally, not within levels of $V_2$. Under this model, the identifying equations are:
\begin{align}
    \odds(R_1=0\mid Z)
    &=\E[{\color{purple}\odds(R_1=0\mid V_1)}\mid Z,R_1=1],
    \\
    \tag{\ref{eq:extSV2-W}}\odds(R_2=0\mid Z,V_1,R_1=1)
    &=
    \E[{\color{violet}\odds(R_2=0\mid V_1,V_2,R_1=1)}\mid Z,V_1,R_{12}=1].
\end{align}

\subsubsection{Testable implication (of model 2)}

The $R_1$-$R_2$ conditional independence in model 2 implies
\begin{align}
    \mfrac{\P(R_1=0,R_2=0)}{\P(R_1=1,R_2=1)}
    &=\E\left[\text{o}_1(V_1,V_2)\,\text{o}_2(V_1,V_2)\mid R_{12}=1\right],
\end{align}
where functions $\text{o}_1(V_1,V_2)$ and $\text{o}_2(V_1,V_2)$ are solutions for $\color{purple}\odds(R_1=0\mid V_1,V_2,R_2=1)$ and $\color{violet}\odds(R_2=0\mid V_1,V_2,R_1=1)$ based on (\ref{eq:extSV2-V}) and (\ref{eq:extSV2-W}).
This testable condition is in the same spirit as \ref{T3} from model S3.

\subsection{Models with auxiliary variables as shadow variables}

We can now use the theory just developed to explore models with SVs that are not $M,Y$. The options are: shadow variables being covariates (and possibly treatment) and shadow variables being auxiliary variables (not $X,A,M,Y$). It is easier to discuss the latter first.

There is a range of auxiliary SV models, where the SV is downstream or upstream of $M,Y$ in the causal structure, or on the $M$-to-$Y$ path. As model assumptions are shared across SV locations, we state them upfront. 

Denote the auxiliary SV by $Z$. There are four different assumptions A1-A4 below (A for \textit{auxiliary}). A1 and A2 are straightforward applications of the combined and parallel external SV models discussed above. A3 and A4 are applications of the sequential external SV model with $(V_1,V_2):=(M,Y)$ and $(V_1,V_2):=(Y,M)$, respectively.

\paragraph*{Assumption A1~~(combined auxiliary SV)}
\begin{align}
    &Z\independent(R_M,R_Y)\mid X,A,M,Y,\tag{Ax-1}\label{A-1}
    \\
    &\P(M,Y,Z\mid X,A,R_{MY}=1)~\text{complete in $Z$}.\tag{A1-2}\label{A1-2}
\end{align}

\paragraph*{Assumption A2~~(parallel auxiliary SV)}
\begin{align}
    & Z\independent(R_M,R_Y)\mid X,A,M,Y,\tag{\ref{A-1}}
    \\
    &\P(M,Z\mid X,A,Y,R_{MY}=1)~\text{complete in $Z$},\tag{A2-2a}\label{A2-2a}
    \\
    &\P(Y,Z\mid X,A,M,R_{MY}=1)~\text{complete in $Z$},\tag{A2-2b}\label{A2-2b}
    \\
    &R_M\independent R_Y\mid X,A,M,Y.\tag{A2-3}\label{A2-0}
\end{align}

\paragraph*{Assumption A3~~(\textsc{my}-sequential auxiliary SV)}
\begin{align}
    &Z\independent (R_M,R_Y)\mid X,A,M,Y,\tag{\ref{A-1}}
    \\
    &\P(M,Z\mid X,A,R_M=1)~\text{complete in $Z$},\tag{A3-2a}\label{A3-2a}
    \\
    &\P(Y,Z\mid X,A,M,R_{MY}=1)~\text{complete in $Z$},\tag{A3-2b}\label{A3-2b}
    \\
    &R_M\independent Y\mid X,A,M.\tag{A3-3}\label{A3-0}
\end{align}

\paragraph*{Assumption A4~~(\textsc{ym}-sequential auxiliary SV)}
\begin{align}
    &Z\independent (R_M,R_Y)\mid X,A,M,Y,\tag{\ref{A-1}}
    \\
    &\P(Y,Z\mid X,A,R_Y=1)~\text{complete in $Z$},\tag{A4-2a}\label{A4-2a}
    \\
    &\P(M,Z\mid X,A,Y,R_{MY}=1)~\text{complete in $Z$},\tag{A4-2b}\label{A4-2b}
    \\
    &R_Y\independent M\mid X,A,Y.\tag{A4-3}\label{A4-0}
\end{align}

We leave out the weaker versions of the SV independence assumption to simplify presentation, so A1-A4 here all share the same first component, \ref{A-1}. The second component is the completeness condition, and the third component is the explicit restriction on the core missingness model.
We treat $Z$ as fully observed here, and will address missingness in $Z$ in section~\ref{ssec:shadow-missing}.

\begin{figure*}[t!]
    \caption{Reduced-form DAGs of models with auxiliary SVs. For completeness conditions, see text.}
    \label{fig:shadow-auxiliary}
    \centering
    \includegraphics[width=.85\textwidth, page=8]{standalone.pdf}
\end{figure*}

For identification results under these assumptions, see Table~\ref{tab:shadow-auxiliary}.
Assumption A2 implies the testable condition:
\begin{align}
    \tag{T6}\label{T6}\mfrac{\P(R_M=0,R_Y=0\mid X,A)}{\P(R_M=1,R_Y=1\mid X,A)}=\E\Big\{[q_{1e}(X,A,M,Y)\!-\!1]\cdot[q_{2e}(X,A,M,Y)\!-\!1]\mid X,A,R_{MY}=1\Big\},
\end{align}
where functions $q_{1e}$ and $q_{2e}$ are defined in the second row of Table~\ref{tab:shadow-auxiliary}. $(q_{1e}-1)$ and $(q_{2e}-1)$ respectively identify $\odds(R_M=0\mid X,A,M,Y,R_Y=1)$ and $\odds(R_Y=0\mid X,A,M,Y,R_M=1)$.

Before using A1-A4 with different types of auxiliary SVs, it helps to retain a conceptual distinction of them. Here we keep implicit the conditioning on $X,A$ in all these models and zoom in on where there is conditioning on $M,R_M$ or $Y,R_Y$. The distinction is: the \textit{combined} SV model A1 requires $Z$ to be the SV for $M,Y$ combined (unconditional); the \textit{parallel} SV model A2 requires $Z$ to be the SV for $M$ within levels of observed $Y$ and for $Y$ within levels of observed $M$ (both conditional); the \textit{sequential} SV  model A3 requires $Z$ to first be the SV for $M$ agnostic of $Y$ and then to be the SV for $Y$ within levels of observed $M$ (unconditional then conditional); and the other \textit{sequential} model A4 requires $Z$ to first be the SV for $Y$ agnostic of $M$ and then to be the SV for $M$ within levels of observed $Y$ (unconditional then conditional).

\subsubsection{Models with auxiliary downstream \texorpdfstring{$Z$}{Z}}
\label{ssec:shadow-downstream}

We label models that satisfy A1-A4 where $Z$ is downstream of $M,Y$ in the causal structure D1-D4 (D for \textit{downstream}). The relevant DAGs specific to the different temporal order settings are shown in the top panel of Fig.~\ref{fig:shadow-auxiliary}.

\begin{table*}[t!]
    \centering
    \caption{Identification results and testable implications of models with auxiliary shadow variables.}
    \label{tab:shadow-auxiliary}
    \resizebox{\linewidth}{!}{%
\begin{tabular}[b]{@{}cl@{}c}
    Model & Tilting functions & \begin{tabular}{@{}c@{}}
        Testable\\implications
    \end{tabular}
    \\\hline
    \\[-.5em]
    A1
    & $h_b=\E[g_b\mid X,A,M,R_{MY}=1]$,~~$k=g_b/h_b$,~~where
    \\[.3em]
    (D1, M1)
    &~~~$g_b=\P(R_{MY}=1\mid X,A)\,q_{12d}(X,A,M,Y)$,
    \\[.3em]
    &~~~$q_{12d}(X,A,M,Y)\geq1$ and solves
    \\[.3em]
    &~~~~~~~$[\P(R_{MY}=1\mid X,A,Z)]^{-1}=\E\left[q_{12d}(X,A,M,Y)\mid X,A,Z,R_{MY}=1\right]$.
    \\[1em]
    A2 
    & $h_b=\E[g_b\mid X,A,M,R_{MY}=1]$,~~$k=g_b/h_b$,~~where
    & \ref{T6}
    \\[.3em]
    (D2, M2)
    &~~~$g_b=\P(R_{MY}=1\mid X,A)\,q_{1e}(X,A,M,Y)q_{2e}(X,A,M,Y)$, 
    \\[.3em]
    &~~~$q_{1e}(X,A,M,Y)\geq1$ and $q_{2e}(X,A,M,Y)\geq1$ and respectively solve
    \\[.3em]
    &~~~~~~~$[\P(R_M=1\mid X,A,Y,Z,R_Y=1)]^{-1}=\E\left[q_{1e}(X,A,M,Y)\mid X,A,Y,Z,R_{MY}=1\right]$ and
    \\[.3em]
    &~~~~~~~$[\P(R_Y=1\mid X,A,M,Z,R_M=1)]^{-1}=\E\left[q_{2e}(X,A,M,Y)\mid X,A,M,Z,R_{MY}=1\right]$.
    \\[1em]
    A3 
    & $h=\E[g\mid X,A,M,R_{MY}=1]$,~~$k=g/h$,~~where
    \\[.3em]
    (D3, M3)
    &~~~$g=\P(R_M=1\mid X,A)q_{1f}(X,A,M)\P(R_Y=1\mid X,A,M,R_M=1)q_{2e}(X,A,M,Y)$,
    \\[.3em]
    &~~~$q_{1f}(X,A,M)\geq1$ and solves
    \\[.3em]
    &~~~~~~~$[\P(R_M=1\mid X,A,Z)]^{-1}=\E[q_{1f}(X,A,M)\mid X,A,Z,R_M=1]$
    \\[.3em]
    &~~~and $q_{2e}$ is defined as above.
    \\[1em]
    A4 
    & $h_b=\E[g_b\mid X,A,M,R_{MY}=1]$,~~$k=g_b/h_b$,~~where
    \\[.3em]
    (D4, M4, U4)
    &~~~$g=\P(R_{MY}=1\mid X,A)q_{1e}(X,A,M,Y)q_{2f}(X,A,Y)$,
    \\[.3em]
    &~~~$q_{1e}$ is defined as above,
    \\[.3em]
    &~~~and $q_{2f}(X,A,Y)\geq1$ and solves
    \\[.3em]
    &~~~~~~~$[\P(R_Y=1\mid X,A,Z)]^{-1}=\E[q_{2f}(X,A,Y)\mid X,A,Z,R_Y=1]$.
    \\[-.7em]
    \\\hline
    \multicolumn{3}{l}{\small Notes: $q$ functions identify inverse response probabilities. Specifically, $q_{12d}=[\P(R_{MY}=1\mid X,A,M,Y)]^{-1}$ under A1;}
    \\
    \multicolumn{3}{l}{\small~~~~~~~~~~~$q_{1e}=[\P(R_M=1\mid X,A,M,Y,R_Y=1)]^{-1}$ under A2/A4; $q_{2e}=[\P(R_Y=1\mid X,A,M,Y,R_M=1)]^{-1}$ under A2/A3;}
    \\
    \multicolumn{3}{l}{\small~~~~~~~~~~~$q_{1f}=[\P(R_M=1\mid X,A,M)]^{-1}$ under A3; $q_{2f}=[\P(R_Y=1\mid X,A,Y)]^{-1}$ under A4.}
\end{tabular}%
    }
\end{table*}

\paragraph*{Model D1.}

This model places no restriction on the core missingness model. The DAGs A$^\dagger$-D1 to D-D1 are our starting models from Fig.~\ref{fig:saturated} augmented with the auxiliary downstream SV. Here we split the \textit{in-time} setting model into two versions, one without and one with the $R_M\to Y$ path (labeled A-D1 and A$^\dagger$-D1) to show how the model looks different depending on this path. A$^\dagger$-D1 is the first option so far for the case where $R_M$ influences $Y$.


\paragraph*{Model D2.}

The $R_M\independent R_Y\mid X,A,M,Y$ assumption of this model implies several things about the core missingness model. Across all settings, it means that $R_M$ and $R_Y$ do not influence each other and do not share unobserved common causes; and that $M$ and $Y$ each can share unobserved causes with either $R_M$ or $R_Y$ but not with both. For the case where $R_M$ influences $Y$ (see A$^\dagger$-D2) specifically, this assumption also means that $Y$ can influence $R_Y$ but does not share unobserved causes with $R_Y$.


\paragraph*{Models D3 and D4.}

Each of these models places a restriction on the core missingness model.
In D3, the $R_M\independent Y\mid X,A,M$ assumption rules out any direct influence and unobserved common causes between $R_M$ and $Y$; this means there is no A$^\dagger$-D3 model. This assumption also means that for the \textit{reverse setting} $R_Y$ is not allowed to influence $R_M$.
In D4, the $R_Y\independent M\mid X,A,Y$ assumption rules out any direct influence and unobserved common causes between $R_Y$ and $M$; and for the \textit{in-time} and \textit{delayed} settings disallows causal influence of $R_M$ on $R_Y$. 

Unlike models D1 and D2 (which require the dependence of $Z$ with $M$ and with $Y$ through the $M\to Z$ and $Y\to Z$ paths to be both strong), there is asymmetry with respect to these two paths in models D3 and D4. Model D3 can work with a weak or even null $M\to Z$ path, as long as the $M\to Y\to Z$ path is strong enough to satisfy the $M$-related completeness condition. Similarly, model D4 can work with a weak or null $Y\to Z$ path. (To signal this asymmetry, for both of these models, we indicate the less essential path by a dashed arrow.)

Related to this last point, \cite{zuo2024MediationAnalysisMediator} consider a model where the SV is a future outcome influenced by $Y$ but not directly by $M$ (see their appendix). That model is a special case of AB-D3 where the $M\to Z$ path is absent (and there are also no unobserved common causes of any variable pair). We note here that there is no need for $Z$ to be conditionally independent of $M$ as depicted in that model (and that AB-D3 accommodates certain unobserved common causes among the variables).
Also, D3 (and M3, C3, introduced shortly) minus unobserved common causes coincides with a DAG considered by \cite{ma2003IdentificationGraphicalModels} in a setting with binary variables.

\subsubsection{Models with auxiliary midstream \texorpdfstring{$Z$}{Z}}
\label{ssec:shadow-midstream}

We label models that satisfy A1-A4 where $Z$ in on the $M$-to-$Y$ causal path M1-M4 (M for \textit{midstream}). The relevant DAGs specific to the different temporal order settings are shown in the middle panel of Fig.~\ref{fig:shadow-auxiliary}.

A key difference between models M1-M4 and the downstream SV models above is that with M1-M4, \textit{there are no arrows into $Y$ in the core missingness model}. This is an \textit{implicit} restriction required for the SV independence assumption \ref{A-1} to hold in the current context where $Z$ emits an arrow into $Y$. This means there are no A$^\dagger$- models in this subclass, and $Y$ does not share unobserved causes with the missing indicators.

\subsubsection{Model with auxiliary upstream \texorpdfstring{$Z$}{Z}}
\label{ssec:shadow-upstream}

An auxiliary upstream SV cannot be a direct cause of both $M$ and $Y$, otherwise it would belong in $(X,A)$ and thus not be auxiliary. Also, an auxiliary variable that is a direct cause of $Y$ but not $M$ would not be useful as a SV for the missingness in $M$. Therefore we have only one model with auxiliary upstream $Z$, U4 (U for \textit{upstream}) satisfying assumption A4 (see the bottom panel of Fig.~\ref{fig:shadow-auxiliary}), where $Z$ is a direct cause of $M$ but not $Y$. ($Z$ can be influenced by $X,A$.) With $Z$ emitting an arrow into $M$, U4 places an implicit no-arrows-into-$M$ restriction on the core missingness model.

\subsubsection{When the auxiliary \texorpdfstring{$Z$}{Z} shares unobserved causes with \texorpdfstring{$M,Y$}{M, Y}}
\label{ssec:shadow-U}

In the models presented above, the auxiliary SV is connected to $M$ and $Y$ via direct causal influence only. We should also consider cases where $Z$ shares unobserved causes with $M,Y$. In these cases, even though model assumptions stay the same, the models have more implicit restrictions on the core missingness model. Specifically, if $Z$ shares unobserved causes with $Y$, the implicit restriction is that there are no arrows into $Y$ in the core missingness model. If $Z$ shares unobserved causes with $M$, the implicit restriction is that there are no arrows into $M$ in the core missingness model.

\subsubsection{Relating to the examples}

When self-separated models and models with mediator or outcome as SVs are ruled out, we ask whether there are auxiliary variables somewhere that might serve as SVs. It may help to think about SVs for each variable with self-connected missingness separately. Candidate SVs should be (i) connected to that variable (and the stronger they reflect that variable, the better),%
\footnote{Related to this point, strategies for selecting proxies in proximal causal inference may be useful \citep[see e.g.,][]{ringlein2025DemystifyingProximalCausal}.}
 but (ii) separate from the missingness.

In a setting like the job training example, where participants are recruited and followed and they are (we assume) the only source of data, we look for SVs from the same and any other of the study's data collection waves. If there are additional follow-up waves beyond 48 months, those might contain candidate downstream SVs for the outcome (earnings), such as earnings at a later time or measures of wealth and expenditure (home ownership or vacations). These same SVs may also work for the mediator (education at 30 months), and education in these later waves may also be used.
A convenience in using downstream SVs is that there is no worry about them influencing $R_Y$ and $R_M$, but a challenge is that they are often subject to a higher level of missingness due to attrition over time (see section~\ref{ssec:shadow-missing} on SV missingness). Any data collection between 30 and 48 months may contain mid-stream SVs, for example education and/or earnings at such in-between time, provided one can argue that they do not influence the missingness.
If there are just the two follow-up waves at 30 and 48 months (when the mediator and outcome are measured), then concurrent data can also be used as SVs, although the causal structure needs to be very carefully considered. Regarding baseline data, again, the key is to find variables that likely do not influence $R_M$ and $R_Y$ but carry information about $M$ and $Y$. In some settings, especially where follow-ups are close in time to baseline, there may be baseline measures of the mediator and/or outcome that might fit this bill. (Note that if these variables are already included in $X$, then they would be covariate SVs instead of auxiliary SVs -- see section~\ref{ssec:shadow-covariate}.) Our job training example is a different case without such baseline measures. Yet one might consider the amount of schooling attained before dropout as a candidate SV.

The school intervention setting is different as it likely has less of an attrition problem, compared to the previous example. Also, in addition to data collected for the study, investigators may seek access to routine school records -- under proper human subjects protection protocols. Grades and teacher ratings of student performance and behavior (before, during and after the study) may be available and useful as auxiliary SVs. This example is also interesting in that there are data from both students and parents; this is a case where measures of the same construct from different sources can serve as SVs for each other. For example, if we are interested specifically in parent-reported parental rules as a mediator (this was a target of the intervention), one strategy is to use student-reported parental rules as an auxiliary mid-stream SV. Similarly, if there is a survey of parents at the end of the study about their children's attitudes and behaviors, those variables may serve as SVs for the student-reported outcome (drinking). And of course, just like in the other example, any data provided by the students about essentially the same variables but at other time points (e.g., baseline attitudes and drinking behavior), or about related behaviors and attitudes at the same time points, can potentially serve as SVs.

Once certain variables have been tentatively chosen as SVs based on the kind of reasoning above, we need to consider whether they fit with one of the auxiliary SV models in Fig.~\ref{fig:shadow-auxiliary} (or Fig.~\ref{fig:shadow-covariate} if the SV is part of $X$ -- see section~\ref{ssec:shadow-covariate}). For example, with a tentative midstream SV, the question is whether it fits models M1 (combined SV), M2 (parallel SV), or M3 or M4 (sequential SV); for the \textit{in-time} setting, see the DAGs in Fig.~\ref{fig:shadow-auxiliary}, middle panel, first column. The plausibility of these models' conditional independence restrictions needs to be evaluated, in the same way we did with earlier models in sections~\ref{ssec:examples-separated} and \ref{sssec:example-MY}.

In addition, the completeness conditions required by these models need to be judged. For example, M1 has the strongest completeness condition (there is as much variation in $Z$ as in the combination of $M,Y$) to compensate for its lighter conditional independence restrictions; to satisfy this completeness may require a set of several auxiliary variables to serve as the SV. The other models, M2-M4, have weaker completeness conditions, but are more restrictive in terms of conditional independences.


\subsection{Models with covariates serving as shadow variables}
\label{ssec:shadow-covariate}

Now we turn to the last model subclass where the SV is contained in the $\{X,A\}$ set. We refer to $Z$ in this case as a \textit{covariate SV} (because it needs to include elements of $X$ and cannot be just $A$, to carry enough information), and label these models C1-C4 (C for \textit{covariate}). 
Denote the remainder of $\{X,A\}$ by $W$. That is, $\{X,A\}$ and $\{W,Z\}$ are the same set of variables. 
The assumptions of these models remind of A1-A4, but while A1-A4 condition $X,A$, C1-C4 instead condition on $W$.

\begin{figure*}[hp!]
    \caption{DAGs of models where baseline covariates $Z$ serve as shadow variables (shown conditional on $W:=(X,A)\setminus Z$). For completeness conditions, see model assumptions in text.}
    \label{fig:shadow-covariate}
    \centering
    \includegraphics[width=.7\textwidth, page=9]{standalone.pdf}

    \bigskip
    
    \captionof{table}{Identification results and testable implications of models with covariate SVs}
    \label{tab:shadow-covariate}
    \resizebox{\linewidth}{!}{%
\begin{tabular}[b]{cl@{}c}
    Model 
    & Tilting functions 
    & \begin{tabular}{@{}c@{}}
        Testable\\implications
    \end{tabular}
    \\\hline
    \\[-.5em]
    C1
    & $h_b=\E[g_b\mid W,Z,M,R_{MY}=1]$,~~$k=g_b/h_b$,~~where
    \\[.3em]
    &~~~$g_b=\P(R_{MY}=1\mid W,Z)\,q_{12d}^*(W,M,Y)$,
    \\[.3em]
    &~~~$q_{12d}^*(W,M,Y)\geq1$ and solves 
    \\[.3em]
    &~~~~~~~$[\P(R_{MY}=1\mid W,Z)]^{-1}=\E\left[q_{12d}^*(W,M,Y)\mid W,Z,R_{MY}=1\right]$.
    \\[1em]
    C2 
    & $h_b=\E[g_b\mid W,Z,M,R_{MY}=1]$,~~$k=g_b/h_b$,~~where
    & \ref{T7}
    \\[.3em]
    &~~~$g_b=\P(R_{MY}=1\mid W,Z)\,q_{1e}^*(W,M,Y)q_{2e}^*(W,M,Y)$, 
    \\[.3em]
    &~~~$q_{1e}^*(W,M,Y)\geq1$ and $q_{2e}^*(W,M,Y)\geq1$ and respectively solve
    \\[.3em]
    &~~~~~~~$[\P(R_M=1\mid W,Z,Y,R_Y=1)]^{-1}=\E\left[q_{1e}^*(W,M,Y)\mid W,Z,Y,R_{MY}=1\right]$ and~~~~~~~~
    \\[.3em]
    &~~~~~~~$[\P(R_Y=1\mid W,Z,M,R_M=1)]^{-1}=\E\left[q_{2e}^*(W,M,Y)\mid W,Z,M,R_{MY}=1\right]$.
    \\[1em]
    C3 
    & $h=\E[g\mid W,Z,M,R_{MY}=1]$,~~$k=g/h$,~~where
    \\[.3em]
    &~~~$g=\P(R_M=1\mid W,Z)q_{1f}^*(W,M)\P(R_Y=1\mid W,Z,M,R_M=1)q_{2e}^*(W,M,Y)$,
    \\[.3em]
    &~~~$q_{2e}^*$ is defined as above, and $q_{1f}^*(W,M)\geq1$ and solves
    \\[.3em]
    &~~~~~~~$[\P(R_M=1\mid W,Z)]^{-1}=\E[q_{1f}^*(W,M)\mid W,Z,R_M=1]$.
    \\[1em]
    C4 
    & $h_b=\E[g_b\mid W,Z,M,R_{MY}=1]$,~~$k=g_b/h_b$,~~where
    \\[.3em]
    &~~~$g=\P(R_{MY}=1\mid W,Z)q_{1e}^*(W,M,Y)q_{2f}^*(W,Y)$,
    \\[.3em]
    &~~~$q_{1e}^*$ is defined as above, and $q_{2f}^*(W,Y)\geq1$ and solves
    \\[.3em]
    &~~~~~~~$[\P(R_Y=1\mid W,Z)]^{-1}=\E[q_{2f}^*(W,Y)\mid W,Z,R_Y=1]$.
    \\[-.7em]
    \\\hline
    \multicolumn{3}{l}{\small Notes: $q^*$ functions identify inverse response probabilities. Specifically, $q_{12d}^*=[\P(R_{MY}=1\mid W,M,Y)]^{-1}$ under C1;}
    \\
    \multicolumn{3}{l}{\small~~~~~~~~~~$q_{1e}^*=[\P(R_M=1\mid W,M,Y,R_Y=1)]^{-1}$ under C2/C4; $q_{2e}^*=[\P(R_Y=1\mid W,M,Y,R_M=1)]^{-1}$ under C2/C3;}
    \\
    \multicolumn{3}{l}{\small~~~~~~~~~~$q_{1f}^*=[\P(R_M=1\mid W,M)]^{-1}$ under C3; $q_{2f}^*=[\P(R_Y=1\mid W,Y)]^{-1}$ under C4.}
\end{tabular}%
    }
\end{figure*}

\paragraph*{Assumption C1~~(combined covariate SV)}
\begin{align}
    &Z\independent(R_M,R_Y)\mid W,M,Y,\tag{Cx-1}\label{C-1}
    \\
    &\P(M,Y,Z\mid W,R_{MY}=1)~\text{complete in $Z$}.\tag{C1-2}
\end{align}

\paragraph*{Assumption C2~~(parallel covariate SV)}
\begin{align}
    & Z\independent(R_M,R_Y)\mid W,M,Y,\tag{\ref{C-1}}
    \\
    &\P(M,Z\mid W,Y,R_{MY}=1)~\text{complete in $Z$},\tag{C2-2a}
    \\
    &\P(Y,Z\mid W,M,R_{MY}=1)~\text{complete in $Z$},\tag{C2-2b}
    \\
    &R_M\independent R_Y\mid W,M,Y.\tag{C2-3}
\end{align}

\paragraph*{Assumption C3~~(\textsc{my}-sequential covariate SV)}
\begin{align}
    &Z\independent (R_M,R_Y)\mid W,M,Y,\tag{\ref{C-1}}
    \\
    &\P(M,Z\mid W,R_M=1)~\text{complete in $Z$},\tag{C3-2a}
    \\
    &\P(Y,Z\mid W,M,R_{MY}=1)~\text{complete in $Z$},\tag{C3-2b}
    \\
    &R_M\independent Y\mid W,M.\tag{C3-3}
\end{align}

\paragraph*{Assumption C4~~(\textsc{ym}-sequential covariate SV)}
\begin{align}
    &Z\independent (R_M,R_Y)\mid W,M,Y,\tag{\ref{C-1}}
    \\
    &\P(Y,Z\mid W,R_Y=1)~\text{complete in $Z$},\tag{C4-2a}
    \\
    &\P(M,Z\mid W,Y,R_{MY}=1)~\text{complete in $Z$},\tag{C4-2b}
    \\
    &R_Y\independent M\mid W,Y.\tag{C4-3}
\end{align}

DAGs for these models are shown in Fig.~\ref{fig:shadow-covariate}. As $Z$ influences $M$ and $Y$, models C1-C4 inherit both of the above-mentioned implicit restrictions on the core missingness model: no arrows in to $Y$ and no arrows into $M$.

Identification results are provided in Table~\ref{tab:shadow-covariate}.
Model C2 has the testable implication:
\begin{align*}
    \tag{T7}\label{T7}\frac{\P(R_M=0,R_Y=0\mid W)}{\P(R_M=1,R_Y=1\mid W)}=
    \E\Big\{[q_{1e}^*(W,M,Y)-1]\cdot&[q_{2e}^*(W,M,Y)-1]
    \mid W,R_{MY}=1\Big\},
\end{align*}
where functions $q_{1e}^*$ and $q_{2e}^*$ are defined in the second row of Table~\ref{tab:shadow-covariate}. $(q_{1e}^*-1)$ and $(q_{2e}^*-1)$ respectively identify $\odds(R_M=0\mid W,M,Y,R_Y=1)$ and $\odds(R_Y=0\mid W,M,Y,R_M=1)$.

\begin{figure*}[t!]
    \centering
    \caption{Auxiliary/covariate shadow variable models extended to accommodate missingness in the shadow variable. Each graph represents two models: one with blue but not gray arrows and one with gray but not blue arrows. The gray square containing $M,Y,R_M,R_Y$ represents the core missingness model from D1-D4, M1-M4, U4, C1-C4 in Fig.~\ref{fig:shadow-auxiliary} and Fig.~\ref{fig:shadow-covariate}. Auxiliary $Z$ models are shown conditional on $X,A$; covariate $Z$ models are shown conditional on $W:=(X,A)\setminus Z$.}
    \label{fig:shadow-missing}
    \includegraphics[width=.9\textwidth,page=10]{standalone.pdf}
\end{figure*}

In all these results, we use $W,Z$ (rather than $X,A$) language, as it better differentiates variables that serve as the SV ($Z$) and remaining variables ($W$). Note though that conditioning on $W,Z$ is the same as conditioning on $X,A$, i.e., $\P(R_{MY}=1\mid W,Z)$ is $\P(R_{MY}=1\mid X,A)$.

\subsection{Missingness in auxiliary/covariate shadow variable}
\label{ssec:shadow-missing}

To complete the story, we now address missingness in the auxiliary or covariate SV. Above we mention that such missingness that satisfies a similar assumption to mSV can be accommodated. To be more precise, identifiability is preserved (although identification result needs to be modified) when missingness in $Z$ satisfies assumption mSV for each instance where $Z$ is used as a SV. For example, with models D1 and M1 (which share assumption A1), $Z$ is used as a SV with respect to the missingness in $M,Y$ combined ($R_{MY}$) conditional on $X,A$. The missingness in $Z$ in this case needs to satisfy the following version of assumption mSV:
\begin{align*}
    \P(&R_Z=1,R_{MY}=1\mid X,A,M,Y,Z)>0,
    \\
    \text{and either}~&R_Z\independent(M,Y,Z)\mid X,A,R_{MY}
    \\
    \text{or}~&R_Z\independent R_{MY}\mid X,A,M,Y,Z.
\end{align*}
With models D3 and M3 (which share assumption A3), $Z$ is used first as a SV with respect to the missingness in $M$ ($R_M$) conditional on $X,A$, and second as a SV with respect to the missingness in $Y$ ($R_Y$) conditional on $X,A,M,R_M=1$. The missingness in $Z$ in this case needs to satisfy a double version of assumption mSV:
\begin{align*}
    \P\big(&R_Z=1,R_M=1\mid M,Z,(X,A)\big)>0,
    \\
    \P\big(&R_Z=1,R_Y=1\mid Y,Z,(X,A,M,R_M=1)\big)>0,
    \\
    \text{either}~&R_Z\independent(M,Z)\mid R_M,(X,A)
    \\
    \text{or}~&R_Z\independent R_M\mid M,Z,(X,A),
    \\
    \text{and either}~&R_Z\independent(Y,Z)\mid R_Y,(X,A,M,R_M=1)
    \\
    \text{or}~&R_Z\independent R_Y\mid Y,Z,(X,A,M,R_M=1).
\end{align*}
For the sake of length, we list all the mSV-type assumptions to be paired with model assumptions A1-A4 and C1-C4 in the Appendix together with the modified identification results. Here we comment on some features of these SV missingness models.

Fig.~\ref{fig:shadow-missing} shows the extension of the auxiliary/covariate shadow variable models from Fig.~\ref{fig:shadow-auxiliary} and Fig.~\ref{fig:shadow-covariate} to accommodate missingness in $Z$. To avoid visual clutter and focus on the SV missingness component, we represent the core missingness model component (which varies across models, see Fig.~\ref{fig:shadow-auxiliary} and Fig.~\ref{fig:shadow-covariate}) by a gray square containing $M,Y,R_M,R_Y$. Each diagram in Fig.~\ref{fig:shadow-missing} is intended to convey two models: one including the blue but not gray arrows, the other including the gray but not blue arrows; these reflect the either-or part in assumption mSV. 

Several points to note about connections between $R_Z$ and $R_M,R_Y$ in these models. First, there are no double-headed arrows between $R_Z$ and $R_M,R_Y$, meaning no unobserved cause of mediator or outcome missingness is also a cause of SV missingness. Second, all the blue arrows go in one direction, into $R_Z$, which means $R_Z$ does not influence $R_M,R_Y$. Third, while models in the top row allow two blue arrows from both $R_M$ and $R_Y$, those in the lower two rows only allow one of these two blue arrows. This is because the latter are sequential SV models. Specifically, models in the second row rely on $Z$ being a SV for $M$ (unconditional on $Y,R_Y$), so a blue $R_Y\to R_Z$ arrow if present would result in $R_Z$ being connected to both $(Z,M)$ and $R_M$, which is not allowed. The explanation is similar for models in the third row.
Fourth, the temporal location of $R_Z$ may rule out some connections and simplify the story. For example, in covariate SV models where $Z$ precedes $M,Y$ in time, if (as typically is the case) the measurement of $Z$ also precedes the measurement of $M,Y$, then blue arrows from $R_M,R_Y$ to $R_Z$ do not exist; however, the model still caries the important constraint that the missingness in $Z$ does not cause and does not share unobserved causes with the missingness in $M$ and $Y$. Similarly, in auxiliary midstream SV models, if the measurement of $Z$ precedes the measurement of $Y$, then the blue arrow from $R_Y$ to $R_Z$ is ruled out.

A couple of points about the gray arrows. First, some models include $R_Z$-$Y$ connections with arrows into $Y$; others include $R_Z$-$M$ connections with arrows into $M$; and other models include neither. The absence of such arrows simply reflects that the core missingness model includes arrows into that variable ($Y$ or $M$). If such arrows do not exist in the core missingness model, the SV missingness model can include arrows into the variable ($Y$ or $M$). Second, a related point is that for C1-C4, there are two options with arrows into either $Y$ or $M$; this is because the core missingness model does not include arrows into either of these variables. Third, if $Z$ is measured retrospectively after the occurrence of $M$ or $Y$, then the arrow from $R_Z$ to $M$ or $Y$ does not exist.


\section{Discussion}
\label{sec:conclusion}

For a setting with mediator and outcome missingness, we have explored model options within two missingness model classes: self-separated missingness models (S1-S5), and self-connected missingness models with built-in (Z1-Z5, C1-C4) or auxiliary (U3-U4, M1-M4, D1-D4) SVs. 
Under these models, the two conditional distributions that play an important role in many causal mediation estimands, $\P(M\mid X,A)$ and $\P(Y\mid X,A,M)$ are identified. We have also pointed out conditions on the missingness in the SV that preserve identifiability.

With the two examples, we walked through considerations of the plausibility of the restrictions of different models and potential SVs for handling self-connected missingness. This thought exercise obtained many possibilities under different scenarios of data availability. It makes clear that the utility of these models in practice depends on domain-specific subject-matter knowledge and intimate familiarity with study design and data content, and would benefit from awareness of auxiliary data sources. Also, these considerations should start in the study planning stage, so that data collection can be designed such that useful variables are available to help handle self-connected missingness in key variables (here mediators and outcomes), either through blocking problematic back-door paths or serving as shadow variables.

A next task is to develop methods for estimation and inference.
To support this, we have presented all identification results in the form of tilting functions, which are likely useful for non-response weighting strategies. (We resisted the temptation to simplify expressions in results when that would obscure the tilting functions.) Much remains to be investigated, however. As the identification result does not touch certain data components (a simple example under S1 is outcome data from study participants whose outcome is observed but mediator is not), a question is how to make use of all the information in the data, or even what other information is available at all, and under what conditions. Answering this question may also shed light on how to use an imputation approach where all observed data is used and missing data is to be imputed. Aside from these general questions, there are questions about the statistical models to be used and their varying properties, spanning parametric strategies \citep[as used in][]{zuo2024MediationAnalysisMediator} and semi-parametric strategies \citep[as used in][]{miao2024IdentificationSemiparametricEfficiency}. While there is an immediate need for methods that target the estimands (mediation effects) in this setting, there is an appeal for tools that estimate self-connected response probabilities specifically, as those would be useful in different use cases.

Theory-wise, we believe that our advancement of shadow variable theory (covering two variables with missingness and allowing missingness in the shadow variable) is generally beneficial and not limited to the current mediation problem, as it can be extended easily to settings with more than two time points. The odds-tilting techniques, which seem to be small tricks, are quite useful. It is interesting to note that our derivations for both self-separated and self-connected missingness models center on the missingness odds. It is possible that this is just a random result of us finding that path among many other possible paths. Or it might mean that the missingness odds has a special place in the problem of missing data.

\bibliography{refs}

@article{miao2023IdentifyingEffectsMultiple,
	title = {Identifying {Effects} of {Multiple} {Treatments} in the {Presence} of {Unmeasured} {Confounding}},
	volume = {118},
	doi = {10.1080/01621459.2021.2023551},
	number = {543},
	journal = {Journal of the American Statistical Association},
	author = {Miao, Wang and Hu, Wenjie and Ogburn, Elizabeth L. and Zhou, Xiao-Hua},
	year = {2023},
	pages = {1953--1967},
}

@misc{ringlein2025DemystifyingProximalCausal,
	title = {Demystifying {Proximal} {Causal} {Inference}},
	publisher = {arXiv},
	author = {Ringlein, Grace V. and Nguyen, Trang Quynh and Zandi, Peter P. and Stuart, Elizabeth A. and Parikh, Harsh},
	year = {2025},
	note = {arXiv:2512.24413},
}

@article{miao2025StablenessResistanceModel,
	title = {A stableness of resistance model for nonresponse adjustment with callback data},
	volume = {87},
	doi = {10.1093/jrsssb/qkae097},
	number = {2},
	journal = {Journal of the Royal Statistical Society Series B: Statistical Methodology},
	author = {Miao, Wang and Li, Xinyu and Zhang, Ping and Sun, Baoluo},
	year = {2025},
	pages = {433--456},
}

@article{zhang2023ProximalCausalInferencea,
	title = {Proximal causal inference without uniqueness assumptions},
	volume = {198},
	doi = {10.1016/j.spl.2023.109836},
	journal = {Statistics \& Probability Letters},
	author = {Zhang, Jeffrey and Li, Wei and Miao, Wang and Tchetgen Tchetgen, Eric},
	year = {2023},
	pages = {109836},
}

@article{li2023NonparametricInferenceMean,
	title = {Non-parametric inference about mean functionals of non-ignorable non-response data without identifying the joint distribution},
	volume = {85},
	doi = {10.1093/jrsssb/qkad047},
	number = {3},
	journal = {Journal of the Royal Statistical Society Series B: Statistical Methodology},
	author = {Li, Wei and Miao, Wang and Tchetgen Tchetgen, Eric},
	year = {2023},
	pages = {913--935},
}

@article{ma2003IdentificationGraphicalModels,
	title = {Identification of graphical models for nonignorable nonresponse of binary outcomes in longitudinal studies},
	volume = {87},
	doi = {10.1016/S0047-259X(03)00043-5},
	number = {1},
	journal = {Journal of Multivariate Analysis},
	author = {Ma, Wen-Qing and Geng, Zhi and Hu, Yong-Hua},
	year = {2003},
	pages = {24--45},
}

@article{huber2020DirectIndirectEffectsa,
	title = {Direct and {Indirect} {Effects} under {Sample} {Selection} and {Outcome} {Attrition}},
	volume = {8},
	doi = {10.3390/econometrics8040044},
	language = {en},
	number = {4},
	journal = {Econometrics},
	author = {Huber, Martin and Solovyeva, Anna},
	year = {2020},
	pages = {44},
}

@article{zhou2010BlockConditionalMissingRandom,
	title = {Block-{Conditional} {Missing} at {Random} {Models} for {Missing} {Data}},
	volume = {25},
	doi = {10.1214/10-STS344},
	language = {en},
	number = {4},
	journal = {Statistical Science},
	author = {Zhou, Yan and Little, Roderick J. A. and Kalbfleisch, John D.},
	year = {2010},
}

@article{schochet2021LongRunLaborMarket,
	title = {Long-{Run} {Labor} {Market} {Effects} of the {Job} {Corps} {Program}: {Evidence} from a {Nationally} {Representative} {Experiment}},
	volume = {40},
	doi = {10.1002/pam.22233},
	language = {en},
	number = {1},
	journal = {Journal of Policy Analysis and Management},
	author = {Schochet, Peter Z.},
	year = {2021},
	pages = {128--157},
}

@article{schochet2008DoesJobCorps,
	title = {Does {Job} {Corps} {Work}? {Impact} {Findings} from the {National} {Job} {Corps} {Study}},
	volume = {98},
	doi = {10.1257/aer.98.5.1864},
	language = {en},
	number = {5},
	journal = {American Economic Review},
	author = {Schochet, Peter Z. and Burghardt, John and McConnell, Sheena},
	year = {2008},
	pages = {1864--1886},
}

@article{koning2011LongTermEffectsParent,
	title = {Long-{Term} {Effects} of a {Parent} and {Student} {Intervention} on {Alcohol} {Use} in {Adolescents}: {A} {Cluster} {Randomized} {Controlled} {Trial}},
	volume = {40},
	doi = {10.1016/j.amepre.2010.12.030},
	number = {5},
	journal = {American Journal of Preventive Medicine},
	author = {Koning, Ina M. and van den Eijnden, Regina J. and Verdurmen, Jacqueline E. and Engels, Rutger C. and Vollebergh, Wilma A.},
	year = {2011},
	pages = {541--547},
}

@article{koning2011WhyTargetEarly,
	title = {Why target early adolescents and parents in alcohol prevention? {The} mediating effects of self-control, rules and attitudes about alcohol use: {Mediating} effects in an alcohol intervention},
	volume = {106},
	doi = {10.1111/j.1360-0443.2010.03198.x},
	language = {en},
	number = {3},
	journal = {Addiction},
	author = {Koning, Ina M. and van den Eijnden, Regina J. J. M. and Engels, Rutger C. M. E. and Verdurmen, Jacqueline E. E. and Vollebergh, Wilma A. M.},
	year = {2011},
	pages = {538--546},
}

@article{koning2009PreventingHeavyAlcohol,
	title = {Preventing heavy alcohol use in adolescents ({PAS}): cluster randomized trial of a parent and student intervention offered separately and simultaneously},
	volume = {104},
	doi = {10.1111/j.1360-0443.2009.02677.x},
	language = {en},
	number = {10},
	journal = {Addiction},
	author = {Koning, Ina M. and Vollebergh, Wilma A. M. and Smit, Filip and Verdurmen, Jacqueline E. E. and Van Den Eijnden, Regina J. J. M. and Ter Bogt, Tom F. M. and Stattin, Håkan and Engels, Rutger C. M. E.},
	year = {2009},
	pages = {1669--1678},
}

@article{lehmann_completeness_1950,
	title = {Completeness, {Similar} {Regions}, and {Unbiased} {Estimation}-{Part} {I}},
	volume = {10},
	doi = {10.1007/978-1-4614-1412-4_23},
	language = {en},
	number = {4},
	journal = {Sankhyā: The Indian Journal of Statistics},
	author = {Lehmann, E. L. and Scheffé, Henry},
	year = {1950},
	pages = {305--340},
}

@article{hejazi2023NonparametricCausalMediation,
	title = {Nonparametric causal mediation analysis for stochastic interventional (in)direct effects},
	volume = {24},
	doi = {10.1093/biostatistics/kxac002},
	number = {3},
	journal = {Biostatistics},
	author = {Hejazi, Nima S and Rudolph, Kara E and Van Der Laan, Mark J and Díaz, Iván},
	year = {2023},
	pages = {686--707},
}

@inproceedings{didelez2006DirectIndirectEffects,
	title = {Direct and {Indirect} {Effects} of {Sequential} {Treatments}},
	doi = {10.48550/arXiv.1206.6840},
	booktitle = {Proceedings of the {Twenty}-{Second} {Conference} on {Uncertainty} in {Artificial} {Intelligence}},
	author = {Didelez, Vanessa and Dawid, Philip and Geneletti, Sara},
	year = {2006},
}

@article{vanderweele2014EffectDecompositionPresence,
	title = {Effect {Decomposition} in the {Presence} of an {Exposure}-{Induced} {Mediator}-{Outcome} {Confounder}:},
	volume = {25},
	doi = {10.1097/EDE.0000000000000034},
	language = {en},
	number = {2},
	journal = {Epidemiology},
	author = {VanderWeele, Tyler J. and Vansteelandt, Stijn and Robins, James M.},
	year = {2014},
	pages = {300--306},
}

@incollection{pearl2001DirectIndirectEffects,
	title = {Direct and {Indirect} {Effects}},
	booktitle = {Proceedings of the seventeenth conference on uncertainty and artificial intelligence},
	author = {Pearl, Judea},
	year = {2001},
	pages = {411--420},
}

@article{robins1992IdentifiabilityExchangeabilityDirect,
	title = {Identifiability and {Exchangeability} for {Direct} and {Indirect} {Effects}},
	volume = {3},
	doi = {10.1097/00001648-199203000-00013},
	language = {en},
	number = {2},
	journal = {Epidemiology},
	author = {Robins, James M. and Greenland, Sander},
	year = {1992},
	pages = {143--155},
}

@article{cole2009ConsistencyStatementCausal,
	title = {The {Consistency} {Statement} in {Causal} {Inference}: {A} {Definition} or an {Assumption}?},
	volume = {20},
	doi = {10.1097/EDE.0b013e31818ef366},
	language = {en},
	number = {1},
	journal = {Epidemiology},
	author = {Cole, Stephen R. and Frangakis, Constantine E.},
	year = {2009},
}

@misc{srinivasan2023GraphicalModelsEntangleda,
	title = {Graphical {Models} of {Entangled} {Missingness}},
	url = {http://arxiv.org/abs/2304.01953},
	urldate = {2024-11-08},
	publisher = {arXiv},
	author = {Srinivasan, Ranjani and Bhattacharya, Rohit and Nabi, Razieh and Ogburn, Elizabeth L. and Shpitser, Ilya},
	year = {2023},
	note = {arXiv:2304.01953},
}

@article{baron1986ModeratorMediatorVariableDistinction,
	title = {The {Moderator}-{Mediator} {Variable} {Distinction} in {Social} {Psychological} {Research}: {Conceptual}, {Strategic}, and {Statistical} {Considerations}},
	volume = {51},
	language = {en},
	number = {6},
	journal = {Journal of Personality and Social Psychology},
	author = {Baron, Reuben M and Kenny, David A},
	year = {1986},
	pages = {1173--1182},
}

@article{sun2018semiparametric,
	title={Semiparametric estimation with data missing not at random using an instrumental variable},
	author={Sun, BaoLuo and Liu, Lan and Miao, Wang and Wirth, Kathleen and Robins, James and Tchetgen, Eric J Tchetgen},
	journal={Statistica Sinica},
	volume={28},
	number={4},
	pages={1965},
	year={2018},
	publisher={NIH Public Access}
}

@article{nguyen2016CausalMediationAnalysis,
	title = {Causal mediation analysis with a binary outcome and multiple continuous or ordinal mediators: simulations and application to an alcohol intervention},
	volume = {23},
	doi = {10.1080/10705511.2015.1062730},
	language = {en},
	number = {3},
	journal = {Structural Equation Modeling: A Multidisciplinary Journal},
	author = {Nguyen, Trang Quynh and Webb-Vargas, Yenny and Koning, Ina M. and Stuart, Elizabeth A.},
	year = {2016},
	pages = {368--383},
}

@article{qin2019MultisiteCausalMediation,
	title = {Multisite causal mediation analysis in the presence of complex sample and survey designs and non‐random non‐response},
	volume = {182},
	doi = {10.1111/rssa.12446},
	language = {en},
	number = {4},
	journal = {Journal of the Royal Statistical Society: Series A (Statistics in Society)},
	author = {Qin, Xu and Hong, Guanglei and Deutsch, Jonah and Bein, Edward},
	year = {2019},
	pages = {1343--1370},
}

@article{zhang2013MethodsMediationAnalysis,
	title = {Methods for mediation analysis with missing data},
	volume = {78},
	doi = {10.1007/s11336-012-9301-5},
	language = {en},
	number = {1},
	urldate = {2021-12-10},
	journal = {Psychometrika},
	author = {Zhang, Zhiyong and Wang, Lijuan},
	year = {2013},
	pages = {154--184},
}

@misc{dashti2024HandlingMultivariableMissing,
	title = {Handling multivariable missing data in causal mediation analysis},
	publisher = {arXiv},
	author = {Dashti, S. Ghazaleh and Lee, Katherine J. and Simpson, Julie A. and Carlin, John B. and Moreno-Betancur, Margarita},
	year = {2024},
	note = {arXiv:2403.17396},
}

@article{li2017IdentifiabilityEstimationCausal,
	title = {Identifiability and estimation of causal mediation effects with missing data},
	volume = {36},
	doi = {10.1002/sim.7413},
	language = {en},
	number = {25},
	journal = {Statistics in Medicine},
	author = {Li, Wei and Zhou, Xiao-Hua},
	year = {2017},
	pages = {3948--3965},
}

@article{lee2020NuisanceMediatorsMissing,
	title = {Nuisance mediators and missing data in mediation analyses of pain trials},
	volume = {24},
	doi = {10.1002/ejp.1637},
	language = {en},
	number = {9},
	journal = {European Journal of Pain},
	author = {Lee, Hopin},
	year = {2020},
	pages = {1651--1652},
}

@article{yang2019CausalInferenceConfounders,
	title = {Causal inference with confounders missing not at random},
	volume = {106},
	doi = {10.1093/biomet/asz048},
	number = {4},
	journal = {Biometrika},
	author = {Yang, S and Wang, L and Ding, P},
	year = {2019},
	pages = {875--888},
}

@misc{miao2015IdentificationDoublyRobusta,
	title = {Identification, {Doubly} {Robust} {Estimation}, and {Semiparametric} {Efficiency} {Theory} of {Nonignorable} {Missing} {Data} {With} a {Shadow} {Variable}},
	publisher = {arXiv},
	author = {Miao, Wang and Liu, Lan and Tchetgen, Eric Tchetgen and Geng, Zhi},
	year = {2015},
	note = {arXiv:1509.02556},
}

@article{wang2014InstrumentalVariableApproach,
	title = {An {Instrumental} {Variable} {Approach} for {Identification} and {Estimation} with {Nonignorable} {Nonresponse}},
	volume = {24},
	number = {3},
	journal = {Statistica Sinica},
	author = {Wang, Sheng and Shao, Jun and Kim, Jae Kwang},
	year = {2014},
	pages = {1097--1116},
}

@article{dhaultfoeuille2010NewInstrumentalMethod,
	title = {A new instrumental method for dealing with endogenous selection},
	volume = {154},
	doi = {10.1016/j.jeconom.2009.06.005},
	number = {1},
	journal = {Journal of Econometrics},
	author = {d’Haultfoeuille, Xavier},
	year = {2010},
	pages = {1--15},
}

@incollection{kott2014CalibrationWeightingWhen,
	address = {Cham},
	title = {Calibration {Weighting} {When} {Model} and {Calibration} {Variables} {Can} {Differ}},
	isbn = {978-3-319-05320-2},
	language = {en},
	urldate = {2024-07-24},
	booktitle = {Contributions to {Sampling} {Statistics}},
	publisher = {Springer International Publishing},
	author = {Kott, Phillip S.},
	editor = {Mecatti, Fulvia and Conti, Pier Luigi and Ranalli, Maria Giovanna},
	year = {2014},
	doi = {10.1007/978-3-319-05320-2_1},
	pages = {1--18},
}

@article{zhao2015SemiparametricPseudoLikelihoodsGeneralized,
	title = {Semiparametric {Pseudo}-{Likelihoods} in {Generalized} {Linear} {Models} {With} {Nonignorable} {Missing} {Data}},
	volume = {110},
	doi = {10.1080/01621459.2014.983234},
	number = {512},
	journal = {Journal of the American Statistical Association},
	author = {Zhao, Jiwei and Shao, Jun},
	year = {2015},
	pages = {1577--1590},
}

@article{chen2007SemiparametricOddsRatio,
	title = {A {Semiparametric} {Odds} {Ratio} {Model} for {Measuring} {Association}},
	volume = {63},
	doi = {10.1111/j.1541-0420.2006.00701.x},
	number = {2},
	journal = {Biometrics},
	author = {Chen, Hua Yun},
	year = {2007},
	pages = {413--421},
}

@article{li2023SelfcensoringModelMultivariate,
	title = {A self-censoring model for multivariate nonignorable nonmonotone missing data},
	volume = {79},
	doi = {10.1111/biom.13916},
	language = {en},
	number = {4},
	journal = {Biometrics},
	author = {Li, Yilin and Miao, Wang and Shpitser, Ilya and Tchetgen Tchetgen, Eric J.},
	year = {2023},
	pages = {3203--3214},
}

@article{miao2024IdentificationSemiparametricEfficiency,
	title = {Identification and {Semiparametric} {Efficiency} {Theory} of {Nonignorable} {Missing} {Data} with a {Shadow} {Variable}},
	volume = {1},
	doi = {10.1145/3592389},
	number = {2},
	journal = {ACM / IMS Journal of Data Science},
	author = {Miao, Wang and Liu, Lan and Li, Yilin and Tchetgen Tchetgen, Eric J. and Geng, Zhi},
	year = {2024},
	article = {5},
}

@article{miao2016VarietiesDoublyRobust,
	title = {On varieties of doubly robust estimators under missingness not at random with a shadow variable},
	volume = {103},
	doi = {10.1093/biomet/asw016},
	number = {2},
	journal = {Biometrika},
	author = {Miao, Wang and Tchetgen Tchetgen, Eric J.},
	year = {2016},
	pages = {475--482},
}

@inproceedings{nabi2020FullLawIdentification,
	title = {Full {Law} {Identification} in {Graphical} {Models} of {Missing} {Data}: {Completeness} {Results}},
	url = {https://proceedings.mlr.press/v119/nabi20a.html},
	language = {en},
	booktitle = {Proceedings of the 37th {International} {Conference} on {Machine} {Learning}},
	publisher = {PMLR},
	author = {Nabi, Razieh and Bhattacharya, Rohit and Shpitser, Ilya},
	year = {2020},
	pages = {7153--7163},
}

@inproceedings{mohan2019HandlingSelfmaskingOther,
	title = {On {Handling} {Self}-masking and {Other} {Hard} {Missing} {Data} {Problems}},
	url = {https://why19.causalai.net/papers/mohan-why19.pdf},
	language = {en},
	booktitle = {{AAAI} {Spring} {Symposium} 2019},
	author = {Mohan, Karthika},
	year = {2019},
}

@article{nguyen2021ClarifyingCausalMediation,
	title = {Clarifying causal mediation analysis for the applied researcher: {Defining} effects based on what we want to learn.},
	volume = {26},
	doi = {10.1037/met0000299},
	language = {en},
	number = {2},
	journal = {Psychological Methods},
	author = {Nguyen, Trang Quynh and Schmid, Ian and Stuart, Elizabeth A.},
	year = {2021},
	pages = {255--271},
}

@article{imai2010IdentificationInferenceSensitivity,
	title = {Identification, {Inference} and {Sensitivity} {Analysis} for {Causal} {Mediation} {Effects}},
	volume = {25},
	doi = {10.1214/10-STS321},
	language = {en},
	number = {1},
	journal = {Statistical Science},
	author = {Imai, Kosuke and Keele, Luke and Yamamoto, Teppei},
	year = {2010},
}

@article{imai2010GeneralApproachCausal,
	title = {A general approach to causal mediation analysis.},
	volume = {15},
	doi = {10.1037/a0020761},
	language = {en},
	number = {4},
	journal = {Psychological Methods},
	author = {Imai, Kosuke and Keele, Luke and Tingley, Dustin},
	year = {2010},
	pages = {309--334},
}

@misc{benkeser2021InferenceNaturalMediation,
	title = {Inference for natural mediation effects under case-cohort sampling with applications in identifying {COVID}-19 vaccine correlates of protection},
	note = {arXiv:2103.02643},
	author = {Benkeser, David and Díaz, Iván and Ran, Jialu},
	year = {2021},
}

@article{malinsky2022SemiparametricInferenceNonmonotone,
	title = {Semiparametric {Inference} for {Nonmonotone} {Missing}-{Not}-at-{Random} {Data}: {The} {No} {Self}-{Censoring} {Model}},
	volume = {117},
	doi = {10.1080/01621459.2020.1862669},
	number = {539},
	journal = {Journal of the American Statistical Association},
	author = {Malinsky, Daniel and Shpitser, Ilya and Tchetgen Tchetgen, Eric J.},
	year = {2022},
	pages = {1415--1423},
}

@inproceedings{mohan2013GraphicalModelsInference,
	title = {Graphical {Models} for {Inference} with {Missing} {Data}},
	volume = {26},
	booktitle = {Advances in {Neural} {Information} {Processing} {Systems}},
	publisher = {Curran Associates, Inc.},
	author = {Mohan, Karthika and Pearl, Judea and Tian, Jin},
	year = {2013},
}

@inproceedings{bhattacharya2020IdentificationMissingData,
	title = {Identification {In} {Missing} {Data} {Models} {Represented} {By} {Directed} {Acyclic} {Graphs}},
	language = {en},
	booktitle = {Proceedings of {The} 35th {Uncertainty} in {Artificial} {Intelligence} {Conference}},
	publisher = {PMLR},
	author = {Bhattacharya, Rohit and Nabi, Razieh and Shpitser, Ilya and Robins, James M.},
	year = {2020},
	pages = {1149--1158},
}

@inproceedings{nabi2023TestabilityGoodnessFit,
	title = {On {Testability} and {Goodness} of {Fit} {Tests} in {Missing} {Data} {Models}},
	language = {en},
	booktitle = {Proceedings of the {Thirty}-{Ninth} {Conference} on {Uncertainty} in {Artificial} {Intelligence}},
	publisher = {PMLR},
	author = {Nabi, Razieh and Bhattacharya, Rohit},
	year = {2023},
	note = {ISSN: 2640-3498},
	pages = {1467--1477},
}

@article{potthoff2006CanOneAssess,
	title = {Can one assess whether missing data are missing at random in medical studies?},
	volume = {15},
	doi = {10.1191/0962280206sm448oa},
	language = {en},
	number = {3},
	journal = {Statistical Methods in Medical Research},
	author = {Potthoff, Richard F and Tudor, Gail E and Pieper, Karen S and Hasselblad, Vic},
	year = {2006},
	pages = {213--234},
}

@article{mohan2021GraphicalModelsProcessing,
	title = {Graphical {Models} for {Processing} {Missing} {Data}},
	volume = {116},
	doi = {10.1080/01621459.2021.1874961},
	language = {en},
	number = {534},
	journal = {Journal of the American Statistical Association},
	author = {Mohan, Karthika and Pearl, Judea},
	year = {2021},
	pages = {1023--1037},
}

@article{zuo2024MediationAnalysisMediator,
	title = {Mediation analysis with the mediator and outcome missing not at random},
	doi = {10.1080/01621459.2024.2359132},
	journal = {Journal of the American Statistical Association},
	author = {Zuo, Shuozhi and Ghosh, Debashis and Ding, Peng and Yang, Fan},
	year = {2024},
}

@article{rubin1974EstimatingCausalEffects,
	title = {Estimating causal effects of treatments in randomized and nonrandomized studies},
	volume = {66},
	doi = {10.1037/h0037350},
	number = {5},
	journal = {Journal of Educational Psychology},
	author = {Rubin, Donald B.},
	year = {1974},
	pages = {688--701},
}

@article{splawa-neyman1990ApplicationProbabilityTheory,
	title = {On the application of probability theory to agricultural experiments. {Essay} on principles. {Section} 9 [{Translated} and edited by {DM Dabrowska and TP Speed}; republished in {Statistical Science, 1990, 5(4):465-472}]},
	volume = {10},
	doi = {10.1214/ss/1177012031},
	journal = {Annals of Agricultural Sciences},
	author = {Splawa-Neyman, Jerzy},
	year = {1923},
        pages = {1-51},
}

\clearpage
\pagenumbering{arabic}
\renewcommand*{\thepage}{A\arabic{page}}
\appendix
\noindent{\Large \textbf{APPENDIX}}

\vspace{2em}

\vspace{-4em}

\part{}
\parttoc

\renewcommand{\thefigure}{A\arabic{figure}}

\setcounter{figure}{0}

\section{Appendix to Section~\ref{sec:setting} (Setting)}
\label{asec:setting}

\subsection{Disjoint MAR model - numerical example}

The point made in the paper about the disjoint MAR model in Fig.~\ref{fig:badMAR} is that under this model $\P(Y\mid X,A,M)$ is not identified. Here we illustrate it with a numerical example, specifically showing that more than one full-data distribution implies the same observed-data distribution.

To keep things simple, we leave $X,A$ out to zoom into the main part of the story. One can think about all quantities here as implicitly conditioning on a specific level ($X=x,A=a$) of $X,A$. Also for simplicity, consider binary $M$ and $Y$. 

We will (1) assume a true full-data distribution -- for $(M,R_M,Y,R_Y)$ -- and pin down the density of interest $\P(Y\mid M)$ and derive the observed-data distribution -- for $(R_M,R_Y,R_MM,R_YY)$; and then (2) come up with a different full-data distribution with a different density for $\P(Y\mid M)$ that implies the same observed-data distribution.

Assume the full-data distribution:
\begin{align*}
    \P(M=1)&=0.5,
    \\
    \P(R_M=1\mid M)&=0.5,
    \\
    \P(Y=1\mid M,R_M)&={\color{blue}0.3+0.3M+0.2R_M},
    \\
    \P(R_Y=1\mid M,Y,R_M)&=0.3+0.1R_M.
\end{align*}
With this full-data distribution, the density of interest $\P(Y\mid M)$ is given by:
\begin{align*}
    \P(Y=1\mid M)
    &=\E[\P(Y=1\mid M,R_M)\mid M]
    ={\color{blue}0.3+0.3M+0.2}\,\underbrace{\color{blue}\E[R_M\mid M]}_{0.5}
    \\
    &={\color{blue}0.4+0.3M}.
\end{align*}
Also with this full-data distribution, the observed-data distribution is:
\begin{alignat*}{2}
    &\P(R_M=1)&&=0.5,
    \\
    &\P(R_Y=1\mid R_M)&&=0.3+0.1R_M,
    \\
    &\P(M=1\mid R_Y,R_M=1)&&=0.5,
    \\
    &\P(Y=1\mid R_M,R_Y=1)&&=\P(Y=1\mid R_M)=\E[\P(Y=1\mid M,R_M)\mid R_M]
    ={\color{blue}0.3+0.3}\underbrace{\color{blue}\E[M\mid R_M]}_{0.5}{\color{blue}+0.2R_M}
    \\
    &&&={\color{blue}0.45+0.2R_M},
    \\
    &\P(Y=1\mid M,R_{MY}=1)&&=\P(Y=1\mid M,R_M=1)={\color{blue}0.3+0.3M+0.2}
    \\
    &&&={\color{blue}0.5+0.3M}.
\end{alignat*}

Now consider a modified full-data distribution:
\begin{align*}
    \P(M=1)&=0.5,
    \\
    \P(R_M=1\mid M)&=0.5,
    \\
    \P(Y=1\mid M,R_M)&={\color{purple}0.25+0.4M+(0.25-0.1M)R_M},
    \\
    \P(R_Y=1\mid M,Y,R_M)&=0.3+0.1R_M.
\end{align*}
In this case, the density of interest is different:
\begin{align*}
    \P(Y=1\mid M)
    &=\E[\P(Y=1\mid M,R_M)\mid M]
    ={\color{purple}0.25+0.4M+(0.25-0.1M)}\underbrace{\color{purple}\E[R_M\mid M]}_{0.5}
    \\
    &={\color{purple}0.375+0.35M}.
\end{align*}
But the observed-data distribution is the same:
\begin{alignat*}{2}
    &\P(R_M=1)&&=0.5,
    \\
    &\P(R_Y=1\mid R_M)&&=0.3+0.1R_M,
    \\
    &\P(M=1\mid R_Y,R_M=1)&&=0.5,
    \\
    &\P(Y=1\mid R_M,R_Y=1)&&=\P(Y=1\mid R_M)=\E[\P(Y=1\mid M,R_M)\mid R_M]
    \\
    &&&={\color{purple}0.25+0.4}\underbrace{\color{purple}\E[M\mid R_M]}_{0.5}+{\color{purple}(0.25-0.1}\underbrace{\color{purple}\E[M\mid R_M]}_{0.5}{\color{purple})R_M}
    \\
    &&&={\color{purple}0.45+0.2R_M},
    \\
    &\P(Y=1\mid M,R_{MY}=1)&&=\P(Y=1\mid M,R_M=1)={\color{purple}0.25+0.4M+0.25-0.1M}
    \\
    &&&={\color{purple}0.5+0.3M}.
\end{alignat*}

In fact, there are infinitely many full-data distributions (with varying $\P(Y\mid M)$) that imply the same observed-data distribution. They belong to a family with
$$\P(Y=1\mid M,R_M)=a+bM+(c+dM)R_M,$$
where $(a,b,c,d)$ solves
\begin{align*}
    \begin{pmatrix}
        a+0.5b+(c+0.5d)R_M\\
        (a+c)+(b+d)M
    \end{pmatrix}
    =
    \begin{pmatrix}
        0.45+0.2R_M\\0.5+0.3M
    \end{pmatrix}.
\end{align*}

\section{Appendix to Section~\ref{sec:separated} (Self-separated missingness models)}
\label{asec:separated}

\begin{lemma}\label{lm:extraction}
    If $A\independent(B,C)$ then $A\independent B\mid C$.
\end{lemma}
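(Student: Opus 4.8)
The statement $A \independent (B,C) \Rightarrow A \independent B \mid C$ is a standard ``weak union''-type property of conditional independence. The plan is to prove it directly from the definition of (conditional) independence in terms of densities (or probability mass functions), showing that the joint independence of $A$ from the pair $(B,C)$ forces the conditional density of $A$ given $C$ to not depend on $B$.

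First I would write out what the hypothesis gives us. The assumption $A \independent (B,C)$ means that the joint density factorizes as $\P(A, B, C) = \P(A)\,\P(B,C)$, or equivalently $\P(A \mid B, C) = \P(A)$ wherever $\P(B,C) > 0$. The goal is to show $\P(A \mid B, C) = \P(A \mid C)$ for all values with $\P(C) > 0$, which is precisely the statement $A \independent B \mid C$.

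The key step is to compute $\P(A \mid C)$ by marginalizing over $B$ and to substitute the factorization from the hypothesis. Specifically, I would start from
\begin{align*}
    \P(A \mid C) = \frac{\P(A, C)}{\P(C)} = \frac{\int \P(A, B, C)\, dB}{\P(C)} = \frac{\int \P(A)\,\P(B,C)\, dB}{\P(C)} = \frac{\P(A)\,\P(C)}{\P(C)} = \P(A),
\end{align*}
using $\int \P(B,C)\, dB = \P(C)$. Combining this with the hypothesis $\P(A \mid B, C) = \P(A)$ gives $\P(A \mid B, C) = \P(A) = \P(A \mid C)$, which is the desired conclusion. (In the discrete case the integral is replaced by a sum; the argument is otherwise identical.)

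I do not anticipate a serious obstacle here, as this is a routine manipulation. The only point requiring mild care is the handling of conditioning events of probability zero: the identities should be asserted only on the set where the relevant conditioning densities are positive, and one should note that the factorization form $\P(A,B,C) = \P(A)\,\P(B,C)$ is the cleanest starting point precisely because it sidesteps division by zero until the final step. If a measure-theoretic treatment were wanted, the same computation goes through with conditional expectations and ``almost surely'' qualifiers, but for the purposes of this paper the density-level argument suffices.
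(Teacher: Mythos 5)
Your proposal is correct and follows essentially the same route as the paper's proof: both arguments hinge on first establishing the decomposition property that $A\independent(B,C)$ implies $\P(A\mid C)=\P(A)$ (you do this by marginalizing the factorized joint density over $B$; the paper does it via the law of total probability), and then combining this with the hypothesis $\P(A\mid B,C)=\P(A)$. The only cosmetic difference is the final step: you conclude by directly equating $\P(A\mid B,C)=\P(A)=\P(A\mid C)$, whereas the paper verifies the symmetric formulation $\P(B\mid C)=\P(B\mid A,C)$ with two applications of Bayes' rule — your finish is, if anything, slightly more direct.
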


\begin{proof}[Proof of Lemma~\ref{lm:extraction}]
    First, note that if $A\independent(B,C)$ then $A\independent C$. This is because
    \begin{align*}
        \P(A\mid C)
        &=\E[\P(A\mid B,C)\mid C] && (\text{law of total probability})
        \\
        &=\E[\P(A)\mid C] && (A\independent(B,C))
        \\
        &=\P(A).
    \end{align*}
    That $A\independent B\mid C$ follows,
    \begin{align*}
        \P(B\mid C)
        &=\frac{\P(B,C)}{\P(C)} && (\text{Bayes' rule})
        \\
        &=\frac{\P(B,C\mid A)}{\P(C\mid A)} && (A\independent(B,C)~\text{and}~A\independent C)
        \\
        &=\P(B\mid A,C). && (\text{Bayes' rule})
    \end{align*}
\end{proof}

\subsection{SIM models}
\label{assec:SIM}

\subsubsection{Models S1}
\label{asssec:S1}

\paragraph{Assumption S1:}\hfill

\begin{minipage}{.35\textwidth}
    \small First statement:
    \begin{align*}
    R_M&\independent(M,Y)\mid X,A,\tag{\ref{S1-1}}
    \\
    R_Y&\independent Y\mid X,A,M,R_M.\tag{\ref{S1-2}}
    \end{align*}
\end{minipage}
\hspace{.1\textwidth}
\begin{minipage}{.35\textwidth}
    \small Second statement:
    \begin{align*}
    \Rm&\independent M\mid X,A,\tag{\ref{S1-1b}}
    \\
    (R_M\!,R_Y)&\independent Y\mid X,A,M.\tag{\ref{S1-2b}}
    \end{align*}
\end{minipage}
\hspace{.1\textwidth}
\begin{minipage}{.1\textwidth}
    \includegraphics[width=\textwidth, page=1]{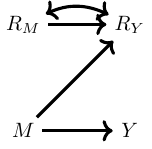}
\end{minipage}

\paragraph{Proof of equivalence of two statements}

\begin{itemize}
    \item[($\Rightarrow$)]
    Assume the first statement of S1: \ref{S1-1} and \ref{S1-2}. As \ref{S1-1} contains \ref{S1-1b}, we only need to show that the first statement of S1 implies \ref{S1-2b}. This is true because
    \begin{align*}
        \P(Y\mid X,A,M)
        &=\P(Y\mid X,A,M,R_M)=\P(Y\mid X,A,M,R_M,R_Y),
    \end{align*}
    where the first equality is due to $R_M\independent Y\mid X,A,M$ (implied by \ref{S1-1} under Lemma~\ref{lm:extraction}) and the second equality is due to \ref{S1-2}.

    \item[($\Leftarrow$)]
    Assume the second statement of S1: \ref{S1-1b} and \ref{S1-2b}. As \ref{S1-2b} implies \ref{S1-2} by Lemma~\ref{lm:extraction}, we only need to show that the second statement of S1 implies \ref{S1-1}. This is true because
    \begin{align*}
        \P(R_M\mid X,A,M,Y)=\P(R_M\mid X,A,M)=\P(R_M\mid X,A),
    \end{align*}
    where the first equality is due to \ref{S1-2b} and the second is due to \ref{S1-1b}.
\end{itemize}






\subsubsection{Models S2}
\label{asssec:S2}

\paragraph{Assumption S2:}\hfill

\begin{minipage}{.35\textwidth}
    \small First statement:
    \begin{align*}
    \Ry&\independent(M,Y)\mid X,A,\tag{\ref{S2-1}}
    \\
    \Rm&\independent M\mid X,A,Y,R_Y.\tag{\ref{S2-2}}
    \end{align*}
\end{minipage}
\hspace{.1\textwidth}
\begin{minipage}{.35\textwidth}
    \small Second statement:
    \begin{align*}
    \Ry&\independent Y\mid X,A,\tag{\ref{S2-1b}}
    \\
    (R_Y\!,R_M)&\independent M\mid X,A,Y.\tag{\ref{S2-2b}}
    \end{align*}
\end{minipage}
\hspace{.1\textwidth}
\begin{minipage}{.1\textwidth}
    \includegraphics[width=\textwidth, page=2]{standalone-appendix.pdf}
\end{minipage}

\paragraph{Identification}

\begin{align*}
    \P(R_{MY}=1\mid X,A,M,Y)
    &\overset{\text{\ref{S2-2b}}}{=}\P(R_{MY}=1\mid X,A,Y)
    \\
    &=\P(R_Y=1\mid X,A,Y)\P(R_M=1\mid X,A,Y,R_Y=1)
    \\
    &\overset{\text{\ref{S2-1b}}}{=}\P(R_Y=1\mid X,A)\P(R_M=1\mid X,A,Y,R_Y=1)
    \\
    &\overset{\text{\ref{S2-2}}}{=}\P(R_Y=1\mid X,A)\P(R_M=1\mid X,A,M,Y,R_Y=1).
\end{align*}
It follows that
\begin{align*}
    g_b
    :=h_bk
    =\frac{\P(R_{MY}=1\mid X,A)}{\P(R_{MY}=1\mid X,A,M,Y)}
    =\frac{\P(R_M=1\mid X,A,R_Y=1)}{\P(R_M=1\mid X,A,Y,R_Y=1)},
\end{align*}
which identifies the tilting functions $h_b=\E[g_b\mid X,A,M,R_{MY}=1]$ and $k=g_b/h_b$.

\subsection{Other models}

\subsubsection{Model S3}
\label{asssec:S3}

\begin{minipage}{.7\textwidth}
    \paragraph{Assumption S3:}\hfill

    \begin{align*}
    \Rm&\independent (M,R_Y)\mid X,A,Y,\tag{\ref{S3-1}}
    \\
    \Ry&\independent (Y,R_M)\mid X,A,M,\tag{\ref{S3-2}}
    \end{align*}
\end{minipage}
\hspace{.18\textwidth}
\begin{minipage}{.12\textwidth}
    \includegraphics[width=\textwidth, page=3]{standalone-appendix.pdf}
\end{minipage}

\paragraph{Identification}

\begin{align*}
    \P(R_{MY}=1\mid X,A,M,Y)
    &=\P(R_M=1\mid X,A,M,Y)\P(R_Y=1\mid X,A,M,Y,R_M=1)
    \\
    &\overset{\text{\ref{S3-1}}}{=}\P(R_M=1\mid X,A,Y,R_Y=1)\P(R_Y=1\mid X,A,M,Y,R_M=1)
    \\
    &\overset{\text{\ref{S3-2}}}{=}\P(R_M=1\mid X,A,Y,R_Y=1)\P(R_Y=1\mid X,A,M,R_M=1).
\end{align*}
It follows that
\begin{align*}
    g:=hk
    &=\frac{\P(R_M=1\mid X,A)\P(R_Y=1\mid X,A,M,R_M=1)}{\P(R_{MY}=1\mid X,A,M,Y)}
    \\
    &=\frac{\P(R_M=1\mid X,A)}{\P(R_M=1\mid X,A,Y,R_Y=1)},
\end{align*}
which identifies the tilting functions $h=\E[g\mid X,A,M,R_{MY}=1]$ and $k=g/h$.

\paragraph{Testable implication}

The testable implication of this model was discovered by \cite{nabi2023TestabilityGoodnessFit}, using a special odds-ratio parameterization \citep{chen2007SemiparametricOddsRatio} for the missingness model. Here we skip this topic and work directly toward a simple expression of the testable implication. (This derivation is based on the key idea from \cite{nabi2023TestabilityGoodnessFit} that the model implies conditional independence of $R_M$ and $R_Y$ so an odds ratio relating them is 1, and uses hints from \citeauthor{malinsky2022SemiparametricInferenceNonmonotone}'s (\citeyear{malinsky2022SemiparametricInferenceNonmonotone}) method for estimating this odds ratio under the model if we ignore this specific independence implication.)

First we show a general identity
\begin{align}
    \frac{\P(R_M,R_Y\mid X,A)}{\P(R_M=1,R_Y=1\mid X,A)}=\E\left[\frac{\P(R_M,R_Y\mid X,A,M,Y)}{\P(R_M=1,R_Y=1\mid X,A,M,Y)}\mid X,A,R_{MY}=1\right]\label{eq:ini1}
\end{align}
by starting with the RHS and arriving at the LHS:
\begin{align*}
    \text{RHS}
    &=\E\left[\frac{\P(R_M,R_Y,M,Y\mid X,A)}{\P(R_M=1,R_Y=1,M,Y\mid X,A)}\mid X,A,R_{MY}=1\right]
    \\
    &=\E\left[\frac{\P(R_M,R_Y\mid X,A)\P(M,Y\mid X,A,R_M,R_Y)}{\P(R_M=1,R_Y=1\mid X,A)\P(M,Y\mid X,A,R_{MY}=1)}\mid X,A,R_{MY}=1\right]
    \\
    &=\frac{\P(R_M,R_Y\mid X,A)}{\P(R_M=1,R_Y=1\mid X,A)}\E\left[\frac{\P(M,Y\mid X,A,R_M,R_Y)}{\P(M,Y\mid X,A,R_{MY}=1)}\mid X,A,R_{MY}=1\right]
    \\
    &=\frac{\P(R_M,R_Y\mid X,A)}{\P(R_M=1,R_Y=1\mid X,A)}=\text{LHS}.
\end{align*}

Side note: It is the single missing indicator version of (\ref{eq:ini1}) that gives us the key result (\ref{eq:shadow}) for the shadow variable case in Section~\ref{sec:shadow}. (\ref{eq:ini1}) is the general version that allows multiple missing indicators. The generic statement of (\ref{eq:ini1}) involves two conditioning sets, $S_1\subset S_2$; in the current context $S_1=\{X,A\}$, $S_2=\{X,A,M,Y\}$.

We can keep working with (\ref{eq:ini1}), but for the current task, what will be useful is where we plug 0 in for $R_M$ and $R_Y$ in the numerators:
\begin{align}
    \frac{\P(R_M=0,R_Y=0\mid X,A)}{\P(R_M=1,R_Y=1\mid X,A)}=\E\left[\frac{\P(R_M=0,R_Y=0\mid X,A,M,Y)}{\P(R_M=1,R_Y=1\mid X,A,M,Y)}\mid X,A,R_{MY}=1\right].\label{eq:ini2}
\end{align}
We re-express the integrand on the RHS,
\begin{align*}
    &\frac{\P(R_M=0,R_Y=0\mid X,A,M,Y)}{\P(R_M=1,R_Y=1\mid X,A,M,Y)}=
    \\
    &=\frac{\P(R_M=0,R_Y=1\mid X,A,M,Y)}{\P(R_M=1,R_Y=1\mid X,A,M,Y)}\cdot\frac{\P(R_M=0,R_Y=0\mid X,A,M,Y)}{\P(R_M=0,R_Y=1\mid X,A,M,Y)}
    \\
    &=\frac{\P(R_M=0\mid X,A,M,Y,R_Y=1)}{\P(R_M=1\mid X,A,M,Y,R_Y=1)}\cdot\frac{\P(R_Y=0\mid X,A,M,Y,R_M=0)}{\P(R_Y=1\mid X,A,M,Y,R_M=0)}
    \\
    &=\frac{\P(R_M=0\mid X,A,M,Y,R_Y=1)}{\P(R_M=1\mid X,A,M,Y,R_Y=1)}\cdot\frac{\P(R_Y=0\mid X,A,M,Y,R_M=1)}{\P(R_Y=1\mid X,A,M,Y,R_M=1)}\times
    \\
    &~~~~~\left[\frac{\P(R_Y=0\mid X,A,M,Y,R_M=0)}{\P(R_Y=1\mid X,A,M,Y,R_M=0)}\Big/\frac{\P(R_Y=0\mid X,A,M,Y,R_M=1)}{\P(R_Y=1\mid X,A,M,Y,R_M=1)}\right].
\end{align*}
Therefore we exan re-express (\ref{eq:ini2}) as
\begin{align}
    \frac{\P(R_M=0,R_Y=0\mid X,A)}{\P(R_M=1,R_Y=1\mid X,A)}
    =\E\bigg[&\text{odds}(R_M=0\mid X,A,M,Y,R_Y=1)\,\text{odds}(R_Y=0\mid X,A,M,Y,R_M=1)\times\nonumber
    \\
    &\frac{\text{odds}(R_Y=0\mid X,A,M,Y,R_M=0)}{\text{odds}(R_Y=0\mid X,A,M,Y,R_M=1)}\mid X,A,R_{MY}=1\bigg].\label{eq:ini3}
\end{align}

Under S3, the two missingness odds and the missingness odds ratio inside the expectation above are equal to $\text{odds}(R_M=0\mid X,A,Y,R_Y=1)$, $\text{odds}(R_Y=0\mid X,A,M,R_M=1)$ and 1, respectively.
This means, under S3, (\ref{eq:ini3}) becomes
\begin{align*}
    &\frac{\P(R_M=0,R_Y=0\mid X,A)}{\P(R_M=1,R_Y=1\mid X,A)}=
    \\
    &~~~~~~\E\left[\text{odds}(R_M=0\mid X,A,M,Y,R_Y=1)\,\text{odds}(R_Y=0\mid X,A,M,Y,R_M=1)\mid X,A,R_{MY}=1\right].\tag{T3}
\end{align*}

\subsubsection{Model S4}

\begin{minipage}{.7\textwidth}
    \paragraph{Assumption S4:}
    \begin{align*}
    \Rm&\independent M\mid X,A,Y,\tag{\ref{S4-1}}
    \\
    \Ry&\independent(M,Y)\mid X,A,R_M.\tag{\ref{S4-2}}
    \end{align*}
\end{minipage}
\hspace{.18\textwidth}
\begin{minipage}{.12\textwidth}
    \includegraphics[width=\textwidth, page=4]{standalone-appendix.pdf}
\end{minipage}

\paragraph{Identification}

\begin{align*}
    g:=hk
    &=\frac{\P(R_M=1\mid X,A)\P(R_Y=1\mid X,A,M,R_M=1)}{\P(R_{MY}=1\mid X,A,M,Y)}
    \\
    &=\frac{\P(R_M=1\mid X,A)\P(R_Y=1\mid X,A,M,R_M=1)}{\P(R_M=1\mid X,A,M,Y)\P(R_Y=1\mid X,A,M,Y,R_M=1)}
    \\
    &\overset{\text{\ref{S4-2}}}{=}\frac{\P(R_M=1\mid X,A)}{\P(R_M=1\mid X,A,M,Y)}
    \\
    &\overset{\text{\ref{S4-1}}}{=}\frac{\P(R_M=1\mid X,A)}{\P(R_M=1\mid X,A,Y)}.
\end{align*}

\begin{align*}
    \frac{\text{odds}(R_M=0\mid X,A,Y)}{\text{odds}(R_M=0\mid X,A,Y,R_Y=1)}=\frac{\P(R_Y=1\mid X,A,Y,R_M=1)}{\P(R_Y=1\mid X,A,Y,R_M=0)}=\frac{\P(R_Y=1\mid X,A,R_M=1)}{\P(R_Y=1\mid X,A,R_M=0)},
\end{align*}
where the first equal sign is obtained by the first odds-tilting technique (\ref{eq:odds-tilt1}) described in Section~\ref{ssec:tilt-the-odds}, and the second equal sign is given by \ref{S4-2}. This obtains
\begin{align*}
    \frac{1}{\P(R_M=1\mid X,A,Y)}=1+\text{odds}(R_M=0\mid X,A,Y,R_Y=1)\frac{\P(R_Y=1\mid X,A,R_M=1)}{\P(R_Y=1\mid X,A,R_M=0)},
\end{align*}
and
\begin{align*}
    g=\P(R_M=1\mid X,A)\left[1+\text{odds}(R_M=0\mid X,A,Y,R_Y=1)\frac{\P(R_Y=1\mid X,A,R_M=1)}{\P(R_Y=1\mid X,A,R_M=0)}\right],
\end{align*}
which identifies the tilting functions $h=\E[g\mid X,A,M,R_{MY}=1]$ and $k=g/h$.

\subsubsection{Model S5}

\begin{minipage}{.7\textwidth}
    \paragraph{Assumption S5:}
    \begin{align}
    &(R_M,R_Y)\independent Y\mid X,A,M,\tag{\ref{S5-1}}
    \\*
    &\Rm\independent(M,Y)\mid X,A,R_Y,\tag{\ref{S5-2}}
    \\
    &\P(R_M=1\mid X,A,R_Y=0)>0.\tag{\ref{S5-3}}
    \end{align}
\end{minipage}
\hspace{.18\textwidth}
\begin{minipage}{.12\textwidth}
    \includegraphics[width=\textwidth, page=5]{standalone-appendix.pdf}
\end{minipage}

\subsubsection*{Identification}

\ref{S5-1} identifies $\P(Y\mid X,A,M)$ by complete cases, i.e., $k=1$. To identify $\P(M\mid X,A)$, we first identify $\P(R_Y,M,\mid X,A)$ and then marginalize over $R_Y$.

Again, we use density tilting,
\begin{align*}
    \frac{\P(R_Y,M\mid X,A)}{\P(R_Y,M\mid X,A,R_M=1)}
    =\frac{\P(R_M=1\mid X,A)}{\P(R_M=1\mid X,A,M,R_Y)}
    =\frac{\P(R_M=1\mid X,A)}{\P(R_M=1\mid X,A,R_Y)},
\end{align*}
where the first equal sign is due to Bayes' rule and the second is due to \ref{S5-2}, to obtain
\begin{align*}
    \P(R_Y,M\mid X,A)
    &=\P(R_Y,M\mid X,A,R_M=1)\frac{\P(R_M=1\mid X,A)}{\P(R_M=1\mid X,A,R_Y)}
    \\
    &=\P(M\mid X,A,R_M=1)\P(R_Y\mid X,A,M,R_M=1)\frac{\P(R_M=1\mid X,A)}{\P(R_M=1\mid X,A,R_Y)}.
\end{align*}

Marginalizing over $R_Y$ obtains
\begin{align*}
    \P(M\mid X,A)
    =\P(M\mid X,A,R_M=1)\E\left[\frac{\P(R_M=1\mid X,A)}{\P(R_M=1\mid X,A,R_Y)}\mid X,A,M,R_M=1\right],
\end{align*}
or
\begin{align*}
    h=\E\left[\frac{\P(R_M=1\mid X,A)}{\P(R_M=1\mid X,A,R_Y)}\mid X,A,M,R_M=1\right],
\end{align*}
which can also be expressed as
\begin{align*}
    h=\P(R_M=1\mid X,A)\left[\frac{\P(R_Y=1\mid X,A,M,R_M=1)}{\P(R_M=1\mid X,A,R_Y=1)}+\frac{\P(R_Y=0\mid X,A,M,R_M=1)}{\P(R_M=1\mid X,A,R_Y=0)}\right].
\end{align*}


\subsection{Odds tilting}
\label{assec:odds-tilt}

Here we derive the two odds tilting functions presented in Section~\ref{ssec:tilt-the-odds}:
\begin{align}
    \frac{\text{odds}(R_M=0\mid X,A,Y)}{\text{odds}(R_M=0\mid X,A,Y,R_Y=1)}&=\frac{\P(R_Y=1\mid X,A,Y,R_M=1)}{\P(R_Y=1\mid X,A,Y,R_M=0)},\tag{\ref{eq:odds-tilt1}}
    \\
    \frac{\text{odds}(R_M=0\mid X,A,Y)}{\text{odds}(R_M=0\mid X,A)}&=\frac{\P(Y\mid X,A,R_M=0)}{\P(Y\mid X,A,R_M=1)}.\tag{\ref{eq:odds-tilt2}}
\end{align}
We start with (\ref{eq:odds-tilt1}):
\begin{align*}
    \text{LHS}
    &:=\frac{\P(R_M=0\mid X,A,Y)}{\P(R_M=1\mid X,A,Y)}\Big/\frac{\P(R_M=0\mid X,A,Y,R_Y=1)}{\P(R_M=1\mid X,A,Y,R_Y=1)}
    \\
    &=\frac{\P(R_M=0\mid X,A,Y)}{\P(R_M=0\mid X,A,Y,R_Y=1)}\Big/\frac{\P(R_M=1\mid X,A,Y)}{\P(R_M=1\mid X,A,Y,R_Y=1)}
    \\
    &\overset{\text{Bayes' rule}}{=}
    \frac{\P(R_Y=1\mid X,A,Y)}{\P(R_Y=1\mid X,A,Y,R_M=0)}\Big/
    \frac{\P(R_Y=1\mid X,A,Y)}{\P(R_Y=1\mid X,A,Y,R_M=1)}
    \\
    &=\frac{\P(R_Y=1\mid X,A,Y,R_M=1)}{\P(R_Y=1\mid X,A,Y,R_M=0)}
    =\text{RHS}.
\end{align*}
And now (\ref{eq:odds-tilt2}):
\begin{align*}
    \text{LHS}
    &:=\frac{\P(R_M=0\mid X,A,Y)}{\P(R_M=1\mid X,A,Y)}\Big/\frac{\P(R_M=0\mid X,A)}{\P(R_M=1\mid X,A)}
    \\
    &=\frac{\P(R_M=1\mid X,A)}{\P(R_M=1\mid X,A,Y)}\Big/\frac{\P(R_M=0\mid X,A)}{\P(R_M=0\mid X,A,Y)}
    \\
    &\overset{\text{Bayes' rule}}{=}
    \frac{\P(Y\mid X,A)}{\P(Y\mid X,A,R_M=1)}\Big/\frac{\P(Y\mid X,A)}{\P(Y\mid X,A,R_M=0)}
    \\
    &=\frac{\P(Y\mid X,A,R_M=0)}{\P(Y\mid X,A,R_M=1)}
    =\text{RHS}.
\end{align*}

\section{Appendix to Section~\ref{sec:shadow} (Self-connected missingness models with shadow variables)}
\label{asec:shadow}

\subsection{General theory}
\label{asec:shadow-theory1}

\subsubsection{The key result (\ref{eq:shadow})}
\label{asubsec:shadow-theory}

\begin{align*}
    \frac{\P(R_V=0\mid Z)}{\P(R_V=1\mid Z)}=\E\bigg[\frac{\P(R_V=0\mid V)}{\P(R_V=1\mid V)}\mid Z,R_V=1\bigg].\tag{\ref{eq:shadow}}
\end{align*}

\paragraph{Proof of (\ref{eq:shadow}).}

Under the assumption $Z\independent R_V\mid V$, the odds inside the expectation on the RHS and (\ref{eq:shadow}) is equal to $\mfrac{\P(R_V=0\mid Z,V)}{\P(R_V=1\mid Z,V)}$. This means (\ref{eq:shadow}) holds if 
\begin{align*}
    \frac{\P(R_V=0\mid Z)}{\P(R_V=1\mid Z)}=\E\bigg[\frac{\P(R_V=0\mid V,Z)}{\P(R_V=1\mid V,Z)}\mid Z,R_V=1\bigg].
\end{align*}
This is a general result that is easy to show:
\begin{align*}
    \text{RHS}
    &=\E\bigg[\frac{\P(R_V=0,V\mid Z)/\P(V\mid Z)}{\P(R_V=1,V\mid Z)/\P(V\mid Z)}\mid Z,R_V=1\bigg]
    \\
    &=\E\bigg[\frac{\P(R_V=0\mid Z)\P(V\mid Z,R_V=0)}{\P(R_V=1\mid Z)\P(V\mid Z,R_V=1)}\mid Z,R_V=1\bigg]
    \\
    &=\frac{\P(R_V=0\mid Z)}{\P(R_V=1\mid Z)}\E\bigg[\frac{\P(V\mid Z,R_V=0)}{\P(V\mid Z,R_V=1)}\mid Z,R_V=1\bigg]
    \\
    &=\frac{\P(R_V=0\mid Z)}{\P(R_V=1\mid Z)}\E[1\mid Z,R_V=0]
    \\
    &=\text{LHS}.
\end{align*}

\subsubsection{Connections with existing results}

\paragraph{Connections with results in \cite{zuo2024MediationAnalysisMediator}.}

\citeauthor{zuo2024MediationAnalysisMediator} show identifiability of the densities of interest through identifiability of the self-connected missingness odds. There are different cases but techniques tend to repeat, so here we take as example the simplest case where the mediator is missing but the outcome is observed and serves as a shadow variable. The result for the self-connected mediator missingness odds in the proof of their Theorem 1 is
\begin{align*}
    \P(Y,\Rm=0\mid A,X)=\int \P(M=m,Y,\Rm=1\mid A,X)\underbrace{\frac{\P(\Rm=0\mid M=m,A,X)}{\P(\Rm=1\mid M=m,A,X)}}_{\text{odds}(R_M=0\mid M=m,A,X)}dm.
\end{align*}
It helps to translate this to our current notation, so we replace $M$ with $V$, $Y$ with $Z$ and make $A,X$ implicit, and write this as
\begin{align*}
    \P(Z,R_V=0)=\int \P(V=v,Z,R_V=1)\,\text{odds}(R_V=0\mid V=v)dv.
\end{align*}
Connecting this to (\ref{eq:shadow}) requires two simple manipulations. First, dividing both sides by $\P(Z)$ obtains a conditional on $Z$ version of the equation:
\begin{align*}
    \P(R_V=0\mid Z)=\int \P(V=v,R_V=1\mid Z)\,\text{odds}(R_V=0\mid V=v)dv.
\end{align*}
Then dividing both sides by $\P(R_V=1\mid Z)$ puts the LHS in odds terms:
\begin{align*}
    \frac{\P(R_V=0\mid Z)}{\P(R_V=1\mid Z)}=\int \P(V=v\mid Z,R_V=1)\,\text{odds}(R_V=0\mid V=v)dv.
\end{align*}
And this is exactly (\ref{eq:shadow}): $\text{odds}(R_V=0\mid Z)=\E[\text{odds}(R_V=0\mid V)\mid Z,R_V=1]$.

\paragraph{Connections with results in \cite{miao2024IdentificationSemiparametricEfficiency}.}

The identification theory in \cite{miao2024IdentificationSemiparametricEfficiency} is complicated. Here we summarize the key points leading to identification, leaving out side results. We use our notation $V$ for the variable with self-connected missingness and $Z$ for the shadow variable, and keep covariates implicit. The theory is based on the definition of an odds ratio function (see definition of $\text{OR}(X,Y)$ on the RHS of equation 2 in Proposition 1 in \citeauthor{miao2024IdentificationSemiparametricEfficiency}). In our notation, this odds ratio function is
\begin{align*}
    \text{OR}(V):=\frac{\text{odds}(R_V=0\mid V)}{\text{odds}(R_V=0\mid V=v)}
\end{align*}
for $v$ being a specific value in the support of $V$ (any value is fine). (An alternative label such as $\text{OR}_v(V)$ would make this reference level explicit, but we will continue with the implicit label.) 
A normalized version of this function is then defined (inside Proposition 2) as 
$$\widetilde{\text{OR}}(V):=\text{OR}(V)/\E[\text{OR}(V)\mid R_V=1].$$
$\text{OR}(V)$ can be recovered from $\widetilde{\text{OR}}(V)$ as
\begin{align*}
    \text{OR}(V)
    &=\widetilde{\text{OR}}(V)/\widetilde{\text{OR}}(V=v).
\end{align*}
The theory states that due to $Z$ being a shadow variable, $\widetilde{\text{OR}}(V)$ satisfies
\begin{align*}
    \E[\widetilde{\text{OR}}(V)\mid R_V=1,Z]=\frac{\P(Z\mid R_V=0)}{\P(Z\mid R_V=1)}
\end{align*}
(equation 5 in Proposition 2). This provides an integral equation that identifies $\widetilde{\text{OR}}(V)$ under the completeness condition. Then as a result of $\widetilde{\text{OR}}(V)$ identification, $\text{OR}(V)$ is identified, therefore
the distribution of the missing values conditional on the shadow variable is recovered:
\begin{align*}
    \P(V\mid R_V=0,Z)=\P(V\mid R_V=1,Z)\frac{\text{OR}(V)}{\E[\text{OR}(V)\mid R_V=1,Z]}
\end{align*}
(equation 7 in Proposition 2). This, combined with the fact that $\P(Z,R_V)$ and $\P(V\mid R_V=1,Z)$ are observed data functionals, implies that the joint distribution $\P(V,Z,R_V)$ is identified (Theorem 1). 

Now we examine this theory with a view to connect it to our simple theory with the odds connection (\ref{eq:shadow}). One element that makes this theory complex is that it is built on the odds ratio function $\text{OR}(V)$, which contains a reference odds, $\text{odds}(R_V=0\mid V=v)$. Identification of $\P(V,Z,R_V)$ hangs on identification of $\widetilde{\text{OR}}(V)$ using an integral equation, however, so we examine this function first. Note that in the expression that defines $\widetilde{\text{OR}}(V)$, the reference odds appears in both the numerator and denominator, so it is canceled out, and $\widetilde{\text{OR}}(V)$ does not involve $v$. $\widetilde{\text{OR}}(V)$ can thus be thought of as a normalized odds rather than a normalized odds ratio:
\begin{align*}
    \widetilde{\text{OR}}(V)
    &=\text{odds}(R_V=0\mid V)/\E\left[\text{odds}(R_V=0\mid V)\mid R_V=1\right].
\end{align*}
Also, the denominator of this function can be simplified,
\begin{align*}
    \E\left[\frac{\P(R_V=0\mid V)}{\P(R_V=1\mid V)}\mid R_V=1\right]
    &=\E\left[\frac{\P(R_V=0,V)}{\P(R_V=1,V)}\mid R_V=1\right]
    \\
    &=\E\left[\frac{\P(R_V=0)\P(V\mid R_V=0)}{\P(R_V=1)\P(V\mid R_V=1)}\mid R_V=1\right]
    \\
    &=\frac{\P(R_V=0)}{\P(R_V=1)}\E\left[\frac{\P(V\mid R_V=0)}{\P(V\mid R_V=1)}\mid R_V=1\right]
    \\
    &=\frac{\P(R_V=0)}{\P(R_V=1)},
\end{align*}
which means $\widetilde{\text{OR}}(V)$ is the \textit{stabilized} version of the self-conditional missingness odds,
\begin{align*}
    \widetilde{\text{OR}}(V)
    &=\text{odds}(R_V=0\mid V)/\text{odds}(R_V=0).
\end{align*}
This means identification of $\widetilde{\text{OR}}(V)$ (in this theory) is equivalent to identification of the missingness odds (in our theory). Plugging this new expression of $\widetilde{\text{OR}}(V)$ into the earlier mentioned equality used as the integral equation for $\widetilde{\text{OR}}(V)$ identification obtains
\begin{align*}
    \E\left[\text{odds}(R_V=0\mid V)\mid R_V=1,Z\right]/\text{odds}(R_V=0)=\frac{\P(Z\mid R_V=0)}{\P(Z\mid R_V=1)}.
\end{align*}
Moving the denominator on the LHS to the RHS, then dividing both the numerator and denominator of the RHS by $\P(Z)$, obtains our key result
\begin{align*}
    \E\left[\text{odds}(R_V=0\mid V)\mid R_V=1,Z\right]=\text{odds}(R_V=0\mid Z).\tag{\ref{eq:shadow}}
\end{align*}
This means (\ref{eq:shadow}) is basically the underlying connection that helps identify $\widetilde{\text{OR}}(V)$, which is at the heart of \citeauthor{miao2024IdentificationSemiparametricEfficiency}'s theory. This is the connection we were looking for. The job is done!

To complete the story, though, we revisit the role of $\text{OR}(V)$, which we skipped earlier. Our theory does not use this function or any other function that involves a reference level $v$. \citeauthor{miao2024IdentificationSemiparametricEfficiency}, on the other hand, consider this function essential to identification. We note that this function appears in the identification result for $\P(V\mid R_V=0,Z)$ above in the specific form $\text{OR}(V)/\E[\text{OR}(V)\mid R_V=1,Z]$, so now we pay attention to this form rather than $\text{OR}(V)$ itself. This form, however, does not involve the reference level $v$:
\begin{align*}
    \frac{\text{OR}(V)}{\E[\text{OR}(V)\mid R_V=1,Z]}&=\frac{\text{odds}(R_V=0\mid V)}{\E\left[\text{odds}(R_V=0\mid V)\mid R_V=1,Z\right]}.
\end{align*}

Note that $\text{odds}(R_V=0\mid V)$ plays the central role in both this expression and the last expression of $\widetilde{\text{OR}}(V)$. This means that the theory can be simplified by making identification of $\text{odds}(R_V=0\mid V)$ the central task. This is exactly what we do in our simple theory.

\paragraph{Connections with \cite{dhaultfoeuille2010NewInstrumentalMethod}.}

Adding 1 to both sides of (\ref{eq:shadow}) and the multiplying both sides by $\P(R_V=1\mid Z)$ obtains
\begin{align*}
    1=\E\left[\frac{\P(R_V=1\mid Z)}{\P(R_V=1\mid V)}\mid Z,R_V=1\right].
\end{align*}
The RHS can be re-expressed as
\begin{align*}
    \text{RHS}
    &=\E\left[\frac{\P(R_V=1\mid Z)}{\P(R_V=1\mid V)}\frac{\P(V\mid Z,R_V=1)}{\P(V\mid Z)}\mid Z\right]
    \\
    &=\E\left[\frac{\P(R_V=1\mid Z)}{\P(R_V=1\mid V)}\frac{\P(R_V=1\mid V,Z)}{\P(R_V=1\mid Z)}\mid Z\right]
    \\
    &=\E\left[\frac{\P(R_V=1\mid V,Z)}{\P(R_V=1\mid V)}\mid Z\right]
    \\
    &=\E\left\{\frac{\E[R_V\mid V,Z]}{\P(R_V=1\mid V)}\mid Z\right\}
    \\
    &=\E\left\{\E\left[\frac{R_V}{\P(R_V=1\mid V)}\mid V,Z\right]\mid Z\right\}
    \\
    &=\E\left[\frac{R_V}{\P(R_V=1\mid V)}\mid Z\right],
\end{align*}
which is equation (2.4) in \cite{dhaultfoeuille2010NewInstrumentalMethod}.

\subsubsection{Missingness in the shadow variable}
\label{asssec:shadow-missing}
 
\paragraph{The \ref{mSV-2i} case: proof of (\ref{eq:shadow-blue}).}

Recall the \ref{mSV-2i} assumption: $R_Z\independent(V,Z)\mid R_V$.

We start with the original result,
\begin{align}
    \text{odds}(R_V=0\mid Z)=\E[\text{odds}(R_V=0\mid V)\mid Z,R_V=1].\tag{\ref{eq:shadow}}
\end{align}
Using the odds-tilting technique (\ref{eq:odds-tilt1}) and then applying \ref{mSV-2i}, we have
\begin{align*}
    \frac{\text{odds}(R_V=0\mid Z)}{\text{odds}(R_V=0\mid Z,R_Z=1)}
    \overset{\text{(\ref{eq:odds-tilt1})}}{=}\frac{\P(R_Z=1\mid Z,R_V=1)}{\P(R_Z=1\mid Z,R_V=0)}
    \overset{\text{\ref{mSV-2i}}}{=}\frac{\P(R_Z=1\mid R_V=1)}{\P(R_Z=1\mid R_V=0)},
\end{align*}
which identifies the LHS of (\ref{eq:shadow}).

The RHS of (\ref{eq:shadow}) is an expectation over $\P(V\mid Z,R_V=1)$. \ref{mSV-2i} implies $R_Z\independent W\mid Z,R_V=1$, so $\P(V\mid Z,R_V=1)=\P(V\mid Z,R_{ZV}=1)$. Combining results for the LHS and RHS of (\ref{eq:shadow}), we have
\begin{align}
    \text{odds}(R_V=0\mid Z,R_Z=1)\frac{\P(R_Z=1\mid R_V=1)}{\P(R_Z=1\mid R_V=0)}=\E[\text{odds}(R_V=0\mid V)\mid Z,R_{ZV}=1].\tag{\ref{eq:shadow-blue}}
\end{align}

\paragraph{The \ref{mSV-2ii} case: proof of (\ref{eq:shadow-red}) and (\ref{eq:shadow-red-dagger}).}

Recall the \ref{mSV-2ii} assumption: $R_Z\independent R_V\mid V,Z$.

\begin{align*}
    \P(R_V\mid V)\overset{Z\independent R_V\mid V}{=}\P(R_V\mid V,Z)\overset{\text{\ref{mSV-2ii}}}{=}\P(R_V\mid V,Z,R_Z).
\end{align*}
This means 
\begin{align}
    (Z,R_Z)\independent R_V\mid V.\label{eq:fish}
\end{align}
First, this implies $Z\independent R_V\mid V,R_Z=1$, i.e., $Z$ is a SV w.r.t. the missingness in $V$ in the $R_Z=1$ population. Applying (\ref{eq:shadow}) in this context, we have
\begin{align*}
    \text{odds}(R_V=0\mid Z,R_Z=1)=\E[\text{odds}(R_V=0\mid V,R_Z=1)\mid Z,R_Z=1,R_V=1].
\end{align*}
The odds inside the expectation can be replaced with $\text{odds}(R_V=0\mid V,R_Z=1)$ due to (\ref{eq:fish}), so this equation becomes
\begin{align*}
    \text{odds}(R_V=0\mid Z,R_Z=1)=\E[\text{odds}(R_V=0\mid V)\mid Z,R_{ZV}=1].\tag{\ref{eq:shadow-red}}
\end{align*}

Second, (\ref{eq:fish}) implies $Z^\dagger\independent R_V\mid V$, i.e., $Z^\dagger$ is a SV w.r.t. the missingness in $V$. Applying (\ref{eq:shadow}) in this context, we have
\begin{align*}
    \text{odds}(R_V=0\mid Z^\dagger)=\E[\text{odds}(R_V=0\mid V)\mid Z^\dagger,R_V=1].\tag{\ref{eq:shadow-red-dagger}}
\end{align*}

\subsection{Specific theory: external shadow for a pair of variables}
\label{assec:shadow-theory2}

Here we lay out mSV-type assumptions (SV missingness assumptions) that pair with SV model assumptions A1-A4 and C1-C4, and the corresponding identification results.

\subsubsection{mSV-type assumptions to pair with SV model assumptions A1-A4}

\paragraph{Assumption mSV(A1)}\hfill

\begin{align*}
    \P(&R_Z=1,R_{MY}=1\mid M,Y,Z,(X,A))>0,
    \\
    \text{and either}~&R_Z\independent(M,Y,Z)\mid R_{MY},(X,A)
    \\
    \text{or}~&R_Z\independent R_{MY}\mid M,Y,Z,(X,A).
\end{align*}

\paragraph{Assumption mSV(A2)}\hfill

{\small
\begin{minipage}{.48\textwidth}
    (M-part, conditional)
    \begin{align*}
    \P(&R_Z=1\mid M,Z,R_M=1,(X,A,Y,R_Y=1))>0,
    \\
    \text{either}~&R_Z\independent(M,Z)\mid R_M,(X,A,Y,R_Y=1)
    \\
    \text{or}~&R_Z\independent R_M\mid M,Z,(X,A,Y,R_Y=1).
    \end{align*}
\end{minipage}
\hspace{.02\textwidth}
\begin{minipage}{.48\textwidth}
    (Y-part, conditional)
    \begin{align*}
    \P(&R_Z=1\mid Y,Z,R_Y=1,(X,A,M,R_M=1))>0,
    \\
    \text{either}~&R_Z\independent(Y,Z)\mid R_Y,(X,A,M,R_M=1)
    \\
    \text{or}~&R_Z\independent R_Y\mid Y,Z,(X,A,M,R_M=1).
    \end{align*}
\end{minipage}
}

\paragraph{Assumption mSV(A3)}\hfill

{\small
\begin{minipage}{.48\textwidth}
    (M-part, unconditional)
    \begin{align*}
    \P(&R_Z=1\mid M,Z,R_M=1,(X,A))>0,
    \\
    \text{either}~&R_Z\independent(M,Z)\mid R_M,(X,A)
    \\
    \text{or}~&R_Z\independent R_M\mid M,Z,(X,A).
    \end{align*}
\end{minipage}
\hspace{.02\textwidth}
\begin{minipage}{.48\textwidth}
    (Y-part, conditional)
    \begin{align*}
    \P(&R_Z=1\mid Y,Z,R_Y=1,(X,A,M,R_M=1))>0,
    \\
    \text{either}~&R_Z\independent(Y,Z)\mid R_Y,(X,A,M,R_M=1)
    \\
    \text{or}~&R_Z\independent R_Y\mid Y,Z,(X,A,M,R_M=1).
    \end{align*}
\end{minipage}
}

\paragraph{Assumption mSV(A4)}\hfill

{\small
\begin{minipage}{.48\textwidth}
    (M-part, conditional)
    \begin{align*}
    \P(&R_Z=1\mid M,Z,R_M=1,(X,A,Y,R_Y=1))>0,
    \\
    \text{either}~&R_Z\independent(M,Z)\mid R_M,(X,A,Y,R_Y=1)
    \\
    \text{or}~&R_Z\independent R_M\mid M,Z,(X,A,Y,R_Y=1).
    \end{align*}
\end{minipage}
\hspace{.02\textwidth}
\begin{minipage}{.48\textwidth}
    (Y-part, unconditional)
    \begin{align*}
    \P(&R_Z=1\mid Y,Z,R_Y=1,(X,A))>0,
    \\
    \text{either}~&R_Z\independent(Y,Z)\mid R_Y,(X,A)
    \\
    \text{or}~&R_Z\independent R_Y\mid Y,Z,(X,A).
    \end{align*}
\end{minipage}
}

We have stated the four assumptions separately above and labeled them to indicate the SV model assumptions clearly. It is important to note that mSV(A1) implies mSV(A2), which explains why Figure~\ref{fig:shadow-missing} in the paper combines A1 and A2 in the DAGs in the top row.

\subsubsection{Modification of identification results under assumptions mSV(A1) to mSV(A4) -- supplement to Table~\ref{tab:shadow-auxiliary}}

Each of these assumptions has two versions, one where $R_Z$ is separated from $Z$ and the variables with missingness (indicated by the first element in the either-or component of the assumption), and one where $R_Z$ is separated from the missingness indicator of such variables (the second element in the either-or component). Below we will refer to these two versions as (i) and (ii).

\paragraph{Under version (i) of the assumptions}\hfill

The identification results in Table~\ref{tab:shadow-auxiliary} keep the same form, but with redefined $q$ functions. Specifically, the functions $q_{12d},q_{1e},q_{2e},q_{1f},q_{2f}$ functions in the table are to be replaced, respectively, with the solutions to
\begin{align}
    1+\text{odds}(R_{MY}=0\mid X,A,Z,R_Z=1)&\frac{\P(R_Z=1\mid X,A,R_{MY}=1)}{\P(R_Z=1\mid X,A,R_{MY}=0)}=\nonumber
    \\
    &~~~~\E[{\color{purple}q_{12d}(X,A,M,Y)}\mid X,A,Z,R_{MY}=1,R_Z=1],
    \\
    1+\text{odds}(R_M=0\mid X,A,Y,Z,R_Y=1,&R_Z=1)\frac{\P(R_Z=1\mid X,A,Y,R_Y=1,R_M=1)}{\P(R_Z=1\mid X,A,Y,R_Y=1,R_M=0)}=\nonumber
    \\
    &~~\E[{\color{purple}q_{1e}(X,A,M,Y)}\mid X,A,Y,Z,R_{MY}=1,R_Z=1],
    \\
    1+\text{odds}(R_Y=0\mid X,A,M,Z,R_M=1,&R_Z=1)\frac{\P(R_Z=1\mid X,A,M,R_M=1,R_Y=1)}{\P(R_Z=1\mid X,A,M,R_M=1,R_Y=0)}=\nonumber
    \\
    &~~\E[{\color{purple}q_{2e}(X,A,M,Y)}\mid X,A,M,Z,R_{MY}=1,R_Z=1],
    \\
    1+\text{odds}(R_M=0\mid X,A,Z,R_Z=1)&\frac{\P(R_Z=1\mid X,A,R_M=1)}{\P(R_Z=1\mid X,A,R_M=0)}=\nonumber
    \\
    &~~~~~\E[{\color{purple}q_{1f}(X,A,M)}\mid X,A,Z,R_M=1,R_Z=1],
    \\
    1+\text{odds}(R_Y=0\mid X,A,Z,R_Z=1)&\frac{\P(R_Z=1\mid X,A,R_Y=1)}{\P(R_Z=1\mid X,A,R_Y=0)}=\nonumber
    \\
    &~~~~~\E[{\color{purple}q_{2f}(X,A,Y)}\mid X,A,Z,R_Y=1,R_Z=1].
\end{align}

These equations are applications of (\ref{eq:shadow-blue}).

\paragraph{Under version (ii) of the assumptions}\hfill

The functions $q_{12d},q_{1e},q_{2e},q_{1f},q_{2f}$ functions in Table~\ref{tab:shadow-auxiliary} are to be replaced, respectively, with the solutions to
\begin{align}
    [\P(R_{MY}=1\mid X,A,Z,R_Z=1)]^{-1}&=\E[{\color{purple}q_{12d}(X,A,M,Y)}\mid X,A,Z,R_{MY}=1,R_Z=1],
    \\
    [\P(R_M=1\mid X,A,Y,Z,R_Y=1,R_Z=1)]^{-1}&=\E[{\color{purple}q_{1e}(X,A,M,Y)}\mid X,A,Y,Z,R_{MY}=1,R_Z=1],
    \\
    [\P(R_Y=1\mid X,A,M,Z,R_M=1,R_Z=1)]^{-1}&=\E[{\color{purple}q_{2e}(X,A,M,Y)}\mid X,A,M,Z,R_{MY}=1,R_Z=1],
    \\
    [\P(R_M=1\mid X,A,Z,R_Z=1)]^{-1}&=\E[{\color{purple}q_{1f}(X,A,M)}\mid X,A,Z,R_M=1,R_Z=1],
    \\
    [\P(R_Y=1\mid X,A,Z,R_Z=1)]^{-1}&=\E[{\color{purple}q_{2f}(X,A,Y)}\mid X,A,Z,R_Y=1,R_Z=1].
\end{align}

These equations are applications of (\ref{eq:shadow-red}).




\subsubsection{mSV-type assumptions to pair with SV model assumptions C1-C4}

\paragraph{Assumption mSV(C1)}\hfill

\begin{align*}
    \P(&R_Z=1,R_{MY}=1\mid M,Y,Z,W)>0,
    \\
    \text{and either}~&R_Z\independent(M,Y,Z)\mid R_{MY},W
    \\
    \text{or}~&R_Z\independent R_{MY}\mid M,Y,Z,W.
\end{align*}

\paragraph{Assumption mSV(C2)}\hfill

{\small
\begin{minipage}{.48\textwidth}
    (M-part, conditional)
    \begin{align*}
    \P(&R_Z=1\mid M,Z,R_M=1,(W,Y,R_Y=1))>0,
    \\
    \text{either}~&R_Z\independent(M,Z)\mid R_M,(W,Y,R_Y=1)
    \\
    \text{or}~&R_Z\independent R_M\mid M,Z,(W,Y,R_Y=1).
    \end{align*}
\end{minipage}
\hspace{.02\textwidth}
\begin{minipage}{.48\textwidth}
    (Y-part, conditional)
    \begin{align*}
    \P(&R_Z=1\mid Y,Z,R_Y=1,(W,M,R_M=1))>0,
    \\
    \text{either}~&R_Z\independent(Y,Z)\mid R_Y,(W,M,R_M=1)
    \\
    \text{or}~&R_Z\independent R_Y\mid Y,Z,(W,M,R_M=1).
    \end{align*}
\end{minipage}
}

\paragraph{Assumption mSV(C3)}\hfill

{\small
\begin{minipage}{.48\textwidth}
    (M-part, unconditional)
    \begin{align*}
    \P(&R_Z=1\mid M,Z,R_M=1,W)>0,
    \\
    \text{either}~&R_Z\independent(M,Z)\mid R_M,W
    \\
    \text{or}~&R_Z\independent R_M\mid M,Z,W.
    \end{align*}
\end{minipage}
\hspace{.02\textwidth}
\begin{minipage}{.48\textwidth}
    (Y-part, conditional)
    \begin{align*}
    \P(&R_Z=1\mid Y,Z,R_Y=1,(W,M,R_M=1))>0,
    \\
    \text{either}~&R_Z\independent(Y,Z)\mid R_Y,(W,M,R_M=1)
    \\
    \text{or}~&R_Z\independent R_Y\mid Y,Z,(W,M,R_M=1).
    \end{align*}
\end{minipage}
}

\paragraph{Assumption mSV(C4)}\hfill

{\small
\begin{minipage}{.48\textwidth}
    (M-part, conditional)
    \begin{align*}
    \P(&R_Z=1\mid M,Z,R_M=1,(W,Y,R_Y=1))>0,
    \\
    \text{either}~&R_Z\independent(M,Z)\mid R_M,(W,Y,R_Y=1)
    \\
    \text{or}~&R_Z\independent R_M\mid M,Z,(W,Y,R_Y=1).
    \end{align*}
\end{minipage}
\hspace{.02\textwidth}
\begin{minipage}{.48\textwidth}
    (Y-part, unconditional)
    \begin{align*}
    \P(&R_Z=1\mid Y,Z,R_Y=1,W)>0,
    \\
    \text{either}~&R_Z\independent(Y,Z)\mid R_Y,W
    \\
    \text{or}~&R_Z\independent R_Y\mid Y,Z,W.
    \end{align*}
\end{minipage}
}

Here also we have the connection mSV(C1) implies mSV(C2).

\subsubsection{Modification of identification results under assumptions mSV(C1) to mSV(C4) -- supplement to Table~\ref{tab:shadow-covariate}}

Each of these assumptions has two versions, one where $R_Z$ is separated from $Z$ and the variables with missingness (indicated by the first element in the either-or component of the assumption), and one where $R_Z$ is separated from the missingness indicator of such variables (the second element in the either-or component). Below we will refer to these two versions as (i) and (ii).

\paragraph{Under version (i) the assumptions}\hfill

The identification results in Table~\ref{tab:shadow-covariate} keep the same form, but with redefined $q^*$ functions. Specifically, the functions $q_{12d}^*,q_{1e}^*,q_{2e}^*,q_{1f}^*,q_{2f}^*$ functions in the table are to be replaced, respectively, with the solutions to
\begin{align}
    1+\text{odds}(R_{MY}=0\mid W,Z,R_Z=1)&\frac{\P(R_Z=1\mid W,R_{MY}=1)}{\P(R_Z=1\mid W,R_{MY}=0)}=\nonumber
    \\
    &~~~~\E[{\color{purple}q_{12d}^*(W,M,Y)}\mid W,Z,R_{MY}=1,R_Z=1],
    \\
    1+\text{odds}(R_M=0\mid W,Y,Z,R_Y=1,&R_Z=1)\frac{\P(R_Z=1\mid W,Y,R_Y=1,R_M=1)}{\P(R_Z=1\mid W,Y,R_Y=1,R_M=0)}=\nonumber
    \\
    &~~\E[{\color{purple}q_{1e}^*(W,M,Y)}\mid W,Y,Z,R_{MY}=1,R_Z=1],
    \\
    1+\text{odds}(R_Y=0\mid W,M,Z,R_M=1,&R_Z=1)\frac{\P(R_Z=1\mid W,M,R_M=1,R_Y=1)}{\P(R_Z=1\mid W,M,R_M=1,R_Y=0)}=\nonumber
    \\
    &~~\E[{\color{purple}q_{2e}^*(W,M,Y)}\mid W,M,Z,R_{MY}=1,R_Z=1],
    \\
    1+\text{odds}(R_M=0\mid W,Z,R_Z=1)&\frac{\P(R_Z=1\mid W,R_M=1)}{\P(R_Z=1\mid W,R_M=0)}=\nonumber
    \\
    &~~~~~\E[{\color{purple}q_{1f}^*(W,M)}\mid W,Z,R_M=1,R_Z=1],
    \\
    1+\text{odds}(R_Y=0\mid W,Z,R_Z=1)&\frac{\P(R_Z=1\mid W,R_Y=1)}{\P(R_Z=1\mid X,A,R_Y=0)}=\nonumber
    \\
    &~~~~~\E[{\color{purple}q_{2f}^*(W,Y)}\mid W,Z,R_Y=1,R_Z=1].
\end{align}

These equations are applications of (\ref{eq:shadow-blue}).

\paragraph{Under version (ii) of the assumptions}\hfill

The functions $q_{12d}^*,q_{1e}^*,q_{2e}^*,q_{1f}^*,q_{2f}^*$ functions in Table~\ref{tab:shadow-covariate} are to be replaced, respectively, with the solutions to
\begin{align}
    [\P(R_{MY}=1\mid W,Z,R_Z=1)]^{-1}&=\E[{\color{purple}q_{12d}^*(W,M,Y)}\mid W,Z,R_{MY}=1,R_Z=1],
    \\
    [\P(R_M=1\mid W,Y,Z,R_Y=1,R_Z=1)]^{-1}&=\E[{\color{purple}q_{1e}^*(W,M,Y)}\mid W,Y,Z,R_{MY}=1,R_Z=1],
    \\
    [\P(R_Y=1\mid W,M,Z,R_M=1,R_Z=1)]^{-1}&=\E[{\color{purple}q_{2e}^*(W,M,Y)}\mid W,M,Z,R_{MY}=1,R_Z=1],
    \\
    [\P(R_M=1\mid W,Z,R_Z=1)]^{-1}&=\E[{\color{purple}q_{1f}^*(W,M)}\mid W,Z,R_M=1,R_Z=1],
    \\
    [\P(R_Y=1\mid W,Z,R_Z=1)]^{-1}&=\E[{\color{purple}q_{2f}^*(W,Y)}\mid W,Z,R_Y=1,R_Z=1].
\end{align}

These equations are applications of (\ref{eq:shadow-red}).

\end{document}